\title{Exponential Lower Bounds for Fictitious Play in Potential Games}
\author{%
  Ioannis Panageas \\ 
  % Department of Computer Science\\
  University of California, Irvine\\
  % Pittsburgh, PA 15213 \\
  % \texttt{hippo@cs.cranberry-lemon.edu} \\
  % examples of more authors
  \and
  Nikolas Patris\footnote{Corresponding author, npatris@uci.edu.} \\
  University of California, Irvine \\
  % Address \\
  % \texttt{email} \\
  \and
  Stratis Skoulakis \\
  EPFL \\
  % Address \\
  % \texttt{email} \\
  \and
  Volkan Cevher \\
  EPFL \\
  % Address \\
  % \texttt{email} \\
  % \And
  % Coauthor \\
  % Affiliation \\
  % Address \\
  % \texttt{email} \\
}
\begin{document}

\maketitle

\begin{abstract}
Fictitious Play (FP) is a simple and natural dynamic for repeated play with many applications in game theory and multi-agent reinforcement learning. It was introduced by Brown \cite{brown1949some, brown1951iterative} and its convergence properties for two-player zero-sum games was established later by Robinson \cite{robinson}. Potential games \cite{monderer1996potential} is another class of games which exhibit the FP property \cite{monderer1996fictitious}, i.e., FP dynamics converges to a Nash equilibrium if all agents follows it. Nevertheless, except for two-player zero-sum games and for specific instances of payoff matrices \cite{abernethy} or for adversarial tie-breaking rules \cite{daspan}, the \textit{convergence rate} of FP is unknown. In this work, we focus on the rate of convergence of FP when applied to potential games and more specifically identical payoff games. We prove that FP can take exponential time (in the number of strategies) to reach a Nash equilibrium, even if the game is restricted to two agents and for arbitrary tie-breaking rules. To prove this, we recursively construct a two-player coordination game with a unique Nash equilibrium. Moreover, every approximate Nash equilibrium in the constructed game must be close to the pure Nash equilibrium in $\ell_1$-distance.
\end{abstract}

%%%%%%%%%%%%%%%%%%%%%%%%%%%%%%%%%%%%%%%%%%%%%%%%%%%%%%%%%%%%

\section{Introduction}
In 1949 Brown proposed an uncoupled dynamics called \textit{fictitious play} so as to capture the behavior of selfish agents engaged in a repeatedly-played game. Fictitious play assumes at round $t=1$, each agent selects an arbitrary action. At each round $t \geq 2$, each player plays a best response pure action to the opponents' \textit{empirical strategy}; empirical strategy is defined to be the empirical average of the past chosen strategies.  

Due to its simplicity and natural behavioral assumptions, fictitious play is one of the most seminal and well-studied game dynamics \cite{lugosi}. Despite the fact that fictitious play does not converge to a Nash equilibrium (NE) in general normal-form games, there are several important classes of games at which the \textit{empirical strategies} always converge to a NE. In her seminal work, Robinson \cite{robinson} showed that in the case of \textit{two-player zero-sum} games, the empirical strategy profiles converge to a min-max equilibrium of the game; Robinson's proof use a smart inductive argument on the number of strategies of the game. Later, Monderer and Shapley~\cite{monderer1996fictitious} established that in the case of $N$-player potential games, the empirical strategies also converge to a NE. Summing up, the following theorem is true: 

\paragraph{Informal Theorem} \cite{robinson},\cite{monderer1996fictitious} \textit{In two-player zero-sum and $N$-player potential games, the empirical strategy profiles of fictitious play converge to a NE for any initialization and tie-breaking rule\footnote{Tie-breaking rule when selecting between two or more different best-response actions.}.} 

The above convergence results are asymptotic in the sense that they do not provide guarantees on the number of rounds needed by fictitious play to reach an approximate NE. Karlin \cite{karlin} conjectured that in the case of two-player zero-sum games, fictitious play requires $O(1/\epsilon^2)$ rounds to reach an $\epsilon$-approximate NE. Daskalakis et. al. \cite{daspan} disproved the strong version of Karlin's conjecture by providing an adversarial tie-breaking rule for which fictitious play requires exponential number of rounds (with respect to the number of strategies) in order to converge to an $\epsilon$-approximate NE. However there are no such lower bounds results in the case of potential games. In this paper we investigate the following question.

\begin{center}
\textit{Q: Does fictitious play in potential games admit convergence  to an approximate NE with rates that depend polynomially on the number of actions and the desired accuracy?}
\end{center} 

\paragraph{Our contributions} Our work provides a negative answer on the above question. Specifically, we present a \textit{two-player potential} game for which fictitious play requires super-exponential time with respect to the number of actions to reach an approximate NE. %We experimentally validate our results on payoff matrices generated based on our recursive construction rule,

\begin{theorem} [Main result, formally stated Theorem \ref{t:main}]\label{thm:maininformal} There exists a two-player potential game (more specifically both agents have identical payoffs) in which both agents admit $n$ actions and for which fictitious play requires $\Omega\left(4^n \left(\left(\frac{n}{2}-2\right)!\right)^4+ \frac{1}{n\sqrt{\epsilon}} \right)$ rounds in order to reach an $\epsilon$-approximate NE. Moreover, the result holds for any tie-breaking rule and uniformly random initialization.
\end{theorem}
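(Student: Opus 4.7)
The plan is to exhibit a single recursively-defined identical-payoff matrix $A_n \in \mathbb{R}^{n \times n}$, prove that it has a unique Nash equilibrium (which is pure) and that every $\epsilon$-approximate NE is $\ell_1$-close to it, and then track the fictitious-play trajectory through the nested recursion to show that it spends a factorial amount of time at each of $\Theta(n)$ levels before reaching that neighborhood.

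First I would set up the recursion: define $A_n$ by embedding $A_{n-2}$ as the inner $(n-2) \times (n-2)$ block and surrounding it with two new "outer" rows and two new "outer" columns whose payoffs are calibrated so that (i) the unique pure NE sits at a fixed inner corner $(i^\star, j^\star)$, (ii) no ties arise between the outer actions and the inner actions along any trajectory that matters, and (iii) after a small number of rounds in which an outer action is the unique best response, the best response must flip to a different outer action, forcing the empirical distribution on the outer actions to approach the profile that makes the inner subgame "active". This last property is where the recursive amplification comes from: the outer layer acts as a slow clock that only lets the inner game run for short bursts. The base case is a hand-built small coordination game that already requires constant time.

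Second, I would prove the approximate-equilibrium property by a direct argument on the payoff gaps: the construction should guarantee that for any mixed profile $(x,y)$ that is not $\ell_1$-close to $(e_{i^\star}, e_{j^\star})$, at least one player has a best-response gap bounded below by an explicit constant, so any $\epsilon$-NE with $\epsilon$ small forces $\|x - e_{i^\star}\|_1 + \|y - e_{j^\star}\|_1$ to be small. Uniqueness of the pure NE, and the absence of any mixed NE, would fall out of the same gap analysis applied at level $n$ and inductively at every level.

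Third, and this is the main technical obstacle, I would establish an inductive lower bound $T_n \geq 4(n/2-2)^4 \cdot T_{n-2}$ on the time for FP to reach an approximate NE. The idea is that before the empirical strategy on the outer two actions reaches the mixture that "activates" the inner subgame, FP must alternate best responses among the outer actions a factorial number of times (because each switch only shifts the empirical mass by $1/t$), and each visit to the inner subgame region costs an additional $T_{n-2}$ rounds by the induction hypothesis; summing contributions over the $\Theta(n^4)$ possible outer configurations yields the stated factor. Unrolling the recursion gives the $4^n ((n/2-2)!)^4$ term, while the residual $1/(n\sqrt{\epsilon})$ term comes from the final approach phase, since even once FP has locked onto the pure NE, the $1/t$ decay of an individual empirical strategy forces $t = \Omega(1/(n\sqrt{\epsilon}))$ additional rounds to drive the approximation error below $\epsilon$.

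The hardest step will be Step 3, and specifically making the argument robust to \emph{arbitrary} tie-breaking and to random initialization. To handle tie-breaking, I would design $A_n$ with strictly generic payoffs so that ties essentially never occur on the relevant trajectory, reducing the "any tie-breaking rule" claim to "no ties arise". For uniformly random initialization, the symmetry of the construction should let me argue that with probability one (or at least non-trivial probability) the initial actions land in one of finitely many equivalent regions, each of which feeds into the same recursive lower bound after a negligible number of burn-in rounds. The bookkeeping that translates "the empirical distribution on the outer actions is still far from the activation mixture" into a quantitative lower bound on FP's round count is where the core counting and the factorial factor must be extracted carefully.
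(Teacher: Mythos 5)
Your overall skeleton matches the paper: a recursive matrix with an $(n-2)\times(n-2)$ inner block surrounded by two new rows and columns, a unique pure NE, a gap argument showing every $\epsilon$-approximate NE concentrates near it (the paper's Lemma~\ref{t:2}), a $1/(n\sqrt{\epsilon})$ tail from the $1/t$ decay of the empirical average, and a $1/n^2$ probability argument for random initialization. Those parts are sound and essentially the paper's plan.

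The genuine gap is in your Step 3, which is the heart of the proof, and the mechanism you describe there is not the one that makes this construction work. You picture the outer layer as a ``slow clock'' that repeatedly activates the inner subgame for short bursts, and you obtain the recursion by multiplying the number of visits to the inner region ($\Theta(n^4)$ ``outer configurations'') by the cost $T_{n-2}$ of each visit. In the paper's construction fictitious play enters the inner block exactly once and never returns: the nonzero entries are arranged in a spiral with values increasing inward, and \cref{prop:larger_smaller} forces FP to traverse them in strictly increasing order of value, making exactly four strategy switches per layer. The factorial does not come from counting revisits; it comes from cumulative-utility catch-up. While FP dwells on entry $(i,j)$ for $N$ rounds, the cumulative utility of the row (or column) containing the next spiral entry grows by $(4\ell+k)\cdot N$, and since consecutive entries differ in value by exactly $1$, the dwell time at the next entry must be at least that amount before the following switch can occur. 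Multiplying the four per-layer factors $4\ell(4\ell-1)(4\ell-2)(4\ell-3)$ over $\ell=1,\dots,n/2-1$ gives \cref{eq:recursive_theorem} and hence $\Omega(4^n((n/2-2)!)^4)$. Your justification (``each switch only shifts the empirical mass by $1/t$'' plus a count of outer configurations) does not produce this amplification, and a repeated-visit dynamic is hard to realize in an identical-payoff game in the first place. Relatedly, your plan to make payoffs ``strictly generic'' so that ties never occur works against the counting: the paper deliberately uses integer entries differing by exactly $1$ so that the catch-up time is pinned down, and handles arbitrary tie-breaking by noting that the switch cannot happen \emph{before} the utilities equalize, which already yields the lower bound.
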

\begin{remark}
Daskalakis et al. \cite{daspan} provide an exponential lower bound on the convergence of fictitious play assuming an adversarial tie-breaking rule, meaning that ties are broken in favor of slowing down the convergence rate. To this point, it is not known whether fictitious play with a consistent tie-breaking rule (e.g. lexicographic) converges in polynomial time or not, except for special cases, e.g., diagonal payoff matrices \cite{abernethy}. We would like to note that our lower bound construction holds for any tie-breaking rule. The latter indicates an interesting discrepancy between two-player zero-sum and two-player potential games.   
\end{remark}
\paragraph{Related Work} The work of Daskalakis et al. \cite{daspan} provides a lower bound on the convergence rate of fictitious play for the case of two-player zero-sum games using an adversarial tie-breaking rule and is the most close to ours. Another work we must highlight is \cite{abernethy} in which the authors show that if the tie-breaking rule is fixed in advance (e.g., lexicographic), then fictitious play converge rate is polynomial in the number of actions/strategies and is $O\left(\frac{1}{\epsilon^2}\right).$ Other works include convergence of fictitious play for near-potential games \cite{candogan}, sufficient conditions games must satisfy so that fictitious play converges to a NE, using decomposition techniques \cite{yaodong}; the aforementioned results do not include rates for fictitious play. Fast convergence rates of \textit{continuous time} fictitious play for \textit{regular} potential games  are established \cite{swensonK17} (these are games in which the NE are regular according to the definition of Harsanyi). Other works on continuous time fictitious play include \cite{ostrovski} (and references there in). 
Fictitious play dynamics has found various application in Multi-agent Reinforcement Learning as well (see \cite{muller, SayinPO22, BaudinL22} and \cite{yang} for a survey and references therein) and extensive form games (using deep RL) \cite{HeinrichLS15} to name a few\footnote{Due to the vastness of the literature in fictitious play, it is not possible to include all the works that have either been used or been inspired by this method.}.

\paragraph{Technical overview} 
The technical overview of this paper provides a high-level roadmap of the key contributions. In \cref{section:main_result}, we outline the steps towards proving \cref{t:main} by recursively constructing a payoff matrix of carefully crafted structural properties. In that matrix, starting from the lower-left element, the sequence of successive increments form a spiraling trajectory that converge towards the element of maximum value. We demonstrate that fictitious play has to follow the same trajectory, passing through all non-zero elements, to reach the unique pure Nash equilibrium, as illustrated in \cref{fig:spiral_trajectory}. A crucial component of the proof is provided by an induction argument that \emph{emulates} the movement of fictitious play and provides a super-exponential lower bound on the number of rounds needed. 
\section{Preliminaries} \label{section:prelim}

\subsection{Notation and Definitions}

\paragraph{Notation} Let $\mathbb{R}$ be the set of real numbers, and $[n] = \{ 1, 2, \ldots, n \}$ be the set of actions. We define $\Delta_n$ as the probability simplex, which is the set of $n$-dimensional probability vectors, i.e., $\Delta_n \coloneqq \{ x \in \mathbb{R}^n : x_i \geq 0, \sum_{i=1}^{n} x_i = 1 \}$. We use $e_i$ to denote the $i$-th elementary vector, and to refer to a coordinate of a vector, we use either $x_i$ or $[x]_i$. The superscripts are used to indicate the time at which a vector is referring to.
\paragraph{Normal-form Games}
In a \textit{two-player normal-form} game we are given a pair of payoff matrices $A \in \mathbb{R}^{n \times m}$ and $B \in \mathbb{R}^{n \times m}$ where $n$ and $m$ are the respective pure strategies of the \textit{row} and the \textit{column} player. If the row player selects strategy $i \in [n]$, and the column player selects strategy $j \in [m]$, then their respective payoffs are $A_{ij}$ for the row player and $B_{ij}$ for the column player.

The agents can use randomization. A \textit{mixed strategy} for the row player is a probability distribution $x \in \Delta_n$ over the $n$ rows, and a \textit{mixed strategy} for the column player is a probability distribution $y \in \Delta_m$ over the $m$ columns. 
After the row player selects mixed strategy $x \in \Delta_n$, and the column player selects mixed strategy $y \in \Delta_m$, their expected payoffs are $x^\top A y$ and $x^\top B y$, respectively.

\paragraph{Potential Games} 
A two-player potential game is a class of games that admit a unique function $\Phi$, referred to as a potential function, which captures the incentive of all players to modify their strategies. In other words, if a player deviates from their strategy, then the difference in payoffs is determined by a potential function $\Phi$ evaluated at those two strategy profiles. We can express this formally as follows:

\begin{definition}[Two-player Potential Game] \label{def:potential_game}
For any given pair of strategies $(x,y)$ and a pair of unilateral deviations $x'$ by the row player and $y'$ by the column player, the difference in their utility is equivalent to the difference in the potential function.
\begin{equation*}
(x')^{\top} A y - x^{\top} A y = \Phi(x',y) - \Phi(x,y)
\quad
\text{and}
\quad
x^{\top} A (y') - x^{\top} A y = \Phi(x,y') - \Phi(x,y)
\end{equation*}

\end{definition}

\begin{remark}
We note $\Phi$ is a function that characterizes the equilibria of the game as the strategy profiles that maximize the potential function.
\end{remark}

In this work, we focus on a specific type of potential games called \emph{identical payoff games}, where both players receive the same payoff. 

\begin{definition}[Identical Payoff Games]
A two-player normal-form game $(A,B)$ is called \textit{identical payoff} if and only if $A = B$. 
\end{definition}
In this scenario, it is apparent that the potential function is given by $\Phi(x,y) = x^\top A y$. Finally we provide the definition of an approximate NE.

\begin{definition}[$\epsilon$-Nash Equilibrium]
A strategy profile $(x^\star,y^\star) \in \Delta_n \times \Delta_m$ is called an $\epsilon$-approximate NE if and only if
\[
(x^\star)^\top A y^\star \geq x^\top A y^\star -\epsilon~~\forall x\in \Delta_n
\quad
\text{and}
\quad
(x^\star)^\top B y^\star \geq (x^\star)^\top B y -\epsilon~~\forall y\in \Delta_m
\]
\end{definition}

In words, an approximate Nash equilibrium is a strategy profile in which no player can improve their payoff significantly by unilaterally changing their strategy, but the strategy profile may not necessarily satisfy the precise definition of a Nash equilibrium.

\begin{remark}
We highlight two special cases of the $\epsilon$-NE. Firstly, when $\epsilon$ is equal to zero, it is referred to as an \emph{exact Nash equilibrium}. Secondly, when the support of strategies is of size 1, it is called a \emph{pure Nash equilibrium}. It is worth noting that potential games always admit a pure NE.
\end{remark}

\subsection{Fictitious Play}
\textit{Fictitious play} is a natural uncoupled game dynamics at which 
each agent chooses a \textit{best response} to their
opponent’s empirical mixed strategy. Since there might be several best response actions at a given round, fictitious play might contain different sequences of play; see Definition~\ref{def:fict}.

\begin{definition}[Fictitious Play] \label{def:fict}
An infinite sequence of pure strategy profiles $(i^{(1)},j^{(1)}), \ldots, (i^{(t)}, j^{(t)}),$  $\ldots $ is called a fictitious play sequence if and only if at each round $t \geq 2$,
\begin{equation}
i^{(t)} \in \argmax_{i \in [n]}  \sum_{s=1}^{t-1} A_{i j^{(s)}}
\quad
\text{and}
\quad
j^{(t)} \in \argmax_{j \in [m]} \sum_{s=1}^{t-1} B_{i^{(s)} j}
\end{equation}

The empirical strategy profile of row and column player at time $T$ is defined as
$
\hat{x}^{(T)} = \big( \frac{1}{T} \sum_{s=1}^{T} e_{i^{(s)}} \big)
$ and $
\hat{y}^{(T)} = \big( \frac{1}{T} \sum_{s=1}^{T} e_{j^{(s)}} \big)
$
where $e_{i^{(t)}}, e_{j^{(t)}}$ are the elementary basis vectors.
\end{definition}

\begin{definition}[Cumulative utility vector] \label{def:cumulative_utility}
For an infinite sequence of pure strategy profiles $(i^{(1)},j^{(1)}), \ldots, (i^{(t)}, j^{(t)}), \ldots $, the \textit{cumulative utility} vectors of the row and column player at round $t \geq 1$ are defined as,
\[
R^{(t)} = \sum_{s=1}^{t-1} A e_{j^{(s)}}
\quad
\text{and}
\quad
C^{(t)} = \sum_{s=1}^{t-1} e_{i^{(s)}}^{\top} B.
\]
\end{definition}

\begin{remark}
Fictitious play assumes that each agent selects at each round $t\in [T]$ a strategy with \textit{maximum cumulative utility}. The latter decision-making algorithm is also known as \textit{Follow the Leader}. We remark that the latter alternative interpretation  provides a direct generalization of fictitious play in $N$-player games.  
\end{remark}

In their seminal work, Monderer et al.\cite{monderer1996fictitious} established that in case of identical payoff games the empirical strategies of any fictitious play sequence converges asymptotically to a NE.

\begin{theorem}[\cite{monderer1996fictitious}] \label{thm:MS} 
Let a fictitious play sequence $( i^{(1)},j^{(1)} ), \ldots, (i^{(t)},j^{(t)} ), \ldots$ for an identical payoff game described with matrix $A$. Then, there exists a round $T^\star \geq 1$ such that for any $t \geq T^\star$, the empirical strategy profile $
(\hat{x}^{(t)},\hat{y}^{(t)})$ converges to a NE with a rate of $1/t$.
\end{theorem}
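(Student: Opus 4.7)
The plan is to exploit the identical-payoff potential $\Phi(x,y) := x^\top A y$ and its monotonicity along fictitious play. Define the best-response gaps at round $t$,
\[
\alpha_t := \max_{i\in[n]} [A\hat{y}^{(t)}]_i - \Phi_t, \qquad \beta_t := \max_{j\in[m]} [\hat{x}^{(t)\top} A]_j - \Phi_t, \qquad \Phi_t := \hat{x}^{(t)\top} A \hat{y}^{(t)},
\]
which are both nonnegative, and note that $(\hat{x}^{(t)}, \hat{y}^{(t)})$ is a $\max(\alpha_t, \beta_t)$-approximate NE. The theorem then reduces to showing $\max(\alpha_t, \beta_t) = O(1/t)$ for all $t \geq T^\star$.

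First I would derive a single-step identity for $\Phi_t$. Using $\hat{x}^{(t+1)} = \frac{t}{t+1} \hat{x}^{(t)} + \frac{1}{t+1} e_{i^{(t+1)}}$ and the analogous update for $\hat{y}^{(t+1)}$, together with the best-response identities $e_{i^{(t+1)}}^\top A \hat{y}^{(t)} = \Phi_t + \alpha_t$ and $\hat{x}^{(t)\top} A e_{j^{(t+1)}} = \Phi_t + \beta_t$ supplied by \cref{def:fict}, after expansion and cancellation I obtain
\[
\Phi_{t+1} - \Phi_t \;=\; \frac{t}{(t+1)^2}(\alpha_t + \beta_t) \;+\; \frac{1}{(t+1)^2}\bigl(A_{i^{(t+1)} j^{(t+1)}} - \Phi_t\bigr).
\]
Setting $M := \max_{i,j}|A_{ij}|$, the second term is uniformly bounded by $2M/(t+1)^2$ and hence absolutely summable, while the first term is nonnegative. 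Therefore $\Phi_t$ is approximately nondecreasing modulo summable noise and uniformly bounded above by $M$, so it converges to a limit $\Phi^\infty \in \mathbb{R}$. Telescoping the recursion then yields $\sum_{t \geq 1} \frac{t}{(t+1)^2}(\alpha_t + \beta_t) < \infty$.

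The next step is to promote this summability into pointwise decay. The empirical strategies shift by only $1/(t+1)$ per round, so $|\alpha_{t+1} - \alpha_t| \leq 4M/(t+1)$ and similarly for $\beta_t$. Combining this Lipschitz estimate with the summability above, a pigeonhole-style argument rules out blocks of length proportional to $t$ on which $\alpha_t + \beta_t$ exceeds $C/t$, forcing the asymptotic bound $\max(\alpha_t, \beta_t) \leq \tilde{C}/t$ for some absolute constant $\tilde{C}$ once $t$ is sufficiently large. The main obstacle lies in this final step: the weight $t/(t+1)^2 \sim 1/t$ appearing in the summability identity is exactly the borderline between summable and divergent, so the proof has to leverage both the nonnegativity of $\alpha_t, \beta_t$ and the convergence of $\Phi_t$ simultaneously to extract a quantitative $1/t$ rate rather than a mere $o(1)$ bound. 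Once this quantitative estimate is in hand, taking $T^\star$ to be the first round at which the gap falls below $\tilde{C}/t$ completes the proof.
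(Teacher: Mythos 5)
The paper does not prove Theorem~\ref{thm:MS}; it is imported from Monderer and Shapley \cite{monderer1996fictitious}, so your attempt can only be judged on its own terms. Up to a point it is sound: the one-step identity
$\Phi_{t+1}-\Phi_t=\tfrac{t}{(t+1)^2}(\alpha_t+\beta_t)+\tfrac{1}{(t+1)^2}\bigl(A_{i^{(t+1)}j^{(t+1)}}-\Phi_t\bigr)$
is correct given the paper's convention that $i^{(t+1)}$ best-responds to $\hat{y}^{(t)}$, and telescoping it against the boundedness of $\Phi_t$ does give $\sum_t \tfrac{t}{(t+1)^2}(\alpha_t+\beta_t)<\infty$. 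Combined with the $O(1/t)$ Lipschitz drift of the gaps, this yields $\alpha_t+\beta_t\to 0$, i.e.\ the qualitative Monderer--Shapley convergence. That much is a clean and essentially standard potential-descent argument.

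The gap is in the last step: the pigeonhole argument you invoke cannot produce a $1/t$ rate, and no purely analytic argument from these two facts can. Take $c_t:=\alpha_t+\beta_t=1/(\log t)^2$; this is nonnegative, satisfies $\sum_t c_t/t<\infty$, and has $|c_{t+1}-c_t|=O(1/t)$, yet decays slower than $t^{-\delta}$ for every $\delta>0$. So summability at the harmonic weight plus the Lipschitz estimate is consistent with arbitrarily slow decay, and your two ingredients genuinely do not suffice. The $1/t$ rate in the theorem comes from a discrete, not analytic, mechanism: because the game is finite, the realized payoffs $A_{i^{(t)}j^{(t)}}$ take finitely many values, and the Monderer--Shapley improvement principle (eventually $A_{i^{(t+1)}j^{(t+1)}}\geq A_{i^{(t)}j^{(t)}}-\epsilon$ for $\epsilon$ below the minimum payoff gap) forces the realized payoff sequence to become nondecreasing and hence eventually constant. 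After that absorption time $T^\star$ the play is confined to equilibrium actions --- in this paper's game, locked at the pure profile $(\tfrac{n}{2},\tfrac{n}{2}+1)$ --- so the empirical distributions approach the equilibrium at rate $T^\star/t$. That is exactly how the paper uses the theorem in the proof of Theorem~\ref{t:main}, and it is the ingredient your proposal is missing.
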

On the positive side, \cref{thm:MS} establishes that \textit{any fictitious play sequence} converges to a Nash equilibrium in the case of potential games~\footnote{For the sake of exposition, we have stated \cref{thm:MS} only for the case of \textit{identical payoff games}. However, we remark that the same theorem holds for general $N$-player potential games.} On the negative side, \cref{thm:MS} does not provide any convergence rates, since the round $T^\star$ depends on the specific fictitious play sequence and its dependence on the number of strategies is rather unclear. 

% \subsection{Lower Bounds on the Convergence Rates of Fictitious Play in Identical Payoff Games}

% As a result, the following question arises:

%\begin{center} \label{question:main}
%$Q:$ \textit{Does fictitious play in potential games admit convergence rates with a polynomial dependence on the number of actions and the desired accuracy?} 
%\end{center}

%In this work, we answer this question in the negative. Specifically, we provide an example of an identical payoff game for which \emph{any} fictitious play sequence (with a specific initialization) requires super-exponential time in the number of actions to converge. Since identical payoff games are a subclass of potential games, our lower bound construction also applies to general $N$-player potential games.

\section{Main Result} \label{section:main_result}

In this section, we outline the steps towards proving Theorem~\ref{t:main}, as it is stated below. Firstly, we introduce a carefully constructed payoff matrix $A$ of size $n \times n$ and analyze its structural properties in \cref{section:payoff_matrix}. Next, in \cref{section:fp_behavior}, we investigate the behavior of fictitious play when the game is an two-player identical payoff game with this matrix $A$. We also present a set of key statements that are necessary for proving the main theorem. Finally, in \cref{section:lemma_proof}, we provide a proof for the fundamental \cref{thm:main}.
% , which is crucial in establishing the main result.

\begin{theorem} \label{t:main}
Let an identical payoff game defined with the matrix $A$ of size $n \times n$ and consider any fictitious play sequence $ ( i^{(1)},j^{(1)} ), \ldots, ( i^{(t)},j^{(t)} ), \ldots$ with $( i^{(1)},j^{(1)} ) = (n,1)$. In case the empirical strategy profile $( \hat{x}^{(T)},\hat{y}^{(T)} )$ is an $\epsilon$-approximate Nash equilibrium then it holds
\[
T 
\geq 
\Omega\left( 4^n ((n/2-2)!)^4 + \frac{1}{n \sqrt{\epsilon}} \right).
\]
Moreover, the lower bound on $T$ is independent of the tie-breaking rule. Finally, if the initialization is chosen uniformly at random, then the expected number of rounds to reach an $\epsilon$-approximate Nash equilibrium is $\Omega\left( \frac{4^n ((n/2-2)!)^4 + 1/n\sqrt{\epsilon}}{n^2}\right)$.
\end{theorem}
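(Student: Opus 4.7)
The plan is to split the proof into a combinatorial part that lower bounds how many rounds fictitious play needs before it can first reach the unique pure Nash equilibrium cell, and an analytic part that converts proximity to that cell into the $\epsilon$-approximate Nash equilibrium guarantee. The combinatorial part rests on the matrix $A$ built in the following subsection, whose non-zero entries are arranged along a spiral from $(n,1)$ inward to the pure NE at the center, with strictly increasing values and a recursive structure in which an $n\times n$ instance contains an $(n-2)\times(n-2)$ instance as its inner block. The first step is to prove a ``Forced Trajectory'' lemma: every fictitious play sequence starting at $(i^{(1)},j^{(1)}) = (n,1)$ must visit the spiral cells in order, irrespective of tie-breaking. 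The key invariant to maintain inductively along the sequence is that the cumulative utility vectors $R^{(t)}$ and $C^{(t)}$ each have a unique argmax at every round where a deviation off the spiral would otherwise be possible; the recursive design of $A$ should ensure that as soon as fictitious play enters an outer arm of the spiral, the cumulative utilities are biased so that the best response points to the next arm cell and nowhere else.

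Next, I would count the rounds spent on each layer by induction on the recursive structure. Let $T_k$ denote the number of rounds required to finish the $k$-th layer starting from the time it is first entered; the recursion should yield $T_{k+1} \geq c \cdot f(k) \cdot T_k$ for some polynomial $f$, because once the inner $(n-2)\times(n-2)$ subgame is entered with a non-trivial cumulative history inherited from the outer layers, that history must be ``balanced out'' before fictitious play can escape the inner spiral, multiplying the base time of the inner instance by the imbalance factor. Unrolling this recursion over $n/2$ layers would give the $4^n \bigl((n/2-2)!\bigr)^4$ factor. This inductive bookkeeping is where I expect the main obstacle: the proof must track not only the number of rounds per layer but also how the cumulative utility vectors $R^{(t)}$ and $C^{(t)}$ propagate their asymmetry from outer to inner layers, since that asymmetry is precisely what forces the next inner layer to run proportionally longer; a delicate choice of entry values in $A$ is needed to make this book-keeping tight.

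Finally, for the $\epsilon$ term I would show that every $\epsilon$-approximate Nash equilibrium of the constructed game must lie within $O(\sqrt{\epsilon})$ in $\ell_1$-distance from the pure NE, exploiting the quadratic gap promised in the abstract between the maximum entry of $A$ and the rest. Since the empirical strategies $\hat{x}^{(t)}$ and $\hat{y}^{(t)}$ change by at most $O(1/T)$ per round, and since the Forced Trajectory Lemma requires fictitious play to visit every spiral cell before concentrating mass on the pure NE cell, achieving an $\epsilon$-approximate NE costs at least an additional $\Omega(1/(n\sqrt{\epsilon}))$ rounds beyond the first visit, producing the second summand in the bound. The arbitrary-tie-breaking statement is immediate from the Forced Trajectory Lemma, while the random-initialization bound follows by conditioning: with probability $1/n^2$ the initialization coincides with $(n,1)$, and the other initializations can be reduced to this one via the symmetry of the recursive construction, giving the advertised $\Omega\!\bigl((4^n((n/2-2)!)^4 + 1/(n\sqrt{\epsilon}))/n^2\bigr)$ expected lower bound.
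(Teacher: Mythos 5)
Your proposal follows essentially the same route as the paper: a forced-spiral argument with a per-layer multiplicative recursion on the cumulative utility vectors (the paper's Lemma~\ref{thm:main} and Lemma~\ref{l:1}), a uniqueness-of-approximate-NE lemma concentrating the mass on $\left(\frac{n}{2},\frac{n}{2}+1\right)$ that yields the extra $1/(n\sqrt{\epsilon})$ rounds (the paper's Lemma~\ref{t:2}), and conditioning on the $1/n^2$-probability event that the initialization is $(n,1)$. The only caveat is that your appeal to ``symmetry'' to handle the other initializations is unnecessary (and not actually available in this non-symmetric construction) --- the paper simply lower-bounds the expectation by the contribution of the $(n,1)$ initialization alone.
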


\subsection{Construction and Analysis of the Payoff Matrix \texorpdfstring{$A$}{A}}
\label{section:payoff_matrix}
We begin by introducing our recursive construction for the payoff matrix, which we use to establish the formal statement of \cref{t:main}. 

\begin{definition} \label{def:matrix_construction}
For any $z>0$ and $n$ even, consider the following $n \times n$ matrix $K^n(z)$.
\begin{enumerate}[leftmargin=2em]
\item For $n=2$,
$K^n(z) = 
\begin{pmatrix}
z+2 & z+3 \\
z+1 & 0
\end{pmatrix}$.

\item For $n \geq 4$,

% % \begin{minipage}[t]{.5\linewidth}
\begin{itemize}[leftmargin=1em]
\item $K_{n1}^n(z) = z+1$ and $K^n_{nj}(z)=0$ for $j \notin \{ 1 \}$. \hfill \textit{Row $n$}
\smallskip
\item $K^n_{n1}(z)= z+1,  K^n_{11}(z)= z+2$ and $K^n_{1j}(z)=0$ for any $j \notin \{1,n\}$. \hfill \textit{Column $1$}
\smallskip
\item $K_{11}^n(z) = z+2$, $K_{1n}^n(z) = z+3$ and $K^n_{1j}(z)=0$ for $j \notin \{1,n\}$. \hfill \textit{Row $1$}
\smallskip
\item $K_{1n}^n(z) = z+3,K_{n-1n}^n(z) = z+4$ and $K^n_{nj}(z) = 0$ for $j \notin \{1,n-1\}$. \hfill \textit{Column $n$}
\smallskip
\item For all $i,j \in \{2,\ldots,n-1\} \times \{2,\ldots,n-1\}$,
$\, 
K_{ij}^n(z):= K_{i-1j-1}^{n-2}(z+4)
$.
\end{itemize}

\end{enumerate}

\end{definition}

In \cref{fig:recursive_construcion,fig:matrix_example}, we provide a schematic representation of \cref{def:matrix_construction} and an illustrative example for $n=6$ and $z=0$. For the sake of simplicity, we have intentionally omitted the remaining zeros in the outer rows, and columns of the matrix.

\begin{figure}
\begin{subfigure}{.5\textwidth}  
\begin{equation*}
K^n(z) = 
\begin{pNiceMatrix}[
    cell-space-limits = 2pt,
    nullify-dots,
    xdots/line-style=loosely dotted
]
(z+2) & 0 & \Cdots & 0 & \small{(z+3)} \\
0 & &  &  & 0 \\ 
\Vdots & \Block[]{2-3}<\small>{K^{n-2}(z+4)}& &  & \Vdots \\
\Vdots & & &  & 0 \\
0 &  & & & (z+4) \\
(z+1) & 0 & \Cdots & 0 & 0
\end{pNiceMatrix}
\end{equation*}
\caption{Recursive construction of $A$.}
\label{fig:recursive_construcion}
\end{subfigure}
\begin{subfigure}{.5\textwidth}
    \begin{equation*}
    K^6(0) =
    \begin{pNiceArray}{cccccc}[cell-space-limits = 3pt]
    2 & 0 & 0 & 0 & 0 & 3 \\
    0 & 6 & 0 & 0 & 7 & 0\\
    0 & 0 & 10 & 11 & 0 & 0\\
    0 & 0 & 9 & 0 & 8 & 0 \\
    0 & 5 & 0 & 0 & 0 & 4\\
    1 & 0 & 0 & 0 & 0 & 0
    \end{pNiceArray}
    \end{equation*}
\caption{An example for $z=0$ and $n=6$.}
\label{fig:matrix_example}
\end{subfigure}
\end{figure}

The construction of the payoff matrix exhibits an interesting circular pattern, which begins at the lower left corner and extends along the outer layer of the matrix. More specifically, the first increment occurs in the same column as the starting point, i.e., at position $(1,1)$ on the first row. The pattern then proceeds to the next greater element on the same row but a different column, i.e., at position $(1,n)$, and the last increment before entering the inner sub-matrix is located on the same column but on the $(n-1)$-th row.
\[
\text{Sequence of increments: }
(n,1) \rightarrow (1,1) \rightarrow (1,n) \rightarrow (n-1,n) \rightarrow 
\underbrace{
(n-1,1) \rightarrow \cdots
}_{K^{n-2}(z+4)}
\]
The increments have been carefully selected to ensure that there are alternating changes in row and column when starting from the lower-left corner and following the successive increments, until reaching the sub-matrix in the center. Once inside the sub-matrix, a similar pattern continues. As we will explore later on, the structure of the payoff dictates the behavior of fictitious play.
We denote the payoff matrix under consideration as $A$, which is defined as $A \coloneqq K_n(0)$. The subsequent statements establish the key properties of $A$.

\begin{observation}[Structural Properties of matrix $A$] \label{c:1}
Let the matrix $A = K^n(0)$, then for $i \in \{0, \ldots, \frac{n}{2} -1 \}$ the following hold:

\begin{itemize}[leftmargin=2em]
\item The only elements with non-zero values in column $i+1$ are located at positions $i+1$ and $n-i$, and have values $4i+2$ and $4i+1$, respectively.

\item The only non-zero elements of row $i+1$ are located at positions $i+1$ and $n-i$, and have values $4i+2$ and $4i+3$, respectively. 

\item The only non-zero elements of column $n-i$ are located at positions $i+1$ and $n-i-1$ and have values $4i+3$ and $4i+4$, respectively.

\item The only non-zero elements of row $n-i$ are located at positions $i+1$ and $n-i+1$, and have values $4i+1$ and $4i$, respectively.

\end{itemize}
\end{observation}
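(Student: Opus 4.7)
The plan is to strengthen the observation into an assertion about $K^n(z)$ for arbitrary $z\ge 0$ and prove it by induction on the even parameter $n\ge 2$. The strengthened statement replaces each of the constants $4i,\,4i+1,\,4i+2,\,4i+3,\,4i+4$ appearing in the four bullets of \cref{c:1} by $4i+z,\,4i+1+z,\,\ldots,\,4i+4+z$, respectively; setting $z=0$ recovers \cref{c:1} verbatim. The reason for generalizing in $z$ is that the recursion feeds the parameter $z+4$ into the inner $(n-2)\times(n-2)$ block, so the inductive hypothesis must be uniform in $z$ in order to compose.

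For the base case $n=2$, only $i=0$ applies, and the four bullets follow at once by inspecting
\[
K^2(z)=\begin{pmatrix} z+2 & z+3 \\ z+1 & 0 \end{pmatrix};
\]
the degenerate index positions (such as $n-i-1=1=i+1$ in the ``column $n-i$'' bullet or $n-i+1=3$ in the ``row $n-i$'' bullet) either coincide with positions already listed or exit the matrix, which is consistent with the claim. For the inductive step $n\ge 4$, I would separate the outer layer $i=0$ from the inner layers $i\ge 1$. When $i=0$, the four assertions about row and column $1$ and row and column $n$ read off directly from the four outer-layer items of \cref{def:matrix_construction}. When $i\ge 1$, the rows $i+1,\,n-i$ and columns $i+1,\,n-i$ lie entirely inside the block $\{2,\ldots,n-1\}^2$, which by the recursion equals $K^{n-2}(z+4)$ under the shift $(r,c)\mapsto(r-1,c-1)$. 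Applying the inductive hypothesis to $K^{n-2}(z+4)$ at inner index $i'=i-1\in\{0,\ldots,n/2-2\}$ yields non-zero entries at inner positions $i'+1$ and $(n-2)-i'$ with values of the form $4i'+k+(z+4)=4i+k+z$, which translate to big-matrix positions $i+1$ and $n-i$ with exactly the values required. The remaining intersections of these inner rows and columns with the outer layer of $K^n(z)$ vanish by direct inspection of \cref{def:matrix_construction}, which takes care of the ``zero outside the stated positions'' part.

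The argument is not deep: the only real work is index bookkeeping, in particular verifying that the shift $(r,c)\mapsto(r-1,c-1)$ commutes with the re-parametrization $i\mapsto i-1$ and that the value offset $+4$ from the recursion lines up with the inductive shift $+z$. The only mild subtlety is the innermost layer $i=n/2-1$, where some of the ``two'' positions listed in the bullets coincide or leave the matrix; these cases are either vacuous or reduce directly to the $n=2$ base case and need no additional work.
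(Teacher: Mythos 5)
The paper itself offers no proof of Observation~\ref{c:1}; it is stated as an immediate consequence of Definition~\ref{def:matrix_construction}. Your strengthened induction --- generalizing the claim to $K^n(z)$ with all values shifted by $z$, so that the inner block $K^{n-2}(z+4)$ can absorb the inductive hypothesis --- is exactly the right formalization, and the base case, the $i=0$ outer-layer reading, and the shift $(r,c)\mapsto(r-1,c-1)$ with $i\mapsto i-1$ all line up as you describe. One spot in your bookkeeping needs repair, though: you assert that for $i\ge 1$ the inductive hypothesis delivers the non-zero entries ``at big-matrix positions $i+1$ and $n-i$'' and that all remaining intersections with the outer layer vanish. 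That matches the first two bullets, but the third and fourth bullets place their second entry at positions $n-i-1$ and $n-i+1$ respectively, and for the fourth bullet at $i=1$ the position $n-i+1=n$ lies in the outer column $n$, \emph{outside} the inner block. There the required entry $(n-1,n)=z+4=4i+z$ does not come from the inductive hypothesis at all but from the explicit ``Column $n$'' clause of Definition~\ref{def:matrix_construction}; said differently, the inner matrix's own layer $i'=0$ has its fourth-bullet position $(n-2)-i'+1$ out of range, so the big matrix must supply that entry from its outer shell. This is a one-line fix, but as written your claim that the outer-layer intersections all vanish is false for that single entry, and the values you quote from the inductive hypothesis do not cover it. With that case handled explicitly (and the symmetric check that $(n-i,1)$ and $(n-i,n)$ are zero for $i\ge 2$, which does follow from the outer-layer clauses), the argument is complete.
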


\begin{observation} \label{c:2}
The maximum value of $A$ is $2n-2$ and is located at the entry $\left( \frac{n}{2}, \frac{n}{2} + 1 \right)$.   
\end{observation}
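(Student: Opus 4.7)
The plan is to prove the claim by induction on $n$ (the size of the matrix), reflecting the recursive nature of Definition \ref{def:matrix_construction}. Since the recursion shifts the offset from $z$ to $z+4$, I would strengthen the inductive hypothesis to handle arbitrary $z \geq 0$: for every even $m \geq 2$, the maximum entry of $K^m(z)$ is located at position $(m/2, m/2+1)$ and equals $z$ plus a closed-form expression in $m$ (computed by unrolling the recursion from the base case). Specializing to $z=0$ and $m=n$ will yield the statement about $A = K^n(0)$.

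For the base case $m=2$, a direct look at $K^2(z)$ identifies the maximum as $z+3$, attained at $(1,2) = (m/2, m/2+1)$. For the inductive step with even $m \geq 4$, I would split $K^m(z)$ into (i) the outer frame consisting of the first and last rows and columns, and (ii) the inner block $K^{m-2}(z+4)$ placed at coordinates $\{2,\ldots,m-1\} \times \{2,\ldots,m-1\}$. From Definition \ref{def:matrix_construction} (or equivalently from Observation \ref{c:1} with $i=0$), every non-zero entry of the outer frame belongs to $\{z+1, z+2, z+3, z+4\}$. By the inductive hypothesis, the maximum of the inner block is attained at its central position $((m-2)/2,\, (m-2)/2 + 1)$, which, after correcting for the $+1$ index shift induced by the embedding into $K^m(z)$, corresponds to $(m/2, m/2+1)$ in the full matrix. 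Since the inner-block maximum clearly exceeds $z+4$ for every $m \geq 4$, it dominates the outer frame and thus provides the global maximum of $K^m(z)$.

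The main obstacle is purely bookkeeping: one must track the index translation between the local coordinates of the sub-matrix $K^{m-2}(z+4)$ and the global coordinates of $K^m(z)$ to verify that the inductive location $((m-2)/2,\, (m-2)/2+1)$ lifts exactly to $(m/2, m/2+1)$, and one must unroll the recursion $z \mapsto z+4$ across $m/2 - 1$ levels to recover the closed-form value. Once these two ingredients are in place, the comparison with the outer frame values $\{z+1,\ldots,z+4\}$ is straightforward, and specializing to $z=0$, $m=n$ closes the argument.
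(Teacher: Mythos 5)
Your inductive strategy is sound and is essentially the canonical way to justify this observation, which the paper states without proof: the base case, the index shift (local $((m-2)/2,(m-2)/2+1)$ lifting to $(m/2,m/2+1)$ under the embedding $K^m_{ij}(z)=K^{m-2}_{i-1,j-1}(z+4)$), and the comparison of the inner-block maximum against the outer-frame values $\{z+1,\dots,z+4\}$ all check out. The one step you deferred — actually unrolling the recursion to get the closed form — is where a discrepancy with the stated claim appears. Writing $M(m,z)$ for the maximum of $K^m(z)$, your recursion gives $M(m,z)=M(m-2,z+4)$ with $M(2,z)=z+3$, hence $M(n,0)=4\left(\tfrac{n}{2}-1\right)+3=2n-1$, not $2n-2$. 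This matches the paper's own example: the maximum of $K^6(0)$ in Figure 1b is $11=2\cdot 6-1$, sitting at $(3,4)=(n/2,n/2+1)$, whereas $2n-2=10$ is the value of the \emph{neighboring} entry $(n/2,n/2)$ (consistent with Observation 1 at $i=n/2-1$, which puts $4i+2=2n-2$ at $(n/2,n/2)$ and $4i+3=2n-1$ at $(n/2,n/2+1)$). So your proof, carried to completion, establishes the location exactly as claimed but a value of $2n-1$; the figure $2n-2$ in the observation is an off-by-one that you should flag rather than reproduce. Nothing downstream depends on the precise constant — the arguments in the paper only use the location of the maximum and the entries listed in Observation 1 — so this does not affect the rest of your argument, but a complete write-up must either prove $2n-1$ or explain why it is proving something other than the literal statement.
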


\begin{proposition} \label{prop:larger_smaller}
For any non-zero element in the matrix $A$, there is at most one non-zero element that is greater and at most one non-zero element that is smaller in the same column or row.
\end{proposition}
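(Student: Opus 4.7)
Plan: I would prove this by induction on $n$, strengthening the statement so that the inductive hypothesis reads: for every even $n \ge 2$ and every $z \ge 0$, every non-zero entry of $K^n(z)$ has at most one strictly larger and at most one strictly smaller non-zero entry in the union of its row and column. The base case $n=2$ follows by direct inspection of the three non-zero entries of $K^2(z)$, each of which has at most one other non-zero in its row and at most one in its column, so the required bound holds trivially.

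For the inductive step, write $K := K^n(z)$ and $K' := K^{n-2}(z+4)$. I would partition the non-zero entries of $K$ into the four outer-frame entries at $(n,1),(1,1),(1,n),(n-1,n)$ with values $z+1,z+2,z+3,z+4$, and the entries of $K'$ embedded in the index block $\{2,\ldots,n-1\}^2$. The four outer entries are verified directly from \cref{def:matrix_construction}: they lie in rows $1, n-1, n$ and columns $1, n$, whose other non-zero entries are explicitly enumerated by the definition (and in the case of row $n-1$, the only other non-zero is the inner entry $K_{n-1,2}=K'_{n-2,1}=z+5$), and a short value comparison confirms at most one greater and at most one smaller neighbor for each.

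For an inner non-zero $K_{ij}=K'_{i-1,j-1}$ with $i,j \in \{2,\ldots,n-1\}$, the key structural observation is that the four outer entries lie only in rows $\{1,n-1,n\}$ and columns $\{1,n\}$; in particular, no outer entry lies in columns $2,\ldots,n-1$, and the only inner row that receives an outer addition is row $n-1$. Hence for $i \neq n-1$, the non-zero structure of row $i$ and column $j$ of $K$ mirrors that of row $i-1$ and column $j-1$ of $K'$, and the claim for $K_{ij}$ is immediate from the inductive hypothesis applied to $K'$. For the boundary case $i=n-1$, the unique inner non-zero in row $n-1$ is $K_{n-1,2}=K'_{n-2,1}=z+5$, since row $n-2$ of $K'$ contains only this one non-zero by the recursive ``bottom row'' specification; the extra outer entry $(n-1,n)=z+4$ is strictly less than $z+5$, and $K'_{n-2,1}$ has no other non-zero in its row in $K'$, so the added entry contributes exactly one additional smaller neighbor (and no additional larger one), preserving the bound.

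The main obstacle will be the careful bookkeeping of this boundary interaction between the outer frame and the inner submatrix — in particular verifying that the only inner row modified by the outer additions is row $n-1$, and that the added outer value $z+4$ sits strictly below the unique inner non-zero value $z+5$ appearing in that row, so it contributes only on the smaller side and never breaks the ``at most one greater'' bound. Once this single boundary case is handled, the inductive hypothesis on $K'$ covers all remaining inner entries uniformly.
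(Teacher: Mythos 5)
Your proof is correct, but it takes a genuinely different route from the paper's. The paper dispatches this proposition in three lines by citing Observation~\ref{c:1}: each row and each column of $A$ contains at most two non-zero entries, so a non-zero entry at $(i,j)$ has at most two non-zero neighbours in row $i$ and column $j$ combined, and the explicit values listed in Observation~\ref{c:1} (e.g., the entry $4i+2$ at $(i+1,i+1)$ sits between its column neighbour $4i+1$ and its row neighbour $4i+3$) show that one neighbour is smaller and one is larger. In other words, the paper reads the claim off a pre-stated enumeration of positions and values, whereas you re-derive the structure directly from the recursive Definition~\ref{def:matrix_construction} by induction on $n$, with the strengthened hypothesis quantified over all $z$. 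Your decomposition into the four frame entries plus the embedded $K^{n-2}(z+4)$ block is sound, and you correctly isolate the one non-trivial boundary interaction: the frame entries occupy only rows $\{1,n-1,n\}$ and columns $\{1,n\}$, so the only interior row perturbed is row $n-1$, where the frame value $z+4$ at $(n-1,n)$ sits strictly below the unique interior non-zero $K_{n-1,2}=z+5$ in that row (whose only other neighbour, $z+6$ in column $2$, is larger), so the count stays at one greater and one smaller. What your approach buys is self-containedness and rigor: the paper's Observation~\ref{c:1} is asserted without proof, and the paper's closing sentence (``at most one of these elements can be greater and at most one can be smaller'') does not follow from the counting alone but needs the value comparison your induction makes explicit. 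What it costs is the boundary bookkeeping you already flagged; the paper's route is shorter precisely because it front-loads that work into the unproven observation.
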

% \begin{proof}[Proof of \cref{prop:larger_smaller}]
\begin{proof}
By \cref{c:1}, each row and column of matrix $A$ contains at most two non-zero elements that are necessarily different to each other. Thus, if $(i,j)$ is a non-zero element of $A$, there can be at most two additional non-zero elements in row $i$ and column $j$ combined. Moreover, at most one of these elements can be greater and at most one can be smaller.
\end{proof}

One of the central components of the main theorem is presented below. \cref{t:2} establishes that in an identical payoff game with matrix $A$, any approximate Nash equilibrium must distribute the majority of its probability mass to the maximum element in $A$, which is located at the entry $(\frac{n}{2}, \frac{n}{2}+1)$. The proof of this theorem is based solely on the structural properties presented in \cref{c:1}, and is provided in the Appendix.

\begin{lemma}[Unique $\epsilon^2$-NE] \label{t:2}
Let $\epsilon \in O(n^3)$ and consider an $\epsilon^2$-approximate Nash Equilibrium $(x^{*},y^{*})$. Then the following hold,
\[
x^{*}_{\frac{n}{2}} \geq 1- n \epsilon 
\quad \text{and} \quad
y^{*}_{\frac{n}{2} + 1} \geq 1 - n \epsilon
\]

\end{lemma}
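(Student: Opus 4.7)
\textbf{Proof plan for Lemma~\ref{t:2}.} The approach is to rewrite the $\epsilon^2$-NE condition in the standard ``regret'' form and combine it with the sparse structure of $A$ from \cref{c:1}. Writing $U_i := (Ay^*)_i$, $W_j := (x^{*\top}A)_j$ and letting $U_{\max} := \max_i U_i$, $W_{\max} := \max_j W_j$, the $\epsilon^2$-NE condition is equivalent to
\[
\sum_{i=1}^n x^*_i (U_{\max} - U_i) \leq \epsilon^2
\quad \text{and} \quad
\sum_{j=1}^n y^*_j (W_{\max} - W_j) \leq \epsilon^2,
\]
which yields the pointwise bounds $x^*_i(U_{\max}-U_i) \leq \epsilon^2$ for every $i$ and analogously for $j$. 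The task therefore reduces to showing that the gaps $U_{\max}-U_i$ and $W_{\max}-W_j$ are at least $\Omega(1/n)$ for $i \neq n/2$ and $j \neq n/2+1$.

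From \cref{c:1} one reads off the following structural facts: row $n/2$ is the unique row whose maximum entry equals the global maximum $2n-1$ (and has a second non-zero entry of value $2n-2$), while every other row has maximum value at most $2n-3$; symmetrically, column $n/2+1$ has a single non-zero entry of value $2n-1$, column $n/2$ has maximum $2n-2$, and every other column has maximum at most $2n-4$. The two key inequalities of the proof are obtained as follows. The column-player deviation $y^* \to e_{n/2+1}$ gives $(x^*)^\top A y^* \geq (2n-1)x^*_{n/2} - \epsilon^2$, and the structural bound $W_j \leq 2n-2$ for $j \neq n/2+1$ gives $(x^*)^\top A y^* \leq y^*_{n/2+1}(2n-1)x^*_{n/2} + (1-y^*_{n/2+1})(2n-2)$. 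Combining them yields
\[
(1-y^*_{n/2+1})\bigl[(2n-1)x^*_{n/2} - (2n-2)\bigr] \leq \epsilon^2,
\]
and the symmetric row argument gives $(1-x^*_{n/2})\bigl[U_{n/2} - (2n-3)\bigr] \leq \epsilon^2$ where $U_{n/2} = (2n-2)y^*_{n/2} + (2n-1)y^*_{n/2+1}$.

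The main obstacle is the bootstrap, since each coupled inequality is informative only when its bracketed gap is strictly positive, i.e., when $x^*_{n/2}$ is above $(2n-2)/(2n-1)$ or when $y^*$ already concentrates on $\{n/2,n/2+1\}$. My plan to resolve this is to exploit the sharper structural bound $W_j \leq 2n-4$ for $j \notin \{n/2, n/2+1\}$. By the full column regret inequality, $\sum_{j \notin \{n/2, n/2+1\}} y^*_j (W_{\max} - W_j) \leq \epsilon^2$; if this residual mass were not negligible, pigeonhole would produce a single $j^*$ whose $y^*_{j^*}$ forces the corresponding gap to be $O(n\epsilon^2)$, contradicting $W_{\max} \geq W_{n/2+1} = (2n-1)x^*_{n/2}$ unless $x^*_{n/2}$ already lies above $(2n-4)/(2n-1)$. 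The symmetric row argument pushes $y^*_{n/2+1}$ above $(2n-3)/(2n-1)$. These two thresholds together make both coupled inequalities active; chaining them and using $U_{n/2} \geq (2n-1)y^*_{n/2+1}$ then yields $\delta := 1 - x^*_{n/2}$ and $\gamma := 1 - y^*_{n/2+1}$ both at most $O(n\epsilon^2)$, a fortiori at most $n\epsilon$. The hypothesis on the size of $\epsilon$ in the lemma ensures the various $O(n^2\epsilon^2)$ error terms in this bootstrap remain below the $\Omega(1/n)$ thresholds needed.
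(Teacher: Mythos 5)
Your proposal diverges from the paper's proof in a way that leaves a genuine gap. The paper does not argue directly about the center of the matrix: it runs an outside-in induction over the $n/2-1$ layers of the spiral, proving for each layer $i$ (in the order $x^\star_{n-i},\,y^\star_{i+1},\,x^\star_{i+1},\,y^\star_{n-i}$) that the mass is at most $\epsilon$, using the already-peeled layers to bound the utility of the current outer strategies and a pigeonhole over the \emph{inner submatrix} (its Propositions~\ref{p:row_sum} and~\ref{p:column_sum} guarantee the inner rows/columns collectively absorb utility at least the inner probability mass) to exhibit an improving deviation. That induction is what excludes approximate equilibria supported entirely on outer layers of the spiral --- configurations your argument never engages with, since for such a profile $x^\star_{n/2}=y^\star_{n/2+1}=0$, both of your coupled inequalities are vacuous, and $W_{n/2+1}=(2n-1)x^\star_{n/2}=0$ gives the pigeonhole comparison nothing to contradict.

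Concretely, the bootstrap you propose does not close. Your pigeonhole step establishes the implication ``if the residual mass $\delta=\sum_{j\notin\{n/2,n/2+1\}}y^\star_j$ is large, then $(2n-1)x^\star_{n/2}\le 2n-4+n\epsilon^2/\delta$,'' which is an \emph{upper} bound on $x^\star_{n/2}$; it forces $\delta$ to be small only if you already know $x^\star_{n/2}$ exceeds roughly $(2n-4)/(2n-1)$, which is precisely what the symmetric row argument was supposed to deliver and which in turn presupposes concentration of $y^\star$. Moreover, even granting the thresholds you name, they are quantitatively insufficient: with $x^\star_{n/2}\ge(2n-4)/(2n-1)$ the bracket $(2n-1)x^\star_{n/2}-(2n-2)$ is still negative (you would need $x^\star_{n/2}>(2n-2)/(2n-1)$, and in fact $(2n-2)/(2n-1)$ plus an $\Omega(1/n)$ margin, for the first coupled inequality to yield $1-y^\star_{n/2+1}=O(n\epsilon^2)$). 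To repair the argument you would essentially have to reintroduce the paper's layer-by-layer induction, showing first that no mass can sit on the outermost row/column quadruple, then the next, and so on inward.
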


\cref{t:2} not only establishes that the only Nash equilibrium of the identical-payoff game with matrix $A$ corresponds to the strategies $(\frac{n}{2}, \frac{n}{2} + 1)$, but it also implies that this is the only exact Nash equilibrium (i.e., $\epsilon=0$). This observation follows from \cref{c:2}, which states that the entry $(\frac{n}{2}, \frac{n}{2} + 1)$ corresponds to a maximum value of $A$ and hence it is a Nash equilibrium as it dominates both the row and the column that it belongs to. We can formally state this observation as follows.

\begin{corollary}[Unique pure NE] \label{c:unique_nash}
In an identical-payoff game with payoff matrix $A$, there exists a unique pure Nash equilibrium at $ \left( \frac{n}{2}, \frac{n}{2}+1 \right)$.
\end{corollary}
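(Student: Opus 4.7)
The plan is to prove Corollary~\ref{c:unique_nash} by combining Observation~\ref{c:2} for existence and Lemma~\ref{t:2} for uniqueness, both of which are already available.

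For existence, I would note that in an identical-payoff game, a pure profile $(i,j)$ is a Nash equilibrium if and only if $A_{ij}$ attains the maximum of both row $i$ and column $j$, since no player can improve by unilaterally switching to another pure action. By Observation~\ref{c:2}, the entry $A_{n/2,\,n/2+1} = 2n-2$ is the global maximum of $A$, so \emph{a fortiori} it is the maximum of its row and of its column. Consequently $(n/2, n/2+1)$ is a pure Nash equilibrium.

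For uniqueness, I would invoke Lemma~\ref{t:2} at $\epsilon = 0$. Any pure Nash equilibrium $(i^*, j^*)$ corresponds to the mixed profile $(e_{i^*}, e_{j^*})$, which is exact and hence is a $0$-approximate (equivalently $\epsilon^2$-approximate with $\epsilon = 0$) Nash equilibrium. Since $\epsilon = 0$ trivially falls within the regime $\epsilon \in O(n^3)$ assumed by the lemma, applying it gives $[e_{i^*}]_{n/2} \geq 1 - n\cdot 0 = 1$ and $[e_{j^*}]_{n/2+1} \geq 1$. Because $e_{i^*}$ and $e_{j^*}$ are elementary basis vectors, this forces $i^* = n/2$ and $j^* = n/2+1$.

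I do not expect any substantive obstacle, as the entire difficulty has been packaged into Lemma~\ref{t:2}. The only mild subtlety is checking that the lemma's hypothesis accommodates the choice $\epsilon = 0$. If one prefers to avoid that invocation, a self-contained alternative is available: use Observation~\ref{c:1} to enumerate row- and column-maxima. Each row $i+1$ attains its maximum at column $n-i$ (value $4i+3$), and column $n-i$ attains its maximum at row $n-i-1$ (value $4i+4$); these two indices coincide precisely when $i+1 = n-i-1$, i.e., at $(n/2, n/2+1)$, and a symmetric check rules out the remaining candidate rows/columns. This direct route is more tedious but uses only the structural observations already proved.
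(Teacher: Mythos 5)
Your proposal matches the paper's own argument: existence follows from Observation~\ref{c:2} (the global maximum dominates its row and column), and uniqueness follows from Lemma~\ref{t:2} applied to an exact equilibrium. The only cosmetic point is that the lemma's hypothesis is stated for $\epsilon>0$, so rather than plugging in $\epsilon=0$ directly you should apply it for every small $\epsilon>0$ and let $\epsilon\to 0^+$, which gives the same conclusion.
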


\subsection{Lower Bound for Fictitious Play in a Game with Matrix \texorpdfstring{$A$}{A}} \label{section:fp_behavior}

In this subsection, we present the proof of \cref{t:main}. To achieve this, we first prove that fictitious play requires super-exponential time before placing a positive amount of mass in entry $(\frac{n}{2}, \frac{n}{2} + 1)$. This result is established by our main technical contribution of the subsection, which is \cref{thm:main}.

\begin{lemma} \label{thm:main}
Let an identical-payoff game with payoff matrix $A$ and a fictitious play sequence $( i^{(1)},j^{(1)} ), \ldots, (i^{(t)},j^{(t)} ), \ldots$ with $( i^{(1)},j^{(1)} ) = ( n,1 )$. Then, for all $\ell = \{ 0, \ldots , \frac{n}{2} - 1 \} $ there exists a round $T_{\ell} \geq 1$ such that:

\begin{enumerate}[label=\arabic*., leftmargin=2em]
\item \label{thm:main_item1}
the agents play the strategies $(n-\ell,\ell+1)$ for the first time,

\item \label{thm:main_item2}
all rows $r \in [\ell+1,n-\ell-1]$ admit $0$ cumulative utility, $R_r^{(T_\ell)} =0$,

\item \label{thm:main_item3}
all columns $c \in [\ell+2,n-\ell]$ admit $0$ cumulative utility, $C_c^{(T_\ell)} =0$.

\end{enumerate}
Moreover for $\ell \geq 2$, the cumulative utility of row $n-\ell$ at round $T_\ell$ is greater than

\begin{equation} \label{eq:recursive_theorem}
R_{n-\ell}^{(T_\ell)} \geq 4\ell(4\ell-1)(4\ell-2)(4\ell-3) \cdot R_{n-\ell}^{(T_{\ell-1})} \textrm{\; while \;}R_{n-1}^{(T_1)} \geq 4.
\end{equation}

\end{lemma}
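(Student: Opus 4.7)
The proof is by induction on $\ell$, tracking fictitious play along the spiral trajectory described after \cref{def:matrix_construction}. The base case $\ell = 0$ is immediate: $T_0 = 1$ satisfies item~(1), and items~(2)--(3) hold trivially since $R^{(1)} = C^{(1)} = 0$. For the first inductive step ($\ell = 1$), I would trace the dynamics by hand from the initial play $(n,1)$. By \cref{c:1}, column~$1$ has non-zero entries only at $A_{11}=2$ and $A_{n1}=1$, so at round~$2$ the unique best row is $1$; repeated plays at $(1,1)$ add $2$ per round to $C_1$ and $3$ per round to $C_n$, so column~$n$ eventually (weakly) overtakes column~$1$ regardless of tie-breaking. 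The same bookkeeping forces the successive plays $(1,n)$ and $(n-1,n)$, and finally $(n-1,2) = T_1$. Only rows in $\{1,n-1,n\}$ and columns in $\{1,n\}$ have been played, so by \cref{c:1} rows $[2,n-2]$ and columns $[3,n-1]$ have zero cumulative utility; since $(1,n)$ is played at least once, $R_{n-1}^{(T_1)} \geq A_{n-1,n} = 4$.

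For the inductive step, assume the lemma holds at level $\ell-1$ and advance to level $\ell$. At round $T_{\ell-1}$ the play is $(n-\ell+1,\ell)$, and by hypothesis rows $[\ell, n-\ell]$ and columns $[\ell+1, n-\ell+1]$ have zero cumulative utility. By \cref{c:1}, each play of $(n-\ell+1,\ell)$ adds $4\ell-2$ to $R_\ell$ and $4\ell-3$ to $R_{n-\ell+1}$ (the two non-zero rows of column~$\ell$), and adds $4\ell-3$ to $C_\ell$ and $4\ell-4$ to $C_{n-\ell+2}$ (the two non-zero columns of row~$n-\ell+1$). Since $R_\ell^{(T_{\ell-1})} = 0$ while $R_{n-\ell+1}^{(T_{\ell-1})}>0$, the row player remains at $n-\ell+1$ for $\Theta(R_{n-\ell+1}^{(T_{\ell-1})})$ rounds before switching to $\ell$; tie-breaking can only delay the switch. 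I would then iterate this analysis through the four corners of the level-$(\ell-1)$ frame, $(\ell,\ell) \to (\ell, n-\ell+1) \to (n-\ell, n-\ell+1) \to (n-\ell, \ell+1) = T_\ell$. \cref{prop:larger_smaller} is the key tool: it guarantees that at each transition the only candidates for the next best response are the current corner and the next corner on the spiral, so no tie-breaking rule can derail the trajectory.

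Because every play between $T_{\ell-1}$ and $T_\ell$ lies in rows $\{\ell, n-\ell, n-\ell+1\}$ and columns $\{\ell, \ell+1, n-\ell+1\}$, \cref{c:1} implies that rows $[\ell+1, n-\ell-1]$ and columns $[\ell+2, n-\ell]$ gain nothing in this phase, which combined with the inductive hypothesis establishes items~(2)--(3) at level $\ell$. For the multiplicative bound~\eqref{eq:recursive_theorem}, the crucial phase is $(n-\ell, n-\ell+1)$, which adds $A_{n-\ell, n-\ell+1}=4\ell$ per round to $R_{n-\ell}$ over a number of rounds determined by the accumulated $C_{n-\ell+1}$ from the preceding phases; propagating the scaling through the four consecutive corner transitions yields a product of four consecutive integers, giving the factor $4\ell(4\ell-1)(4\ell-2)(4\ell-3)$ times the inductive analog of the entry-point utility.

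The main obstacle is the tie-breaking requirement: the bound must hold for \emph{every} tie-breaking rule, so the argument must certify that at each of the four corner transitions no alternative best response can route the play into the interior of the level-$\ell$ sub-matrix (whose cumulative utilities are zero) or back to an already visited corner in a way that violates items~(2)--(3). \cref{prop:larger_smaller}, which says each row and column has at most one strictly larger non-zero entry, is what restricts the set of candidate best responses to the outer frame; the detailed enumeration of ties at each of the four transitions is the technical core of the proof and fixes the exact constants in the multiplicative recurrence.
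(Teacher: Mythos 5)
Your proposal follows essentially the same route as the paper: induction on $\ell$ with a four-phase analysis per level of the spiral, where each corner-to-corner phase lasts roughly as long as the previously accumulated cumulative utility and multiplies it by one of the factors $4\ell-3,\,4\ell-2,\,4\ell-1,\,4\ell$, with \cref{c:1}, \cref{prop:larger_smaller} and the alternating-switch structure ruling out any tie-breaking escape into the zero-utility interior. This matches the paper's decomposition into Propositions~\ref{p:aux1}--\ref{p:aux4}, so the approach and the resulting recurrence are the same.
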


Using \cref{thm:main} we are able to establish that for a very long period of time the row player has never played row $\frac{n}{2}$ and that the column player has never played column $\frac{n}{2}+1$.

%\begin{lemma} \label{l:1}
%Let the identical-payoff game with matrix $A$ and consider a fictitious play sequence $(i_1,j_1),\ldots,(i_T,j_T)$ with $(i_1,j_1) = (n,1)$. If $T \leq 256^n (n!)^4$ then $(\hat{x}_{n/2},\hat{y}_{n/2 +1}) =(0,0)$ where $(\hat{x},\hat{y}):= \left(\sum_{t=1}^Te_{i_t}/T,\sum_{t=1}^Te_{j_t}/T\right)$.
%\end{lemma}

\begin{lemma}[Exponential Lower Bound] \label{l:1}
Let an identical-payoff game with matrix $A$ and a fictitious play sequence $( i^{(1)},j^{(1)} ), \ldots, (i^{(t)},j^{(t)} ), \ldots$ with $( i^{(1)},j^{(1)} ) = ( n,1 )$. In case $(i^{(T)},j^{(T)}) = (\frac{n}{2}, \frac{n}{2} + 1)$ then $T \geq \Omega( 4^n ((n/2-2)!)^4)$.
\end{lemma}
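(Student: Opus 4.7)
The plan is to reduce the statement to an estimate on $T_{n/2-1}$, the milestone round supplied by \cref{thm:main} at the innermost recursion level, and then iterate the growth formula \eqref{eq:recursive_theorem} to obtain a super-exponential lower bound on that milestone. First I would show that any round $T$ at which $(n/2, n/2+1)$ is played satisfies $T \geq T_{n/2-1}$. By item~\ref{thm:main_item2} of \cref{thm:main} at $\ell = n/2-1$, row $n/2$ has zero cumulative utility at round $T_{n/2-1}$; but by \cref{c:1} the only nonzero entries of row $n/2$ lie in columns $n/2$ and $n/2+1$, which forces the column player to have never played column $n/2+1$ in rounds $1, \ldots, T_{n/2-1}-1$. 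In particular the Nash profile $(n/2, n/2+1)$ cannot have been played before $T_{n/2-1}$, so $T \geq T_{n/2-1}$.

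Next I would telescope the recursion \eqref{eq:recursive_theorem}. Iterating from the base case $R_{n-1}^{(T_1)} \geq 4$ through $\ell = 2, 3, \ldots, n/2 - 1$, and writing each growth factor as $4k(4k-1)(4k-2)(4k-3) = (4k)!/(4k-4)!$, the consecutive factorials cancel telescopically to produce a lower bound
\[
4 \prod_{k=2}^{n/2-1} \frac{(4k)!}{(4k-4)!} \;=\; \frac{(2n-4)!}{6}
\]
on the cumulative utility of row $n/2+1$ at round $T_{n/2-1}$. Since $R_r^{(t)} \leq (t-1) \cdot \|A\|_\infty$ and by \cref{c:2} the maximum entry of $A$ is $O(n)$, I conclude $T_{n/2-1} = \Omega((2n-4)!/n)$. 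A routine Stirling comparison---using $(2n-4)! \sim \binom{2n-4}{n-2}((n-2)!)^2 \sim 4^{n-2}((n-2)!)^2/\sqrt{n}$ together with the trivial estimate $(n-2)! \geq (n/2-1)^{n/2}(n/2-2)!$---then shows that $(2n-4)!/n$ dominates $4^n ((n/2-2)!)^4$ up to polynomial factors, closing out the claimed $\Omega\!\left(4^n ((n/2-2)!)^4\right)$ lower bound on $T$.

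The hardest step is correctly interpreting and applying the recursion \eqref{eq:recursive_theorem}. Read at face value, its right-hand side $R_{n-\ell}^{(T_{\ell-1})}$ vanishes, because row $n-\ell$ lies in the zero-utility range $[\ell, n-\ell]$ at time $T_{\ell-1}$ by item~\ref{thm:main_item2}, which would make the inequality vacuous. The intended meaning must relate $R_{n-\ell}^{(T_\ell)}$ to the \emph{nonzero} cumulative utility $R_{n-\ell+1}^{(T_{\ell-1})}$ of the previously-active outer row; intuitively, the factor $4\ell(4\ell-1)(4\ell-2)(4\ell-3)$ per spiral layer counts how many times fictitious play must replay the outer row before its cumulative best-response preference flips and the dynamics drifts one step inward toward the NE. Once this recursion is pinned down with the correct index, the factorial telescoping and the Stirling comparison are mechanical.
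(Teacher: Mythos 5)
Your proposal is correct and follows the same skeleton as the paper's proof: both arguments rest entirely on \cref{thm:main}, first showing via item~2 (row $\tfrac{n}{2}$ has zero cumulative utility at $T_{n/2-1}$, and by \cref{c:1} its only nonzero entries sit in columns $\tfrac{n}{2}$ and $\tfrac{n}{2}+1$) that the profile $(\tfrac{n}{2},\tfrac{n}{2}+1)$ cannot appear before $T^\star := T_{n/2-1}$, and then unrolling the recursion \eqref{eq:recursive_theorem}. Two local differences are worth recording. First, the conversion from a cumulative-utility bound to a time bound: the paper argues that after $T^\star$ the payoff vector $Ae_{n/2}$ has entries $2n-2$ and $2n-3$ differing by exactly one, so it takes at least $R^{(T^\star)}_{n/2+1}$ further rounds before row $\tfrac{n}{2}$ can become a best response, giving $T \geq R^{(T^\star)}_{n/2+1}$ directly; you instead bound $T^\star$ itself via $R_r^{(t)} \leq (t-1)\max_{ij}A_{ij} = (t-1)(2n-2)$, which is more elementary, loses only a factor of $n$, and still comfortably dominates the target since your exact telescoping yields $(2n-4)!/6$ rather than the paper's cruder product bound $4\cdot 16^{n/2-1}((n/2-2)!)^4$; your Stirling comparison closing the gap checks out. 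Second, you correctly diagnosed that \eqref{eq:recursive_theorem} as displayed is vacuous (its right-hand side $R_{n-\ell}^{(T_{\ell-1})}$ is zero by item~2 of \cref{thm:main}) and that the intended inequality is $R_{n-\ell}^{(T_\ell)} \geq 4\ell(4\ell-1)(4\ell-2)(4\ell-3)\,R_{n-\ell+1}^{(T_{\ell-1})}$, which is exactly what the final line of the paper's proof of \cref{thm:main} actually establishes; with that reading your telescoping is the right one and the argument is complete.
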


\begin{proof}
Based on \cref{thm:main}, we can guarantee the existence of a round $T^\star:= T_{n/2-1}$ when the players choose the strategy profile $(\frac{n}{2}+1, \frac{n}{2})$ for the first time. In addition, at round $T^\star$, it holds that $R^{(T^{\star})}_{n/2-1} > 0$ and $C_{n/2+1}^{(T^\star)} = R_{n/2}^{(T^\star)} = 0$. The latter condition ensures that the strategy profile $(\frac{n}{2}, \frac{n}{2}+1)$ has not been played up to time $T^\star$.

As indicated by \cref{c:1}, row $\frac{n}{2}$ has non-zero entries at columns $\frac{n}{2}$ and $\frac{n}{2}+1$. Therefore, if the cumulative utilities $R_{n/2}^{(T^\star)}$ at time $T^\star$ is zero, this implies that neither of these columns has been chosen up to that point. By the same reasoning, column $\frac{n}{2}+1$ has a non-zero entry only at row $\frac{n}{2}$, indicating that this row has not been chosen as well.

In order to continue, we require an estimate of the duration during which the strategy profile $(\frac{n}{2}+1, \frac{n}{2})$ will be played. \cref{c:1} guarantees that the utility vector of the row player is the following.
\begin{equation} \label{eq:aux1}
A e_{\frac{n}{2}}
=
(0,\ldots,0,\underbrace{2n-2}_{\frac{n}{2}},\underbrace{2n-3}_{\frac{n}{2}+1},0,\ldots,0)
\end{equation}

We can combine this information with the fact that $R^{(T^{\star})}_{n/2+1} \neq 0$, $R^{(T^{\star})}_{n/2} = 0$, and that the respective entries in the payoff vector \eqref{eq:aux1} of the row player differ by exactly one. This allows us to conclude that it will take at least $R^{(T^{\star})}_{\frac{n}{2}+1}$ iterations for the cumulative utilities to become equal, i.e $R_{\frac{n}{2}+1} = R_{\frac{n}{2}}$. Therefore, the lower bound on the number of iterations holds regardless of the tie-breaking rule.

Now, if at any time $T$ the agents play the strategy profile $(i^{(T)}, j^{(T)}) = (\frac{n}{2},\frac{n}{2}+1)$, we can conclude that $T \geq R^{(T^{\star})}_{\frac{n}{2}+1}$. Using \cref{eq:recursive_theorem} of \cref{thm:main} we obtain:
\[
T \geq 
R_{\frac{n}{2} - 1}^{(T^{\star})}
\geq 
16^{\frac{n}{2}-1} 
\left(
\prod\limits_{\ell=2}^{\frac{n}{2}-1} (\ell - 1) \right)^4 R_{n-1}^{(T_1)}
\geq  
\Omega
\left( 
4^n 
\left(
\left(
\frac{n}{2} - 2
\right)!
\right)^4
\right)
\]
\end{proof}
Now that we have established the necessary technical result in \cref{l:1}, we are ready to present the proof of \cref{t:main}.

\begin{proof}[Proof of Theorem~\ref{t:main}]
Let $T^\star \geq 1$ denote the first time at which $\left( i^{(t)},j^{(t)} \right) = (\frac{n}{2}, \frac{n}{2} + 1)$. For any $t \leq T^{\star} - 1$, it holds that $\hat{x}^{(t)}_{ \frac{n}{2} } = \hat{y}^{(t)}_{\frac{n}{2}+1} =0$. Thus, \cref{t:2} implies that $(\hat{x}^{(t)}, \hat{y}^{(t)})$ is not an approximate NE. 
On the other hand, at each round $t \geq T^\star$ we are ensured that $\hat{x}^{(t)}_{\frac{n}{2}}, \hat{y}^{(t)}_{\frac{n}{2}+1}$ converges to $1$ with rate $1/t$. Applying \cref{t:2} for $\epsilon := \sqrt{\epsilon}$ we get that if $(\hat{x}^{(t)},\hat{y}^{(t)})$ is an $\epsilon$-NE then $t \geq T^\star + \frac{1}{n\sqrt{\epsilon}}\geq \Omega\left( 4^n ((n/2-2)!)^4\right) + \frac{1}{n\sqrt{\epsilon}}$. It is evident that, even with a uniformly random initialization, the probability of selecting $(n,1)$ as the starting point for fictitious play is $1/n^2$ and so the claim of the theorem follows.
\end{proof}

%\begin{lemma}
%Let the identical-payoff game with matrix $A$ and consider a fictitious play sequence $(i_1,j_1),\ldots,(i_T,j_T)$ with $(i_1,j_1) = (n,1)$. Then the strategy profile $(\hat{x},\hat{y}):= \left(\sum_{t=1}^Te_{i_t}/T,\sum_{t=1}^Te_{j_t}/T\right)$ admits exploitability greater than $256^{2n} (n!)^8/n^2T^2$.
%\end{lemma}

%\begin{proof}
%By \cref{l:1} we are ensured that both $\hat{x}_{n/2}$ and $\hat{y}_{n/2+1}$ are less than $1 - 256^n (n!)^4/T$. Applying \cref{t:2} with $\epsilon := 256^n (n!)^4/nT$ we get that exploitability of $(\hat{x}_{n/2},\hat{y}_{n/2 +1})$ is at least 
%$256^{2n} (n!)^8/n^2T^2$.
%\end{proof}

%\begin{theorem}
%Let the identical-payoff game with matrix $A$ and consider a fictitious play sequence $(i_1,j_1),\ldots,(i_T,j_T)$ with $(i_1,j_1) = (n,1)$. If the strategy profile $(\hat{x},\hat{y}):= \left(\sum_{t=1}^Te_{i_t}/T,\sum_{t=1}^Te_{j_t}/T\right)$ is an $\epsilon$-approximate Nash Equilibrium then $T \geq 256^n (n!)^4/n\sqrt{\epsilon}$.
%\end{theorem}

\begin{figure}
\begin{equation}
\begin{bNiceMatrix}[first-col,first-row]
& j &  & & & \\
i' & \tikzmarknode{A}{} & & & & \tikzmarknode{B}{} \\
& & & & & \\
& & & & &  \\
i & \tikzmarknode{C}{} & & & & \tikzmarknode{D}{} \\
\end{bNiceMatrix}
\begin{tikzpicture}[overlay,remember picture]
\draw [->, green!50, very thick] (A) to[out=0,in=180] (B);
\draw [->, green!50, very thick] (C) to[out=90,in=-90] (A);
\end{tikzpicture}
\quad
\begin{bNiceMatrix}[first-col,first-row]
& j &  & & & \\
i' & \tikzmarknode{A}{} & & & & \tikzmarknode{B}{} \\
& & & & & \\
& & & & & \tikzmarknode{D}{} \\
i & \tikzmarknode{C}{} & & & &  \\
\end{bNiceMatrix}
\begin{tikzpicture}[overlay,remember picture]
\draw [->, green!50, very thick] (A) to[out=0,in=180] (B);
\draw [->, green!50, very thick] (C) to[out=90,in=-90] (A);
\draw [->, green!50, very thick] (B) to[out=-90,in=90] (D);
\end{tikzpicture}
\quad
\begin{bNiceMatrix}[first-col,first-row]
& j &  & & & \\
i' & \tikzmarknode{A}{} & & & & \tikzmarknode{B}{} \\
& & & & & \\
& & \tikzmarknode{E}{} & & & \tikzmarknode{D}{} \\
i & \tikzmarknode{C}{} & & & &  \\
\end{bNiceMatrix}
\begin{tikzpicture}[overlay,remember picture]
\draw [->, green!50, very thick] (A) to[out=0,in=180] (B);
\draw [->, green!50, very thick] (C) to[out=90,in=-90] (A);
\draw [->, green!50, very thick] (B) to[out=-90,in=90] (D);
\draw [->, green!50, very thick] (D) to[out=180,in=0] (E);
\end{tikzpicture}
\quad
\begin{bNiceMatrix}[first-col,first-row]
& j &  & & & \\
i' & \tikzmarknode{A}{} & & & & \tikzmarknode{B}{} \\
& & \tikzmarknode{F}{}& & \tikzmarknode{G}{}& \\
& & \tikzmarknode{E}{} & & & \tikzmarknode{D}{} \\
i & \tikzmarknode{C}{} & & & &  \\
\end{bNiceMatrix}
\begin{tikzpicture}[overlay,remember picture]
\draw [->, green!50, very thick] (A) to[out=0,in=180] (B);
\draw [->, green!50, very thick] (C) to[out=90,in=-90] (A);
\draw [->, green!50, very thick] (B) to[out=-90,in=90] (D);
\draw [->, green!50, very thick] (D) to[out=180,in=0] (E);
\draw [->, green!50, very thick] (E) to[out=90,in=-90] (F);
\end{tikzpicture}
\quad
\begin{bNiceMatrix}[first-col,first-row]
& j &  & & & \\
i' & \tikzmarknode{A}{} & & & & \tikzmarknode{B}{} \\
& & \tikzmarknode{F}{}& & \tikzmarknode{G}{}& \\
& & \tikzmarknode{E}{} & & & \tikzmarknode{D}{} \\
i & \tikzmarknode{C}{} & & & &  \\
\end{bNiceMatrix}
\begin{tikzpicture}[overlay,remember picture]
\draw [->, green!50, very thick] (A) to[out=0,in=180] (B);
\draw [->, green!50, very thick] (C) to[out=90,in=-90] (A);
\draw [->, green!50, very thick] (B) to[out=-90,in=90] (D);
\draw [->, green!50, very thick] (D) to[out=180,in=0] (E);
\draw [->, green!50, very thick] (E) to[out=90,in=-90] (F);
\draw [->, green!50, very thick] (F) to[out=0,in=180] (G);
\end{tikzpicture}
\notag
\end{equation}
\caption{This figure shows the spiral trajectory generated by the fictitious play dynamic in a game with matrix $A$. The starting point is the lower-left element, and as the dynamic progresses, it visits all non-zero elements in ascending order of value.}
\label{fig:spiral_trajectory}
\end{figure}
\subsection{Strategy Switches and Proof of Lemma~\ref{thm:main}}

\label{section:lemma_proof}
This subsection is dedicated to presenting the proof of Lemma~\ref{thm:main}. However, before delving into the proof, we first provide additional statements that serve to shed light on the sequence of strategy profiles generated by fictitious play. 

According to \cref{prop:larger_smaller}, if the strategy $(i,j)$ is chosen by fictitious play, then either in row $i$ or in column $j$, there is at most one element greater than $A_{ij}$. Therefore, in a subsequent round, fictitious play will necessarily choose this element as its strategy. Let $i'$ be the row where the element of greater value is located. Intuitively, we can imagine that as fictitious play continues to play the same strategy $(i,j)$, the payoff vector $A e_j$ is repeatedly added to the cumulative vector of the row player. Since this vector has a greater value at coordinate $i'$ than at coordinate $i$, a strategy switch will eventually occur. We state this observation in Proposition \ref{p:strategy_switch_aux} and defer the proof to the Appendix.

\begin{proposition}[Strategy Switch] \label{p:strategy_switch_aux}
Let $( i^{(t)}, j^{(t)} )$ be a strategy selected by fictitious play at round $t$, and $( i^{(t)}, j^{(t)} ) \neq ( \frac{n}{2}, \frac{n}{2} )$. Then, in a subsequent round, fictitious play will choose the strategy of greater value that is either on row $i^{(t)}$ or column $j^{(t)}$.
\end{proposition}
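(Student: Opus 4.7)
The plan is to invoke Proposition \ref{prop:larger_smaller} to identify a unique candidate for the ``next'' strategy, and then to run a monotone-growth argument on the cumulative utility vectors to certify that fictitious play is forced to reach this candidate in some subsequent round.

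First, I would note that every non-zero entry of $A$ belongs to the spiral of Observation \ref{c:1}, and that $(i^{(t)}, j^{(t)})$ is distinct from the pure Nash equilibrium $(n/2, n/2+1)$ (Corollary \ref{c:unique_nash}); hence $A_{i^{(t)} j^{(t)}}$ is strictly below the global maximum, and the spiral structure produces a non-zero neighbor of strictly greater value lying either on row $i^{(t)}$ or on column $j^{(t)}$. By Proposition \ref{prop:larger_smaller}, such a greater neighbor $(i^\ast, j^\ast)$ is unique. I would then split into the two symmetric cases $i^\ast = i^{(t)}$ (greater neighbor on the same row) and $j^\ast = j^{(t)}$ (greater neighbor on the same column), and work out only the former in detail.

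For the case $i^\ast = i^{(t)}$: uniqueness forces $A_{i' j^{(t)}} \leq A_{i^{(t)} j^{(t)}}$ for every row $i'$. Hence, while fictitious play keeps playing $(i^{(t)}, j^{(t)})$, the row player's cumulative utility coordinate $R_{i^{(t)}}$ grows at a rate at least as large as every other $R_{i'}$; since $R_{i^{(t)}}$ was already the strict maximum at round $t$, it remains the strict maximum afterwards, and the row player does not switch. On row $i^{(t)}$, by contrast, $A_{i^{(t)} j^\ast} > A_{i^{(t)} j^{(t)}}$, so the column player's gap $C_{j^\ast} - C_{j^{(t)}}$ strictly increases by the positive quantity $A_{i^{(t)} j^\ast} - A_{i^{(t)} j^{(t)}}$ per round. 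After finitely many rounds this forces $C_{j^\ast} > C_{j^{(t)}}$, at which point $j^\ast$ is the unique best response for the column player and the play switches to $(i^{(t)}, j^\ast) = (i^\ast, j^\ast)$.

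The main delicacy I anticipate is twofold. First, I must confirm that only one player switches while the other remains pinned; this is exactly what the uniqueness in Proposition \ref{prop:larger_smaller} supplies, by ruling out a simultaneous greater neighbor in the perpendicular direction. Second, the argument must be robust to tie-breaking: at the precise round when $C_{j^\ast}$ first equals $C_{j^{(t)}}$, the column player may, depending on the rule, remain at $j^{(t)}$, but one round later the strict inequality $C_{j^\ast} > C_{j^{(t)}}$ forces the switch unconditionally. This robustness is what will make the conclusion independent of the tie-breaking rule.
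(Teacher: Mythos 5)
Your proposal is correct and follows essentially the same route as the paper: both invoke Proposition~\ref{prop:larger_smaller} to isolate the unique greater-valued neighbor in row $i^{(t)}$ or column $j^{(t)}$, argue that the player in the perpendicular direction has no incentive to move, and let the per-round accumulation of the payoff gap force the switch after finitely many rounds, independently of tie-breaking. The only nitpick is your claim that $R_{i^{(t)}}$ is a \emph{strict} maximum at round $t$ (best-response only gives a weak maximum); the argument is saved because $R_{i^{(t)}}$ grows strictly faster than every other coordinate (the two non-zero entries of a column are distinct), so it is strictly maximal from round $t+1$ onward.
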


As previously hinted, the structure of the payoff matrix dictates the sequence of strategy profiles chosen by fictitious play. The sequence of strategy switches alternates between rows and columns. Continuing with the example from the previous paragraph, let us assume that the row player made the most recent strategy switch from $(i,j)$ to $(i',j)$. This implies that the element $A_{i'j}$ is greater than $A_{ij}$, as otherwise a strategy switch would not have taken place, as established in \cref{p:strategy_switch_aux}. Moreover, by \cref{prop:larger_smaller}, we know that there must be an element of greater value in either row $i'$ or column $j$. Since $A_{i'j}$ is greater than $A_{ij}$, any element of greater value must be located in row $i'$. The same reasoning applies for the case where the column player was the last to switch strategies. We can summarize this observation by stating that if one player is the last to switch, then the other player must switch next. We formally state this in \cref{p:strategy_switch} and defer the proof to the Appendix.

\begin{corollary}[Successive Strategy Switches] \label{p:strategy_switch}
Let $t$ be a round in which a player changes their strategy. Then exactly one of the following statements is true:

\begin{enumerate}
\item If the row player changes their strategy at round $t$, i.e. $i^{(t)} \neq i^{(t-1)}$, then the column player can only make the next strategy switch.

\item If the column player changes their strategy at round $t$, i.e. $j^{(t)} \neq j^{(t-1)}$, then the row player can only make the next strategy switch.
\end{enumerate}
\end{corollary}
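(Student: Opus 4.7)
The plan is to prove Corollary~\ref{p:strategy_switch} by a short case analysis that combines the two previously stated facts: Proposition~\ref{p:strategy_switch_aux} forces fictitious play's next change of strategy to land on a strictly greater non-zero entry in the current row or column, and Proposition~\ref{prop:larger_smaller} guarantees that there is at most one such greater entry in row and column combined. The entire task of the proof is then to \emph{locate} that greater entry: if the row player just switched at round $t$, I will show the greater neighbor cannot sit on the current column, so it must sit on the current row, and hence the next switch has to be a column-player switch.

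First I will treat Case~1, i.e., $i^{(t)} \neq i^{(t-1)}$ and $j^{(t)} = j^{(t-1)}$. From the best-response optimality at rounds $t-1$ and $t$, namely $R^{(t-1)}_{i^{(t-1)}} \geq R^{(t-1)}_{i^{(t)}}$ and $R^{(t)}_{i^{(t)}} \geq R^{(t)}_{i^{(t-1)}}$, combined with the one-round update $R^{(t)}-R^{(t-1)} = A e_{j^{(t-1)}}$, I deduce
\[
A_{i^{(t)}\, j^{(t)}} \;\geq\; A_{i^{(t-1)}\, j^{(t)}}.
\]
Observation~\ref{c:1} lists the non-zero entries of $A$ as the distinct consecutive integers $1,2,\dots,2n-2$, while Proposition~\ref{p:strategy_switch_aux} applied at the preceding switch rules out the degenerate case where both sides above are zero. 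Hence the inequality is strict, and $A_{i^{(t)}\, j^{(t)}}$ is the strictly larger of the (at most two) non-zero entries on column $j^{(t)}$; in particular, no entry on column $j^{(t)}$ exceeds it.

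Now I apply Proposition~\ref{prop:larger_smaller} at the cell $(i^{(t)}, j^{(t)})$: among all non-zero neighbors in its row and column combined, at most one is larger. By the previous paragraph that larger one cannot lie on column $j^{(t)}$, so it must lie on row $i^{(t)}$ at some column $j' \neq j^{(t)}$. Proposition~\ref{p:strategy_switch_aux} then routes fictitious play's next switch to $(i^{(t)}, j')$, which keeps the row coordinate and changes the column coordinate; by definition this is a column-player switch. Case~2, where $j^{(t)} \neq j^{(t-1)}$ instead, follows by an entirely symmetric argument with the roles of rows and columns, and of $R$ and $C$, interchanged.

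The step I expect to be the most delicate is the strictness of $A_{i^{(t)}\, j^{(t)}} > A_{i^{(t-1)}\, j^{(t)}}$, because the raw optimality conditions only yield a non-strict inequality and arbitrary tie-breaking could in principle allow equality. Ruling out the borderline case cleanly requires invoking both the distinctness of non-zero values from Observation~\ref{c:1} and the guarantee from Proposition~\ref{p:strategy_switch_aux} that the predecessor of a switch sits on a non-zero cell; once these are in place, the remainder of the argument is a direct quotation of the two preceding lemmas.
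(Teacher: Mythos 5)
Your proposal is correct and follows essentially the same route as the paper's proof: both argue that the entry just switched to dominates its column (resp.\ row), so by Proposition~\ref{prop:larger_smaller} the unique greater neighbor must lie in the current row (resp.\ column), and Proposition~\ref{p:strategy_switch_aux} then forces the next switch to be made by the other player. The only difference is cosmetic: you re-derive the domination fact from the best-response inequalities and the distinctness of non-zero entries, whereas the paper reads it off directly from the conclusion of Proposition~\ref{p:strategy_switch_aux}.
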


Applying the same concept, we can observe that starting from the lower-left corner, fictitious play follows a spiral trajectory. The resulting spiral is illustrated in \cref{fig:spiral_trajectory}. We now proceed with the main result of this section, \cref{thm:main}.

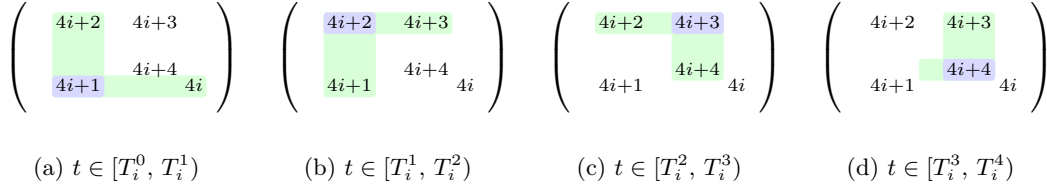
\begin{figure}
\centering
\begin{subfigure}[t]{0.20\textwidth}
\begin{equation*}
\begin{pNiceArray}{ccccccc}[small, last-col, margin = 5pt, create-medium-nodes]
\CodeBefore [create-cell-nodes]
    \begin{tikzpicture} [name suffix = -medium]
        \node [highlight_col = (2-3) (5-3)] {} ;
        \node [highlight_row = (6-3) (6-7)] {} ;
        \node [highlight_cen = (6-3)] {} ;
    \end{tikzpicture}
\Body
& & & & & & & \\
& & 4i+2 & $ $ & & 4i+3 &  & \\
& & \text{$\qquad$}  & & & & & \\
& & & & & & & \\
& & & & & 4i+4 & & \\
& & 4i+1  & & &  & 4i  &  \\
& & & & & & & \\
\end{pNiceArray}
\end{equation*}
\caption{$t \in [T_i^{0}, \, T_i^{1})$}
\label{fig:active_1}
\end{subfigure}
\hspace{1em}
\begin{subfigure}[t]{0.20\textwidth}
\begin{equation*}
\begin{pNiceArray}{ccccccc}[small, last-col, margin = 5pt, create-medium-nodes]
\CodeBefore [create-cell-nodes]
    \begin{tikzpicture} [name suffix = -medium]
        \node [highlight_row = (2-3) (2-6)] {} ;
        \node [highlight_col = (2-3) (6-3)] {} ;
        \node [highlight_cen = (2-3)] {} ;
    \end{tikzpicture}
\Body
& & & & & & & \\
& & 4i+2 & $ $ & & 4i+3 &  & \\
& & \text{$\qquad$}  & & & & & \\
& & & & & & & \\
& & & & & 4i+4 & & \\
& & 4i+1  & & &  & 4i  &  \\
& & & & & & & \\
\end{pNiceArray}
\end{equation*}
\caption{$t \in [T_i^{1}, \, T_i^{2})$}
\label{fig:active_2}
\end{subfigure}
\hspace{1em}
\begin{subfigure}[t]{0.2\textwidth}
\begin{equation*}
\begin{pNiceArray}{ccccccc}[small, last-col, margin = 5pt, create-medium-nodes]
\CodeBefore [create-cell-nodes]
    \begin{tikzpicture} [name suffix = -medium]
        \node [highlight_row = (2-3) (2-6)] {} ;
        \node [highlight_col = (2-6) (5-6)] {} ;
        \node [highlight_cen = (2-6)] {} ;
    \end{tikzpicture}
\Body
& & & & & & & \\
& & 4i+2 & $ $ & & 4i+3 &  & \\
& & \text{$\qquad$}  & & & & & \\
& & & & & & & \\
& & & & & 4i+4 & & \\
& & 4i+1  & & &  & 4i  &  \\
& & & & & & & \\
\end{pNiceArray}
\end{equation*}
\caption{$t \in [T_i^{2}, \, T_i^{3})$}
\label{fig:active_3}
\end{subfigure}
\hspace{1em}
\begin{subfigure}[t]{0.2\textwidth}
\begin{equation*}
\begin{pNiceArray}{ccccccc}[small, last-col, margin = 5pt, create-medium-nodes]
\CodeBefore [create-cell-nodes]
    \begin{tikzpicture} [name suffix = -medium]
        \node [highlight_row = (5-4) (5-6)] {} ;
        \node [highlight_col = (2-6) (5-6)] {} ;
        \node [highlight_cen = (5-6)] {} ;
    \end{tikzpicture}
\Body
& & & & & & & \\
& & 4i+2 & $ $ & & 4i+3 &  & \\
& & \text{$\qquad$}  & & & & & \\
& & & & & & & \\
& & & & & 4i+4 & & \\
& & 4i+1  & & &  & 4i  &  \\
& & & & & & & \\
\end{pNiceArray}
\end{equation*}
\caption{$t \in [T_i^{3}, \, T_i^{4})$}
\label{fig:active_4}
\end{subfigure}
\caption{The figure illustrates the active row and column player for each time period, with the played strategy highlighted in purple and the corresponding payoff vectors of the row and column player highlighted in green. Additionally, the time period of each played strategy is indicated for clarity.}
\end{figure}
\begin{proof}
Since $(i^{(1)},j^{(1)}) = (n,1)$ all the above claims trivially for $T_0=1$. We assume that the claim holds for $i$ and will now establish it inductively for $i+1$.

By the induction hypothesis, agents play strategies $(n -i , i+1)$ at round $T^{0}_i := T_i $. Furthermore, row $n-i$ admits cumulative utility of $R^{(T^{0}_i)}_{n-i}$ while row $i  + 1$ admits cumulative utility of $R^{(T^{0}_i)}_{i+1} = 0$. According to \cref{c:1}, the payoff vectors of the row and column agent are highlighted in \cref{fig:active_1}. By combining these facts, we can establish the following.

% According to \cref{c:1}, the payoff vectors of the row and column agent for the strategy profile $(n-i, i+1)$ are respectively,

% \[ (0,\ldots,0,\underbrace{4i+2}_{i+1},0,\ldots,0,\underbrace{4i+1}_{n-i},0,\ldots,0)
% \quad
% \text{and}
% \quad
% (0,\ldots,0,\underbrace{4i+1}_{i+1},0,\ldots,0,\underbrace{4i}_{n-i+1},0,\ldots,0)
% \]

% \begin{proposition} \label{p:aux1}
% There exists a round $T_i^1 > T_i^0$ at which $i)$ the strategy profile is $(i+1,i+1)$ for the first time, $ii)$ for all rounds $T_i^0 \leq t \leq T_i^1 -1$, the strategy profile is $(n-i,i+1)$ iii) Column $i+1$ admits cumulative utility $C_{i+1}^{T_i^1} \geq (4i + 1) \cdot (R_{i+1}^{T_i^0}+1)$ $iv)$ all rows $k \in \{i+2,n-i-1\}$ admit $R_{k}^{T_i^1} =0$ and all columns $k\in \{i+2,n-i\}$ admit $C_{k}^{T_i^1} =0$.
% \end{proposition}

% \label{prop:appendix_1}

\begin{proposition}[Abridged; Full Version in \cref{prop:appendix_1}] \label{p:aux1}
There exists a round $T^{1}_i > T^{0}_i$ at which
the strategy profile is $(i+1,i+1)$ for the first time, column $i+1$ admits cumulative utility $C_{i+1}^{(T_i^1)} \geq (4i + 1) \cdot (R_{i+1}^{(T_i^0)}+1)$,
and $C_{n-i}^{(T_i^1)} = 0$.
\end{proposition}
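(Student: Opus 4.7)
The plan is to follow the fictitious-play sequence step by step on the interval $[T_i^0,T_i^1)$, exploiting the fact that Observation~\ref{c:1} confines the dynamics to very few coordinates while the profile $(n-i,i+1)$ is repeated. Concretely,
\[
A e_{i+1} = (4i+2)\,e_{i+1} + (4i+1)\,e_{n-i}, \qquad e_{n-i}^\top A = (4i+1)\,e_{i+1}^\top + (4i)\,e_{n-i+1}^\top,
\]
so only $R_{i+1}$, $R_{n-i}$, $C_{i+1}$, $C_{n-i+1}$ can move during this interval, and by the inductive hypothesis the remaining coordinates in the active ranges $[i+1,n-i-1]$ of $R$ and $[i+2,n-i]$ of $C$ are all zero.

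The first step is to show that the profile stays at $(n-i,i+1)$ throughout $[T_i^0,T_i^1)$. Since $(n-i,i+1)$ is played at $T_i^0$, the quantities $R_{n-i}^{(T_i^0)}$ and $C_{i+1}^{(T_i^0)}$ are maxima of the respective cumulative-utility vectors. As long as the profile is repeated, $C_{i+1}$ gains $4i+1$ per round while $C_{n-i+1}$ gains only $4i$ and every other column is frozen, so column $i+1$ remains a best response. On the row side, $R_{i+1}$ gains $4i+2$ per round while $R_{n-i}$ gains only $4i+1$ and every other row is frozen, so Proposition~\ref{p:strategy_switch_aux} implies that the only possible deviation is the row player switching from $n-i$ to $i+1$.

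Second, I would locate $T_i^1$ by solving the catch-up inequality. Writing $r := R_{n-i}^{(T_i^0)}$ and counting $k$ rounds from $T_i^0$,
\[
R_{i+1}^{(T_i^0+k)} - R_{n-i}^{(T_i^0+k)} = k(4i+2) - \bigl(r + k(4i+1)\bigr) = k - r.
\]
Hence for $k \leq r$ row $n-i$ is still a best response (with at most a tie to be broken when $k=r$), whereas at $k = r+1$ row $i+1$ strictly dominates. Consequently $T_i^1 - T_i^0 \ge r+1$ in the worst case over tie-breaking rules. At round $T_i^1$ the row player switches to $i+1$ while the column player still plays $i+1$, producing the profile $(i+1,i+1)$ for the first time, since the inductive hypothesis prevented row $i+1$ from ever being a best response before $T_i^0$.

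The two numerical conclusions then follow by inspection. Column $i+1$ accrues exactly $4i+1$ in each of the at least $r+1$ rounds of the interval and is nonnegative at $T_i^0$, so $C_{i+1}^{(T_i^1)} \geq (4i+1)(R_{n-i}^{(T_i^0)}+1)$, which matches (and sharpens) the stated bound once one invokes $R_{i+1}^{(T_i^0)}=0$ from the inductive hypothesis. Meanwhile $e_{n-i}^\top A$ has no nonzero coordinate at column $n-i$, so $C_{n-i}$ does not change during the interval; combined with the inductive fact $C_{n-i}^{(T_i^0)}=0$, this gives $C_{n-i}^{(T_i^1)}=0$. In my view the main obstacle is the careful ruling-out of strategy switches driven by rows and columns outside $[i+1,n-i]$ that carry positive residues from earlier spiral iterations; these residues are handled by using the fact that $(n-i,i+1)$ was itself the best-response profile at $T_i^0$, so they are already dominated by $R_{n-i}^{(T_i^0)}$ and $C_{i+1}^{(T_i^0)}$ and are then outpaced by the only coordinates that grow during the interval.
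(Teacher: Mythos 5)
Your proposal is correct and follows essentially the same route as the paper's proof of Proposition~\ref{prop:appendix_1}: identify the payoff vectors $Ae_{i+1}$ and $e_{n-i}^{\top}A$ via Observation~\ref{c:1}, argue the profile is frozen at $(n-i,i+1)$ until the row player's catch-up after at least $R_{n-i}^{(T_i^0)}+1$ rounds, and read off the accrued utility of column $i+1$ and the vanishing of $C_{n-i}$. Your explicit best-response bookkeeping replaces the paper's appeal to Corollary~\ref{p:strategy_switch}, and you correctly read the bound's $R_{i+1}^{(T_i^0)}$ as the intended $R_{n-i}^{(T_i^0)}$, but these are presentational rather than substantive differences.
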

% \begin{proposition} \label{p:aux1}
% There exists a round $T^{1}_i > T^{0}_i$ at which
% \begin{enumerate*}[label=\roman*),itemjoin={\,}]
% \item 
% the strategy profile is $(i+1,i+1)$ for the first time,
% \item 
% for all rounds $t \in [ T_i^0, \, T_i^1 -1 ] $, the strategy profile is $(n-i,i+1)$,
% \item
% column $i+1$ admits cumulative utility $C_{i+1}^{T_i^1} \geq (4i + 1) \cdot (R_{i+1}^{(T_i^0)}+1)$,
% \item
% all rows $k \in [ (i+1) + 1, n-i-1 ] $ admit $R_{k}^{(T_i^1)} = 0$ and all columns $k \in [ i+2, n-i ]$ admit $C_{k}^{(T_i^1)} =0$.
% \end{enumerate*}
% \end{proposition}
By \cref{p:aux1}, at round $T_i^{1}$, the agents play strategies $(i+1,i+1)$. Furthermore, the cumulative utility of column $n-i$ equals to $C_{n-i}^{(T_i^1)} = 0$.
According to \cref{c:1}, the payoff vectors of the row and column agent are highlighted \cref{fig:active_2}. Combining these facts, we get the following:

% Hence the row and column agents receive the following respective utilities, 
%  \[ (0,\ldots,0,\underbrace{4i+2}_{i+1},0,\ldots,0,\underbrace{4i+1}_{n-i},0,\ldots,0)
%  \quad 
%  \text{and}
%  \quad
%  (0,\ldots,0,\underbrace{4i+2}_{i+1},0,\ldots,0,\underbrace{4i + 3}_{n-i},0,\ldots,0)
%  \]
% Proposition~\ref{p:aux1} ensures that the cumulative utility of column $n-i$ equals to $C_{n-i}^{T_i^1} = 0$. Using the latter we can establish \cref{p:aux2}.

% \begin{proposition}\label{p:aux2}
% There exists a round $T_i^2 > T_i^1$ at which $i)$ the strategy profile is $(i+1,n-i)$ for the first time, $ii)$ for all rounds $T_i^1 \leq t \leq T_i^2 -1$, the strategy profile is $(i+1,i+1)$ iii) Row $i+1$ admits cumulative utility $R_{i+1}^{T_i^2} \geq (4i + 2) \cdot (C_{i+1}^{T_i^1}+1)$ $iv)$ all rows $k \in \{i+2,n-i-1\}$ admit $R_{k}^{T_i^2} =0$ and all columns $k\in \{i+2,n-i-1\}$ admit $C_{k}^{T_i^2} =0$.
% \end{proposition}

\begin{proposition}[Abridged; Full Version in \cref{prop:appendix_2}]
\label{p:aux2}
There exists a round $T^{2}_i > T^{1}_i$ at which
the strategy profile is $(i+1,n-i)$ for the first time, row $i+1$ admits cumulative utility $R_{i+1}^{(T_i^2)} \geq (4i + 2) \cdot C_{i+1}^{(T_i^1)}$, and $R_{n-i-1}^{(T_i^2)} = 0$.
\end{proposition}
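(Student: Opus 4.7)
The plan is to analyze the interval $[T_i^1, T_i^2)$ using the structural information supplied by \cref{p:aux1}: at round $T_i^1$ the agents play $(i+1, i+1)$, the cumulative utility $C_{n-i}^{(T_i^1)}$ equals $0$, and the most recent strategy switch was performed by the row player (from $n-i$ to $i+1$). My first step is to invoke \cref{p:strategy_switch}: since the row player just switched, the next switch must be performed by the column player, which means the row player keeps choosing row $i+1$ throughout the entire interval $[T_i^1, T_i^2)$.

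Next I would track the evolution of the column player's cumulative utility vector under this constant row choice. Each round $C$ is incremented by $e_{i+1}^{\top} A$, and by \cref{c:1} this row has non-zero entries only at column $i+1$ (value $4i+2$) and column $n-i$ (value $4i+3$). Writing $\tau = t - T_i^1$, this gives $C_{n-i}^{(t)} = \tau(4i+3)$ and $C_{i+1}^{(t)} = C_{i+1}^{(T_i^1)} + \tau(4i+2)$, while every other column retains the value it had at round $T_i^1$. Since $C_{i+1}^{(T_i^1)}$ is the current best-response value, the column player switches to $n-i$ only once $\tau(4i+3) \geq C_{i+1}^{(T_i^1)} + \tau(4i+2)$, i.e.\ $\tau \geq C_{i+1}^{(T_i^1)}$. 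Hence the first round $T_i^2$ at which $j^{(T_i^2)} = n-i$ satisfies $T_i^2 - T_i^1 \geq C_{i+1}^{(T_i^1)}$, uniformly in the tie-breaking rule.

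To obtain the lower bound on $R_{i+1}^{(T_i^2)}$ I would apply the symmetric analysis on the row player's cumulative utility during the same interval: since the column player repeatedly plays column $i+1$, the vector $R$ is incremented by $A e_{i+1}$ each round, which by \cref{c:1} adds $4i+2$ at row $i+1$, $4i+1$ at row $n-i$, and nothing elsewhere. Combining $R_{i+1}^{(T_i^1)} \geq 0$ with $T_i^2 - T_i^1 \geq C_{i+1}^{(T_i^1)}$ yields $R_{i+1}^{(T_i^2)} \geq (4i+2) \cdot C_{i+1}^{(T_i^1)}$. The condition $R_{n-i-1}^{(T_i^2)} = 0$ is then immediate: item~\ref{thm:main_item2} of the outer induction in \cref{thm:main} gives $R_{n-i-1}^{(T_i^0)} = 0$, and during both $[T_i^0, T_i^1)$ and $[T_i^1, T_i^2)$ the column played is $i+1$, whose only non-zero entries lie in rows $i+1$ and $n-i$. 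To argue that $(i+1, n-i)$ has not been visited before round $T_i^2$ I would appeal to the spiral trajectory of \cref{fig:spiral_trajectory}: at every earlier level $\ell < i$ the visited strategy profiles lie in $\{\ell+1, n-\ell\} \times \{\ell+1, n-\ell\}$, and at level $i$ itself the only previously visited profiles are $(n-i, i+1)$ and $(i+1, i+1)$.

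The main obstacle I anticipate is verifying that \emph{no column other than $n-i$} overtakes column $i+1$ during $[T_i^1, T_i^2)$, i.e.\ that the column player actually switches to column $n-i$ rather than to some third column carrying residual utility from earlier recursive levels. Handling this requires careful bookkeeping of the utilities accumulated during levels $\ell < i$, using that the rows played so far only contribute to columns in $\{1, i+1, n-i, n-i+1\}$, together with inductive bounds showing all competing columns lag strictly behind $C_{i+1}^{(T_i^1)}$ at round $T_i^1$ and receive no further increments while the row player is locked on row $i+1$.
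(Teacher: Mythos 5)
Your proposal is correct and follows essentially the same route as the paper: lock the row player on row $i+1$ via \cref{p:strategy_switch}, track the two updated coordinates $C_{i+1}$ and $C_{n-i}$ of the column player's cumulative utility to get the switch time $\tau \geq C_{i+1}^{(T_i^1)}$, and then read off $R_{i+1}^{(T_i^2)} \geq (4i+2)\cdot C_{i+1}^{(T_i^1)}$ from the corresponding increments of $A e_{i+1}$. The obstacle you flag at the end is already resolved by your own observation that no column outside $\{i+1, n-i\}$ receives increments while column $i+1$ remains the (increasing) argmax, which is exactly how the paper implicitly handles it.
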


By \cref{p:aux2}, at round $T^{2}_i$, the agents play strategies $(i+1,n-i)$. Furthermore, the cumulative utility of row $n-i-1$ equals to $R_{n-i-1}^{(T_i^2)} = 0$. According to \cref{c:1}, the payoff vectors of the row and column agent are highlighted \cref{fig:active_3}. By combining these facts, we get the following:

% By \cref{p:aux2} at round $T^{2}_i$, both agents play strategies $(i+1,n-i)$. Hence the row and column agent respectively receive utilities, 
%  \[ (0,\ldots,0,\underbrace{4i+3}_{i+1},0,\ldots,0,\underbrace{4i+4}_{n-i-1},0,\ldots,0)
%  \quad
%  \text{and}
%  \quad
%  (0,\ldots,0,\underbrace{4i+2}_{i+1},0,\ldots,0,\underbrace{4i + 3}_{n-i},0,\ldots,0)
%  \]
% Proposition~\ref{p:aux2} ensures that the cumulative utility of row $n-i-1$ equals to $R_{n-i-1}^{T_i^2} = 0$. Using the latter we can establish \cref{p:aux3}.

% \begin{proposition}\label{p:aux3}
% There exists a round $T_i^3 > T_i^2$ at which $i)$ the strategy profile is $(n-i-1,n-i)$ for the first time $ii)$ for all rounds $T_i^2 \leq t \leq T_i^3 -1$, the strategy profile is $(i+1,n-i)$ iii) Column $n-i$ admits cumulative utility $C_{n-i}^{T_i^3} \geq (4i + 3) \cdot (R_{i+1}^{T_i^2}+1)$ $iv)$ all rows $k \in \{i+2,n-i-2\}$ admit $R_{k}^{T_i^3} =0$ and all columns $k\in \{i+2,n-i-1\}$ admit $C_{k}^{T_i^3} =0$.
% \end{proposition}

\begin{proposition} [Abridged; Full Version in \cref{prop:appendix_3}]
\label{p:aux3}
There exists a round $T^{3}_{i} > T^{2}_{i}$ at which 
the strategy profile is $(n-(i+1),n-i)$ for the first time, $R_{i+2}^{(T_i^3)} = 0$.
\end{proposition}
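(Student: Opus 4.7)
My proof plan mirrors the arguments for \cref{p:aux1,p:aux2}, combining \cref{p:strategy_switch} (successive-switch alternation) with \cref{prop:larger_smaller,p:strategy_switch_aux} (a switch must land on the unique greater non-zero entry of the current row or column) to pinpoint the precise moment when the row player moves from row $i+1$ to row $n-i-1$.

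First, since the column player was the agent who switched at round $T_i^2$ (from column $i+1$ to column $n-i$), \cref{p:strategy_switch} guarantees that the next switch must be performed by the row player, so the column player remains at column $n-i$ throughout $[T_i^2, T_i^3)$. During this interval the row player's cumulative utility vector grows by $A e_{n-i}$ per round; by \cref{c:1} this vector has exactly two non-zero entries, row $i+1$ gaining $4i+3$ and row $n-i-1$ gaining $4i+4$ per round (cf.\ \cref{fig:active_3}). Starting from $R_{n-i-1}^{(T_i^2)} = 0$ and the large value $R_{i+1}^{(T_i^2)} \geq (4i+2) \cdot C_{i+1}^{(T_i^1)}$ from \cref{p:aux2}, the per-round unit gap in favor of $n-i-1$ closes the initial deficit after a finite (linear in $R_{i+1}^{(T_i^2)}$) number of rounds. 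At that moment \cref{prop:larger_smaller} identifies row $n-i-1$ as the unique greater non-zero entry in column $n-i$, so \cref{p:strategy_switch_aux} forces the switch there, yielding the profile $(n-(i+1), n-i)$. Taking $T_i^3$ to be this first round, the profile has not appeared earlier since the already established spiral trajectory of \cref{fig:spiral_trajectory} visits distinct cells.

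For the claim $R_{i+2}^{(T_i^3)} = 0$ I would invoke the induction hypothesis on $\ell = i$: since $i+2 \in [i+1, n-i-1]$ we already have $R_{i+2}^{(T_i)} = 0$. Throughout $[T_i, T_i^3)$ the column player only plays columns in $\{i+1, n-i\}$, while by \cref{c:1} the non-zero entries of row $i+2$ sit exclusively in columns $\{i+2, n-i-1\}$; for $i \leq n/2 - 2$ these two sets are disjoint, so $R_{i+2}$ never receives any increment. The main obstacle I anticipate is the bookkeeping needed to rule out that some row $r \notin \{i+1, n-i-1\}$ becomes preferred inside $[T_i^2, T_i^3)$ and triggers a premature switch; this is handled by a parallel disjointness argument, because during $[T_i, T_i^3)$ only rows $i+1$, $n-i-1$, and $n-i$ can accumulate utility, and the large $R_{i+1}^{(T_i^2)}$ strictly dominates the earlier contribution to $R_{n-i}$ coming from \eqref{eq:recursive_theorem}, so the forced switch cannot redirect elsewhere.
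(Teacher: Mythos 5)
Your proposal is correct and follows essentially the same route as the paper: the switch-alternation argument (\cref{p:strategy_switch}) pins the column player to column $n-i$, the structure of $A e_{n-i}$ from \cref{c:1} together with the unit gap between $4i+3$ and $4i+4$ forces the row player to move to row $n-(i+1)$ after about $R_{i+1}^{(T_i^2)}$ rounds, and the claim $R_{i+2}^{(T_i^3)}=0$ follows from the disjointness of the columns played on $[T_i, T_i^3)$ from the support of row $i+2$. The only cosmetic difference is that the "first time" claim is better justified via $R_{n-i-1}^{(T_i^2)}=0$ (which shows column $n-i$ was never played before $T_i^2$) than by appeal to the spiral picture, but the needed facts are all present in your argument.
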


By \cref{p:aux3}, at round $T^{3}_{i}$, the agents play strategies $(n-i-1,n-i)$. the cumulative of column $i+2$ equals to $R_{i+2}^{(T_i^3)} = 0$. According to \cref{c:1}, the payoff vectors of the row and column agent are highlighted \cref{fig:active_4}. By combining these facts, we can establish the following.

% By \cref{p:aux3} at round $T^{3}_{i}$, both agents play strategies $(n-i-1,n-i)$. Hence the row and column agent respectively receive utilities, 
%  \[ (0,\ldots,0,\underbrace{4i+3}_{i+1},0,\ldots,0,\underbrace{4i+4}_{n-i-1},0,\ldots,0)
%  \quad
%  \text{and}
%  \quad
%  (0,\ldots,0,\underbrace{4i+5}_{i+2},0,\ldots,0,\underbrace{4i + 4}_{n-i},0,\ldots,0)
%  \]
% By Proposition~\ref{p:aux3} the cumulative of column $i+2$ equals to $R_{i+2}^{T_i^3} = 0$. Using the latter we can establish \cref{p:aux4}.

% \begin{proposition}\label{p:aux4}
% There exists a round $T_i^4 > T_i^3$ at which $i)$ the strategy profile is $(n-i-1,i+2)$ for the first time $ii)$ for all rounds $T_i^3 \leq t \leq T_i^4 -1$, the strategy profile is $(n-i-1,n-i)$ iii) Row $n-i-1$ admits cumulative utility $R_{n-i-1}^{T_i^4} \geq (4i + 4) \cdot (C_{n-i}^{T_i^3}+1)$ $iv)$ all rows $k \in \{i+2,n-i-2\}$ admit $R_{k}^{T_i^4} =0$ and all columns $k\in \{i+3,n-i-1\}$ admit $C_{k}^{T_i^4} =0$.
% \end{proposition}

\begin{proposition} [Abridged; Full Version in \cref{prop:appendix_4}]
\label{p:aux4}
There exists a round $T^{4}_i > T^{3}_i$ at which 
the strategy profile is $(n-(i+1),(i+1)+1)$ for the first time,
row $n-i-1$ admits cumulative utility $R_{n-(i+1)}^{(T_i^4)} \geq (4i + 4) \cdot C_{n-i}^{(T_i^3)}$,
all rows $k \in [ (i+1)+1,n-(i+1)-1 ]$ admit $R_{k}^{(T^{4}_i)} =0$ and all columns $k \in [(i+1) + 2,n-(i+1)] $ admit $C_{k}^{(T^{4}_i)} =0$.
\end{proposition}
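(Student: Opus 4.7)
The plan is to mirror the reasoning used in Propositions~\ref{p:aux1}--\ref{p:aux3} and analyze what happens while the strategy profile is frozen at $(n-i-1,n-i)$ immediately after round $T_i^3$. Since the row player has just switched at $T_i^3$ (from row $i+1$ to row $n-i-1$), Corollary~\ref{p:strategy_switch} forces the next strategy change to be performed by the column player. Thus for every round $t\in[T_i^3,T_i^4)$ the profile stays at $(n-i-1,n-i)$, and $T_i^4$ is by definition the round at which the column player switches. The remaining task is to identify the target column and count how many rounds elapse before the switch.

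By \cref{c:1}, row $n-i-1$ has exactly two non-zero entries, located at columns $i+2$ and $n-i$ with values $4i+5$ and $4i+4$ respectively. Therefore during the stretch $[T_i^3,T_i^4)$ only the cumulative utilities $C_{i+2}$ and $C_{n-i}$ change, accruing $4i+5$ and $4i+4$ per round. By the full version of \cref{p:aux3} we may assume $C_{i+2}^{(T_i^3)}=0$, while $C_{n-i}^{(T_i^3)}$ is strictly positive and is the (possibly tied) column maximizer; no other column's cumulative utility changes. Writing $k := T_i^4 - T_i^3$, the condition $C_{i+2}^{(T_i^3+k)} \ge C_{n-i}^{(T_i^3+k)}$ becomes $k(4i+5)\ge C_{n-i}^{(T_i^3)} + k(4i+4)$, i.e.\ $k \ge C_{n-i}^{(T_i^3)}$, irrespective of the tie-breaking rule; moreover no other column can become the argmax earlier, because every other column has $C_c^{(T_i^3)} \le C_{n-i}^{(T_i^3)}$ and remains constant. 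Hence the target column is $i+2$ and the profile at $T_i^4$ is $(n-i-1,\,i+2) = (n-(i+1),(i+1)+1)$.

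It remains to verify the four claimed properties. Claim~(1) is exactly what we just reached, and it is the \emph{first} time $(n-i-1,i+2)$ appears: indeed $C_{i+2}^{(t)}=0$ for every $t\in[1,T_i^3]$ (the initialization $(i^{(1)},j^{(1)})=(n,1)$ rules out $t=1$, and for $t\ge 2$ at least one $C_c$ is positive, so column $i+2$ can only be played if $C_{i+2}>0$, which would contradict $C_{i+2}^{(T_i^3)}=0$). Claim~(2) follows from
\[
R_{n-(i+1)}^{(T_i^4)} \;=\; R_{n-i-1}^{(T_i^3)} + k(4i+4) \;\ge\; (4i+4)\,C_{n-i}^{(T_i^3)},
\]
using $k \ge C_{n-i}^{(T_i^3)}$ and $R_{n-i-1}^{(T_i^3)}\ge 0$. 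For Claims (3) and (4) observe that throughout the whole layer-$i$ phase $[T_i^0,T_i^4)$ the only rows ever played are $\{n-i,\,i+1,\,n-i-1\}$ and the only columns ever played are $\{i+1,\,n-i\}$; applying \cref{c:1} to each of these five ``lines'' shows that none of them has a non-zero entry inside the inner block of rows $[i+2,n-i-2]$ and columns $[i+3,n-i-1]$. Consequently the cumulative utilities of rows in $[i+2,n-i-2]$ and columns in $[i+3,n-i-1]$ are not incremented after $T_i^0$; since they were zero at $T_i^0$ by the outer induction hypothesis of \cref{thm:main} for $\ell=i$, they remain zero at $T_i^4$.

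The main obstacle I anticipate is the careful line-by-line bookkeeping in the previous paragraph, namely verifying that no ``stray'' non-zero entry of $A$ leaks into the inner block during any sub-stretch of the layer-$i$ phase. This is precisely where the rigid structure in \cref{c:1}---each outer row and column carries at most two non-zero entries, all on the boundary of the current layer---is used in an essential way, and it is also where the tie-breaking independence of the estimate $k \ge C_{n-i}^{(T_i^3)}$ rests on the one-unit margin $(4i+5)-(4i+4)=1$ between the two non-zero values in row $n-i-1$.
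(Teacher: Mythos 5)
Your proof is correct and follows essentially the same route as the paper's: \cref{p:strategy_switch} pins the next switch on the column player, \cref{c:1} reduces the race to columns $i+2$ and $n-i$ with the unit gap $(4i+5)-(4i+4)$, and the resulting $k\ge C_{n-i}^{(T_i^3)}$ rounds of $(4i+4)$-increments give the utility bound. The only cosmetic difference is that you verify the zero-cumulative-utility claims by a single global bookkeeping over the whole layer-$i$ phase, whereas the paper propagates them incrementally through the zero-sets of \cref{prop:appendix_1}--\cref{prop:appendix_3}; you are also somewhat more explicit than the paper about why no third column can become the argmax and why $(n-(i+1),(i+1)+1)$ cannot have been played earlier.
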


\cref{p:aux4} establishes that there exits a round $T_{i+1} \coloneqq T_i^4$ at which the strategy profile $(n-(i+1),(i+1)+1)$ is played for the first time. Furthermore, \cref{p:aux4} confirms that all rows $k \in \{(i+1)+1, n-(i+1)-1\}$ admit $R_{k}^{(T_{i+1})} =0$ and all columns $k\in \{(i+1)+2,n-(i+1)\}$ admit $C_{k}^{(T_{i+1})} =0$. We still need to verify the the recursive relation \cref{eq:recursive_theorem}. By combining Proposition~\ref{p:aux1},~\ref{p:aux2},~\ref{p:aux3} and~\ref{p:aux4}, we can deduce
\[
R_{n-i-1}^{(T_{i+1})} \geq (4i+4)(4i+3)(4i+2)(4i+1)R_{n-i}^{(T_i)}. 
\]
\end{proof}
\section{Experiments} \label{section:experiments}
In this section, we aim to experimentally validate our findings on a $4 \times 4$ payoff matrix. Our analysis focuses on three key aspects: the round in which a new strategy switch occurs, the Nash gap throughout the game, and the empirical strategy employed by the $x$ player. We present the plot from the row player's perspective, which is identical to that of the column player.

\begin{figure}
\begin{subfigure}[t]{.3\textwidth}
\centering
\includegraphics[width=\textwidth]{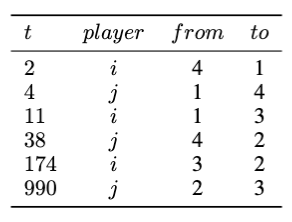}
\caption{Table of transitions}
\label{fig:final1}
\end{subfigure}
\hfill
\begin{subfigure}[t]{.3\textwidth}
\centering
\includegraphics[width=\textwidth]{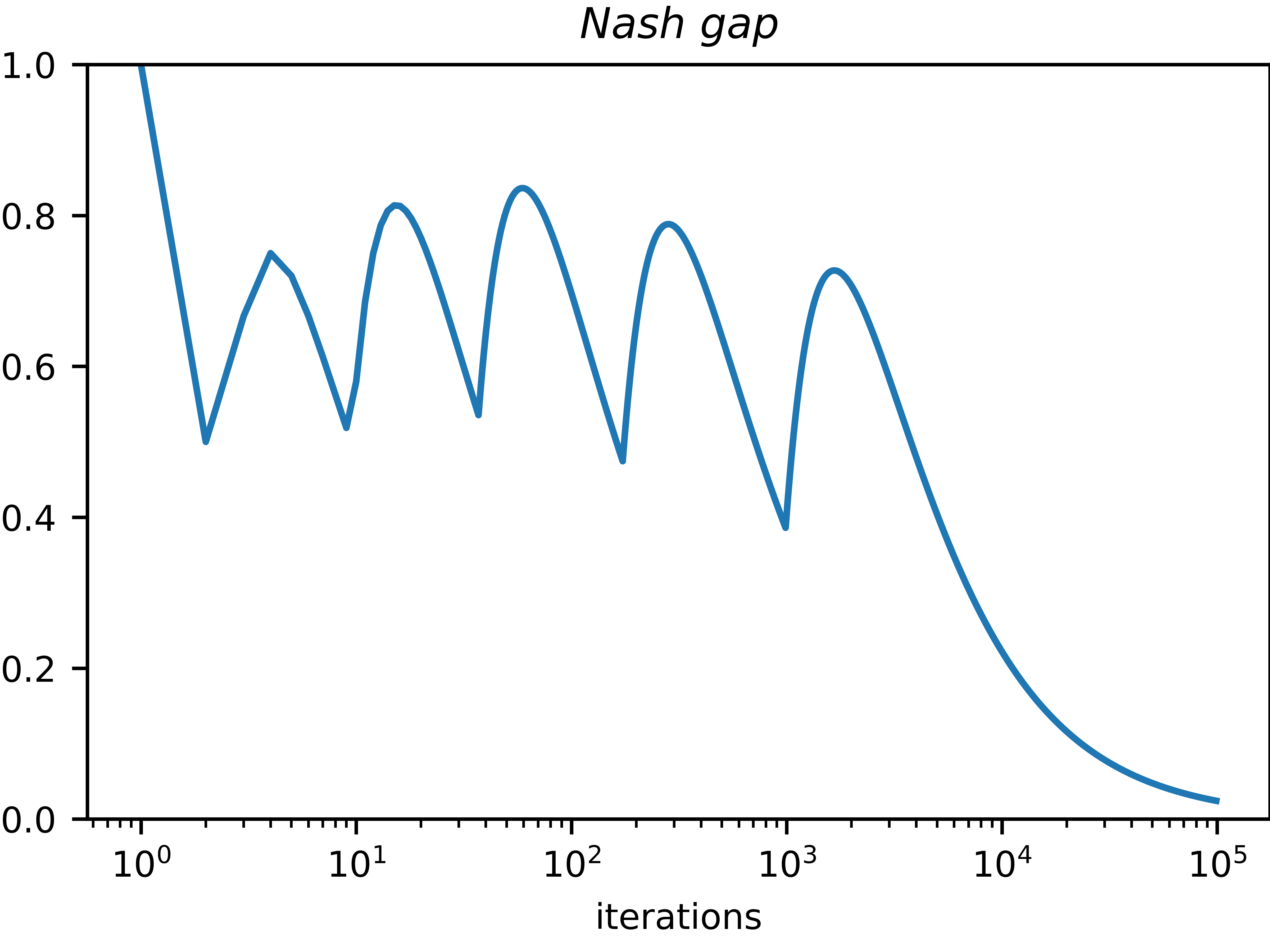}
\caption{Nash Gap}
\label{fig:final2}
\end{subfigure}
\hfill
\begin{subfigure}[t]{.3\textwidth}
\centering
\includegraphics[width=\textwidth]{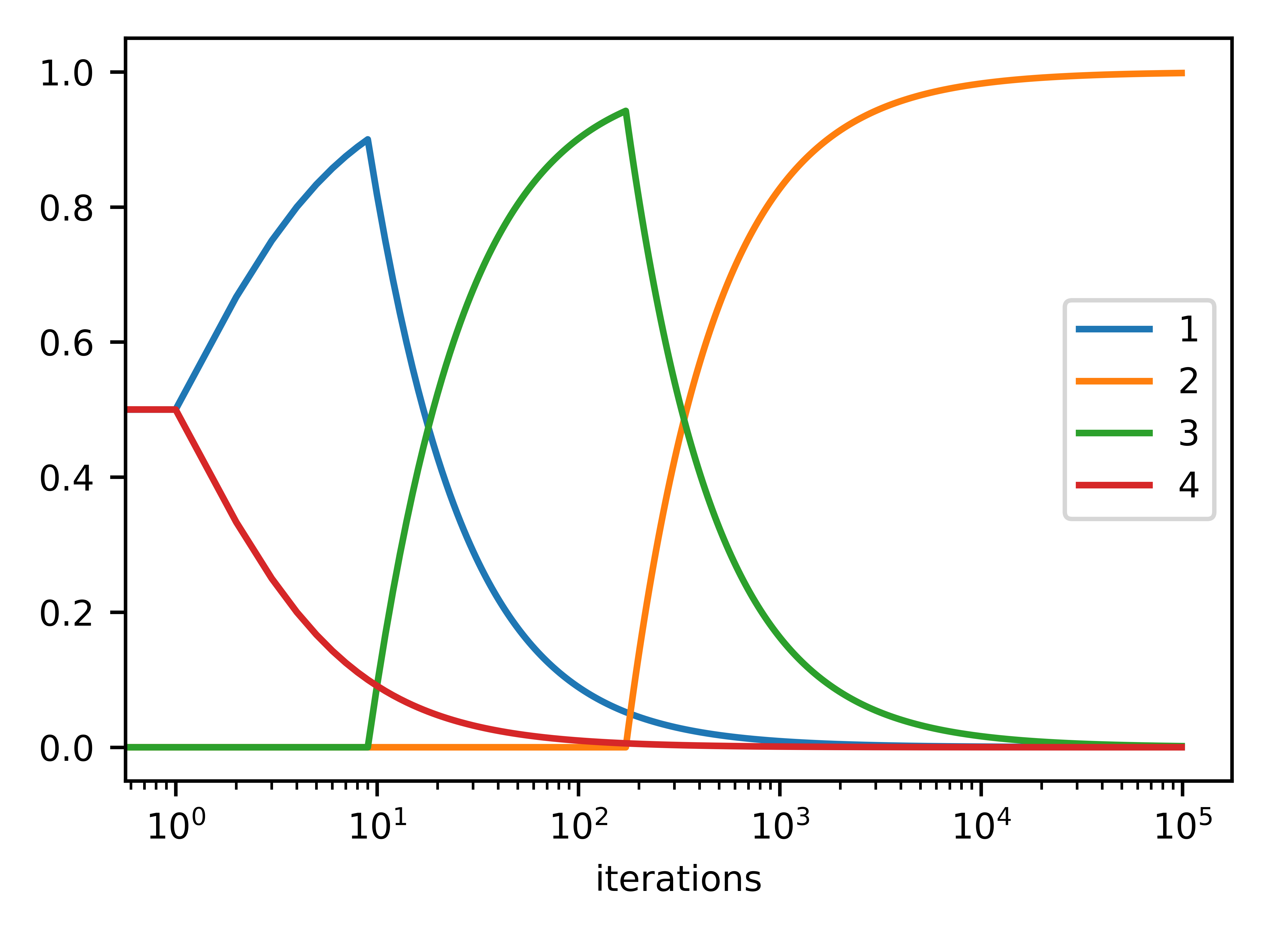}
\caption{Empirical strategy of $x$ player}
\label{fig:final3}
\end{subfigure}
\caption{The figure displays the three aspects that were analyzed in the experiments. To preserve the crucial qualitative features in both diagrams, the $x$-axis was set to a logarithmic scale.}
\end{figure}

In \cref{fig:final1}, we provide the time steps of all strategy switches. As it is expected from the analysis, fictitious play \emph{visits} all strategies, specifically in increasing order of their utility, to reach the pure Nash equilibrium. Moreover, in \cref{fig:final2} we observe a recurring pattern in the Nash gap diagram, where the gap increases after the selection of a new strategy with a higher utility and decreases until the next strategy switch. However, this pattern stops after the pure Nash equilibrium is reached, which is the unique approximate Nash equilibrium in accordance with \cref{thm:main}. 
% Finally, in \cref{fig:final3} demonstrates the expected behavior.

% Our experimental results are in accordance with our theoretical findings, providing further evidence for the soundness of our lower bound on the necessary number of iterations for fictitious play to reach a pure Nash equilibrium in the studied game.
\section{Conclusion} \label{section:conclusion}
 
In summary, this paper has provided a thorough examination of the convergence rate of fictitious play within a specific subset of potential games. Our research has yielded a recursive rule for constructing payoff matrices, demonstrating that fictitious play, regardless of the tie-breaking rule employed, may require super exponential time to reach a Nash equilibrium even in two-player identical payoff games. This contribution to the literature differs from previous studies and sheds new light on the limitations of fictitious play in the particular class of potential games. 

% \textbf{Limitations and Broader Impacts:} Our work is of theoretical nature and we do not see any limitations or negative ethical, societal implication.  

\bibliographystyle{abbrvnat}
\bibliography{references}

% \appendix
% \section{Appendix}

\section{Proof of Lemma~\ref{t:2}}
In this section, we present a comprehensive proof of \cref{t:2}. This lemma establishes a fundamental property of our recursive construction, as defined in \cref{def:matrix_construction}: the only $\epsilon^2$-approximate Nash equilibrium is located at the maximum value element. Consequently, the fictitious play dynamics fail to reach even an approximate Nash equilibrium (within the specified degree of approximation) unless the strategy $\left( \frac{n}{2}, \frac{n}{2} + 1 \right)$ has been played for a sufficient duration. Our theoretical findings are further supported by our experiments. As shown in \cref{fig:final2}, the Nash gap does not vanish before the final strategy switch occurs. To provide a precise definition of the Nash gap, we present it below.

\begin{definition}[Nash Gap for Identical Payoff]
The Nash gap at round $t$ for a two-player identical payoff games is 
\[
\left(
\max\limits_{i \in [n]} [A y^{(t)}]_i - (x^{(t)})^{\top} A y^{(t)}
\right)
+
\left(
\max\limits_{j \in [n]} [(x^{(t)})^{\top} A]_j - (x^{(t)})^{\top} A y^{(t)}
\right)
\]
\end{definition}

To establish \cref{t:2}, we employ a similar approach as in the other proofs presented in this work. Specifically, we heavily rely on the structure of our payoff matrix and utilize an induction technique to demonstrate that the majority of the probability mass must be concentrated in the maximum element. The induction argument starts from the outermost elements of the matrix, namely row $n-1$, column $1$, and row $1$, column $n-1$. By successive induction steps, we establish that until the maximum element is reached, none of the elements in those rows or columns can possess a significant probability mass. 

\begin{lemma}[Proof of \cref{t:2}]
Let $\epsilon \in (0, 1/56n^3]$ and $(x^\star,y^\star)$ an $\epsilon^2$-NE. Then for each $i \in [0,n/2-2]$,

\begin{itemize}[leftmargin=2em]
\item $x^\star_{i+1} \leq \epsilon$ and $x^\star_{n-i} \leq \epsilon$.
\item $y^{\star}_{i+1} \leq \epsilon$ and $y^\star_{n-i} \leq \epsilon$.
\end{itemize}
\end{lemma}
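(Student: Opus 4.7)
The plan is to prove all four coordinate bounds by induction on the layer index $i$ from $0$ (outer) to $n/2 - 2$. The induction hypothesis at stage $i$ asserts the lemma's bounds for every smaller layer $j < i$; in particular, it implies that the total $x^\star$ and $y^\star$ mass outside the central block $\{i+1, \ldots, n-i\}$ is at most $4 i \epsilon$.

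The core technical tool is the standard gap reformulation of the $\epsilon^2$-NE condition. Writing $M_R := \max_k [Ay^\star]_k$ and expanding $V := (x^\star)^\top A y^\star = \sum_k x^\star_k [Ay^\star]_k$, the row-player condition $V \geq M_R - \epsilon^2$ is equivalent to
\[
\sum_k x^\star_k \bigl(M_R - [Ay^\star]_k\bigr) \leq \epsilon^2,
\]
so for each row $k$ we have $x^\star_k \leq \epsilon^2 / (M_R - [Ay^\star]_k)$ whenever the gap is positive. The dual inequality bounds $y^\star_j$ via $M_C := \max_j [x^\star A]_j$. Hence each coordinate bound in the lemma reduces to exhibiting a payoff gap of at least $\epsilon$ for the relevant row or column, read off directly from Observation~\ref{c:1}.

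For the base case $i = 0$, I would use that row~$1$ strictly dominates row~$n$: $A_{1,1}=2$, $A_{1,n}=3$ while row~$n$ has only $A_{n,1}=1$, giving $[Ay^\star]_1 - [Ay^\star]_n = y^\star_1 + 3 y^\star_n$. Since $M_R \geq [Ay^\star]_1$, the gap inequality yields $x^\star_n (y^\star_1 + 3 y^\star_n) \leq \epsilon^2$. A dual comparison of columns~$n$ and~$1$ produces an analogous product inequality for $y^\star_n$. A short case split (if the second factor is $\geq \epsilon$, then the first is $\leq \epsilon$; otherwise $y^\star_1, y^\star_n \leq \epsilon$ directly) delivers the four base-case bounds. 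For the inductive step at layer $i \geq 1$, Observation~\ref{c:1} pins down row $n-i$'s nonzero entries to columns $i+1$ and $n-i+1$; the latter has $y^\star$-mass $\leq \epsilon$ by the induction hypothesis. Comparing row $n-i$ against an ``inner'' row whose maximum entry strictly exceeds $4i+1$ (for instance row $i+1$, which has max entry $4i+3$, or the central row $n/2$) then produces the required gap modulo an additive $O(n\epsilon)$ error absorbed from the already-bounded outer columns. A parallel argument, dualizing as needed, handles $x^\star_{i+1}$, $y^\star_{i+1}$, and $y^\star_{n-i}$.

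The main obstacle is the error book-keeping. At each layer the payoff gap is distorted by $O(n\epsilon)$ from outer-layer mass, and summed over $n/2$ layers the total distortion is $O(n^2 \epsilon)$. The hypothesis $\epsilon \leq 1/(56 n^3)$ is calibrated precisely so that this distortion cannot destroy the intended gap, keeping it $\geq \epsilon$ at every layer. Verifying this layer by layer---carefully choosing, at each step, a comparison row or column whose entries beat the target ones by enough to absorb the accumulated error---is the most delicate part of the argument. Once these choices are in place, the four bounds of the lemma follow mechanically from the gap inequality.
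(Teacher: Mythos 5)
Your skeleton matches the paper's: induction over layers, the deviation (gap) reformulation of the $\epsilon^2$-NE condition, and the calibration $\epsilon \leq 1/(56n^3)$ to absorb $O(n^2\epsilon)$ accumulated error. However, there is a genuine gap in how you propose to produce the payoff gap when the ``adjacent'' comparison fails. In your base case, the branch where $y^\star_1 + 3y^\star_n < \epsilon$ gives you the two $y$-bounds but \emph{not} $x^\star_n \leq \epsilon$: in that branch row $1$ has utility $2y^\star_1 + 3y^\star_n = O(\epsilon)$, so comparing row $n$ against row $1$ yields no gap, and you still owe an argument for $x^\star_n$. The same problem recurs at every layer: when $y^\star_{i+1} + y^\star_{n-i}$ is small, both row $n-i$ and your proposed comparison row $i+1$ have utility $O(n\epsilon)$, and your other suggestion (row $n/2$, or any fixed inner row ``whose maximum entry exceeds $4i+1$'') also fails, because a row's utility $[Ay^\star]_k$ depends on where the inner $y^\star$-mass sits, not on the size of the row's entries. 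Concretely, for $n=6$ with $y^\star_2 = 1$, rows $1$, $3=n/2$, and $6$ all have utility $0$, so none of your named comparison rows separates row $6$.

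The missing idea is the averaging/pigeonhole step the paper isolates as Propositions~\ref{p:row_sum} and~\ref{p:column_sum}: since every inner column carries at least one entry $\geq 1$ in some inner row, the inner rows' utilities sum to at least the inner $y^\star$-mass, which is $\geq 1 - O(n\epsilon)$ by the inductive hypothesis plus the just-derived bound $y^\star_{i+1}+y^\star_{n-i} \leq (4i+1)\epsilon$; hence \emph{some} (not a priori determined) inner row $k^\star$ has utility at least $(1-O(n\epsilon))/n \gg \epsilon$, and comparing against $k^\star$ closes the case. Without this existential step your argument does not go through. A secondary, smaller omission: the four bounds at layer $i$ are not symmetric--the paper proves them in the order $x^\star_{n-i}, y^\star_{i+1}, x^\star_{i+1}, y^\star_{n-i}$ precisely because each case feeds the next--so ``dualizing as needed'' hides a real dependency you would need to respect.
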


\begin{proof}
We will prove the claim by induction. First, assume that for all $j \in [0,i-1]$, we have:

\begin{itemize}[leftmargin=2em]
\item $x^{\star}_{j+1} \leq \epsilon$ and $x^{\star}_{n-j} \leq \epsilon$.
\item $y^{\star}_{j+1} \leq \epsilon$ and $y^{\star}_{n-j} \leq \epsilon$.
\end{itemize}

Next, we proceed to establish the inequalities $x^{\star}_{n-i}, y^{\star}_{i+1}, x^{\star}_{i+1}, y^{\star}_{n-i} \leq \epsilon$. We demonstrate these inequalities in the exact order as presented, as their proof relies on the underlying structure of the matrix. It is important to note that there are interdependencies between these inequalities, which we will address accordingly. 

% Case 1
% \underline{$x^{\star}_{n-i} \leq \epsilon$}. 
\textbf{Case 1:} $x^{\star}_{n-i} \leq \epsilon$

Let assume that $x^\star_{n-i} > \epsilon$ and we will reach a contradiction. From \cref{c:1}, we notice that the utility of row $n-i$ equals
\begin{equation} \label{eq:aux22}
[Ay^\star]_{n-i} =(4i+1)y^\star_{i+1} + 4i y^\star_{n-i+1}
\end{equation}
At the same time the utility of row $i+1$ equals 
\[
[Ay^\star]_{i+1} =(4i+2)y^\star_{i+1} + (4i+3) y^\star_{n-i} 
\]
As a result, by taking the difference on the utilities of row $i+1$ and $n-i$ we get,

\begin{eqnarray*}
[Ay^\star]_{i+1} - [Ay^\star]_{n-i} 
&=&
(4i+2) y^\star_{i+1} + (4i+3) y^\star_{n-i} - (4i+1)y^\star_{i+1} - 4i y^\star_{n-i+1}
\\
&=& 
y^\star_{i+1} + (4i+3) y^\star_{n-i} - 4i y^\star_{n-i+1}
\\
&\geq& 
y^\star_{i+1} + y^\star_{n-i} - 4i\epsilon
\end{eqnarray*}

where the last inequality follows by the fact that $y^\star_{n-i+1}\leq \epsilon$ (Inductive Hypothesis). As a result, we conclude that

\[
[Ay^\star]_{i+1} - [Ay^\star]_{n-i} 
\geq
y^\star_{i+1} + y^\star_{i+1} - 4i \epsilon
\]

In case $y^\star_{i+1} + y^\star_{n-i} \geq (4i+1) \epsilon$ then $[Ay^\star]_{i+1} - [Ay^\star]_{n-i} \geq \epsilon$. Hence if the row player puts $x^{\star}_{n-i}$ probability mass to row $i+1$ by transferring the probability mass from row $n-i$ to row $i+1$ then it increases its payoff by $x^\star_{n-i} ([Ay^\star]_{i+1} - [Ay^\star]_{n-i}) > \epsilon^2$. The latter contradicts with the assumption that $(x^\star,y^\star)$ is an $\epsilon^2$-NE. Thus, we conclude in the following two statements,

\begin{equation*}
y^{\star}_{i+1} + y^{\star}_{n-i} \leq (4i+1)\epsilon 
\quad 
\text{and} 
\quad
[Ay^{\star}]_{n-i} \leq 2(4i+1)^2 \epsilon
\end{equation*}

where the last inequality is obtained by combining the first inequality with \cref{eq:aux22}. Now consider the sum of the utilities of rows $k \in [i+2,n-i-1]$. By the construction of the payoff matrix $A$ we can easily establish the following claim.

% \begin{restatable}{proposition}{prowsum}
% \label{p:row_sum}
\begin{proposition} \label{p:row_sum}
The sum of utilities of rows $k \in [i+2,n-i-1]$ satisfies the inequality,
\[\sum_{k=i+2}^{n-i-1}[Ay^{\star}]_k \geq \sum_{k=i+2}^{n-i-1} y^{\star}_k\]    
\end{proposition}
% \end{restatable}

By \cref{p:row_sum} we are ensured that
\begin{eqnarray*}
\sum_{k=i+2}^{n-i-1}[Ay^{\star}]_k  &\geq& \sum_{k=i+2}^{n-i-1} y^{\star}_k\\
&=& 1- \sum_{k=1}^{i+1} y^{\star}_k - \sum_{k=n-i}^{n} y^{\star}_k\\
&=& 1 - (y^{\star}_{i+1} + y^{\star}_{n-i})- 
\left(
\sum_{k=1}^{i} y^{\star}_k
+
\sum_{k=n-i+1}^{n} y^{\star}_k
\right)
\\
&\geq& 1 - (4i+1) \epsilon - n\epsilon\\
&\geq& 1 - (5n+1) \epsilon\\
\end{eqnarray*}

where the second inequality follows by the fact that $y^{\star}_{i+1} + y^{\star}_{n-i} \leq (4i+1)\epsilon$ and the fact that $y^{\star}_j \leq \epsilon$ for all $k \in [1,i] \cup [n-i+1,n]$ (Inductive Hypothesis).

Due to the fact that 
\[
\sum_{k=i+2}^{n-i-1}[Ay^{\star}]_k \geq 1 - (5n+1) \epsilon 
\]
we are ensured that there exists a row $k^{\star} \in [i+2,n-i-1]$ with utility $[Ay^{\star}]_{k^{\star}} \geq \frac{1-(5n+1) \epsilon}{n}$. Now consider the difference between the utility of row $k^{\star}$ and the row $n-i$. 
\[ 
[Ay^{\star}]_{k^{\star}} - [Ay^{\star}]_{n-i} 
\geq
\frac{1-(5n+1)\epsilon}{n} - 2(4i+1)^2 \epsilon 
\geq
\frac{1-(5n+1)\epsilon}{n} - 2(4n+1)^2 \epsilon 
\geq 
\epsilon  
\] 
where the last inequality holds for $\epsilon \leq 1/56n^3$. Hence if the row player puts $x^{\star}_{n-i}$ probability mass to row $k^\star$ then it increases its payoff by $x^{\star}_{n-i}([Ay^{\star}]_{k^{\star}} - [Ay^{\star}]_{n-i}) > \epsilon^2 $. The latter contradicts with the assumption that $(x^{\star},y^{\star})$ is an $\epsilon^2$-NE. Thus we have reached to a final contradiction that $x^{\star}_{n-i} > \epsilon$.

% Case 2
% \underline{$y^{\star}_{i+1} \leq \epsilon$
\textbf{Case 2:} $y^{\star}_{i+1} \leq \epsilon$

Similar to the previous case, we assume that $y^{\star}_{i+1} > \epsilon$ and proceed to derive a contradiction. From \cref{c:1}, we notice that the utility of column $i+1$ is given by:
\[
[(x^{\star})^{\top} A]_{i+1} =(4i+2) x^{\star}_{i+1} + (4i+1) x^{\star}_{n-i} 
\]
At the same time the utility of column $n-i$ equals 
\[
[(x^{\star})^{\top} A]_{n-i} =(4i+3) x^{\star}_{i+1} + (4i+4) x^{\star}_{n-i-1} 
\]
As a result, by taking the difference on the utilities of columns $n-i$ and $i+1$ we get,

\begin{eqnarray*}
[(x^{\star})^{\top} A]_{n-i} - [(x^{\star})^{\top} A]_{i+1} &=& (4i+3) x^{\star}_{i+1} + (4i+4) x^{\star}_{n-i-1} - (4i+2) x^{\star}_{i+1} - (4i+1) x^{\star}_{n-i}\\
&=& x^{\star}_{i+1} + (4i+4)x^{\star}_{n-i-1} - (4i+1) x^{\star}_{n-i}\\
&\geq& x^{\star}_{i+1} + x^{\star}_{n-i-1} - (4i+1) \epsilon
\end{eqnarray*}
where the last inequality follows by the fact $x^{\star}_{n-i} \leq \epsilon$ (Inductive Step Case 1). As a result, we conclude that

\[
[(x^{\star})^{\top} A]_{n-i} - [(x^{\star})^{\top} A]_{i+1} 
\geq
x^{\star}_{i+1} + x^{\star}_{n-i-1} - (4i+1)\epsilon
\]

In case $x^{\star}_{i+1} + x^{\star}_{n-i-1} \geq (4i+2) \epsilon$ then $[(x^{\star})^{\top} A]_{n-i} - [(x^{\star})^{\top} A]_{i+1} \geq \epsilon$. Hence if the column player puts $y^{\star}_{i+1}$ probability mass to column $n-i$ then it increases its payoff by $y^{\star}_{i+1}([(x^{\star})^{\top} A]_{n-i} - [(x^{\star})^{\top} A]_{i+1}) > \epsilon^2$. The latter contradicts with the assumption that $(x^{\star},y^{\star})$ is an $\epsilon^2$-NE. Thus we conclude in the following two statements:

\[
x^{\star}_{i+1} + x^{\star}_{n-i-1} \leq (4i+2)\epsilon 
\quad \text{and} \quad
[(x^{\star})^{\top} A]_{i+1} \leq 2(4i+2)^2 \epsilon
\]

Now consider the sum of the utilities of columns $k \in [i+2,n-i-1]$. By the construction of the payoff matrix $A$ we can easily establish the following claim.

% \begin{restatable}{proposition}{pcolumnsum}
% \label{p:column_sum}
\begin{proposition}\label{p:column_sum}
The sum of utilities of columns $k \in [i+2,n-i-1]$ satisfies the inequality,
\[
\sum_{k=i+2}^{n-i-1}
[(x^{\star})^{\top} A]_k \geq \sum_{k=i+2}^{n-i-1} x^{\star}_k
\]    
\end{proposition}
% \end{restatable}

By \cref{p:column_sum} we are ensured that
\begin{eqnarray*}
\sum_{k=i+2}^{n-i-1}
[(x^{\star})^{\top} A]_k  
&\geq& \sum_{k=i+2}^{n-i-1} x^{\star}_k \\
&=& 1- \sum_{k=1}^{i+1} x^{\star}_j - \sum_{k=n-i}^{n} x^{\star}_j \\
&=& 1 - (x^{\star}_{i+1} + x^{\star}_{n-i}) - 
\left(
\sum_{k=1}^i x^{\star}_i 
+
\sum_{k=n-i+1}^{n} x^{\star}_j
\right)
\\
&\geq& 
1 - \big( (4i+2) \epsilon + \epsilon) -
\left(
\sum_{k=1}^i x_i 
+
\sum_{k=n-i+1}^{n} 
x^{\star}_j
\right)
\\
&\geq& 
1 - (5n+3)\epsilon\\
\end{eqnarray*}

where the second to last inequality follows by the facts: $x^{\star}_{i+1} + x^{\star}_{n-i-1} \leq (4i+2)\epsilon$ and so $x^{\star}_{i+1} 
\leq (4i+2) \epsilon $, and the Inductive Step Case 1, $x^{\star}_{n-i} \leq \epsilon$. Moreover, the last inequality holds by the Inductive Hypothesis: $x^{\star}_k \leq \epsilon$ for all $k \in [1,i] \cup [n-i+1,n]$.

Due to the fact that 
\[
\sum_{k=i+2}^{n-i-1}[(x^{\star})^{\top} A ]_k \geq 1 - (5n+3)\epsilon 
\]
we are ensured that there exists a column $k^{\star} \in [i+2,n-i-1]$ with utility $[(x^{\star})^{\top} A]_{k^{\star}} \geq \frac{1- (5n+3) \epsilon}{n}$. Now consider the difference between the utility of column $k^{\star}$ and the column $i+1$.

\[ 
[(x^{\star})^{\top} A]_{k^{\star}} - [(x^{\star})^{\top} A]_{i+1} 
\geq 
\frac{1-(5n+3) \epsilon}{n} - 2(4i+2)^2\epsilon 
\geq \frac{1-(5n+3)\epsilon}{n} - 2(4n+2)^2\epsilon 
\geq \epsilon  
\] 

where the last inequality follows by the fact that $\epsilon \leq 1/56n^3$. Hence if the column player puts $y^{\star}_{i+1}$ probability mass to column $k^{\star}$ then it increases its payoff by $y^{\star}_{i+1}([(x^{\star})^{\top} A]_{k^{\star}} - [(x^{\star})^{\top} A]_{i+1}) > \epsilon^2$. The latter contradicts with the assumption that $(x^{\star},y^{\star})$ is an $\epsilon^2$-NE. Thus we have reached to a final contradiction that $y^{\star}_{i+1} > \epsilon$.

% Case 3
% \underline{$x^{\star}_{i+1} \leq \epsilon$}:
\textbf{Case 3:} $x^{\star}_{i+1} \leq \epsilon$

Let assume that $x^{\star}_{i+1} > \epsilon$ and we will reach a contradiction. From \cref{c:1}, we notice that the utility of row $i+1$ equals, 
\[
[Ay^{\star}]_{i+1} =(4i+2) y^{\star}_{i+1} + (4i+3) y^{\star}_{n-i} 
\]

Next, we examine the row $n-(i+1)$ in the inner submatrix. If this row is not well-defined, it implies that the inductive step $j=i$ has reached the $2 \times 2$ submatrix. In this case, the proof of \cref{t:2} has already been completed. Otherwise, the utility of row $n-(i+1)$ equals 

\[
[Ay^{\star}]_{n-(i+1)} 
=
(4(i+1)+1) y^{\star}_{(i+1)+1} + (4(i+1)) y^{\star}_{n-(i+1)+1} 
\geq (4i + 4) y^{\star}_{n-i} 
\]

As a result, by taking the difference on the utilities of row $i+1$ and $n-(i+1)$ we get,

\begin{eqnarray*}
[Ay^{\star}]_{n-(i+1)} - [Ay^{\star}]_{i+1} 
&=& (4i + 4) y^{\star}_{n-i} - (4i+2)y^{\star}_{i+1} - (4i+3) y^{\star}_{n-i}\\
&=& y^{\star}_{n-i} - (4i+2)y^{\star}_{i+1}\\
&\geq& y^{\star}_{n-i} - (4i+2) \epsilon
\end{eqnarray*}

where the last inequality follows by the fact that $y^{\star}_{i+1} \leq \epsilon$ (Inductive Step Case 2). As a result, we conclude that

\[
[Ay^{\star}]_{n-(i+1)} - [Ay^{\star}]_{i+1}
\geq 
y^{\star}_{n-1} - (4i+2) \epsilon
\]

In case $y^{\star}_{n-i} \geq (4i+3) \epsilon$ then $[Ay^{\star}]_{n-(i+1)} - [Ay^{\star}]_{i+1} \geq \epsilon$. Hence, if the row player puts $x^{\star}_{i+1}$ probability mass to row $n-(i+1)$ then it increases its payoff by $x^{\star}_{i+1}([Ay^{\star}]_{n-(i+1)} - [Ay^{\star}]_{i+1}) > \epsilon^2$. The latter contradicts with the assumption that $(x^{\star},y^{\star})$ is an $\epsilon^2$-NE. Thus, we conclude the following two statements:

\[
y^{\star}_{n-i} \leq (4i+3)\epsilon 
\quad \text{and} \quad
[Ay^{\star}]_{i+1} \leq 2(4i+3)^2 \epsilon
\]

Now consider the sum of the utilities of rows $k \in [i+2,n-i-1]$.
From \cref{p:row_sum}
\begin{eqnarray*}
\sum_{k=i+2}^{n-i-1} [Ay^{\star}]_k
&=&
\sum_{k=i+2}^{n-i-1} y^{\star}_k
=
1 - \left( \sum_{k=1}^{i+1} y^{\star}_k + \sum_{k=n-i}^{n} y^{\star}_k \right)
\geq
1 - y^{\star}_{n-i} - \left( \sum_{k=1}^{i+1} y^{\star}_k + \sum_{k=n-i+1}^{n} y^{\star}_k \right)
\\
&\geq& 1 - (4i + 3) \epsilon - n \epsilon
=
1 - (5n + 3) \epsilon
\end{eqnarray*}

Thus, we are ensured that there exists a row $k^{\star} \in [i+2,n-i-1]$ with utility $[Ay^{\star}]_{k^{\star}} \geq \frac{1-(5n + 3) \epsilon}{n}$. Now consider the difference between the utility of row $k^{\star}$ and the row $n-i$. 

\[ 
[Ay^{\star}]_{k^{\star}} - [Ay^{\star}]_{i+1} 
\geq \frac{1-(5n+3)\epsilon}{n} - 2(4i+3)^2 \epsilon 
\geq \frac{1-(5n+3)\epsilon}{n} - 2(4n+3)^2 \epsilon 
\geq \epsilon  
\] 

where the last inequality follows by the fact that $\epsilon \leq 1/56n^3$. Hence if the row player puts $x^{\star}_{i+1}$ probability mass to row $k^{\star}$ then it increases its payoff by $x^{\star}_{i+1}([Ay^{\star}]_{k^{\star}} - [Ay^{\star}]_{i+1}) > \epsilon^2$. The latter contradicts with the assumption that $(x^{\star},y^{\star})$ is an $\epsilon^2$-NE. Thus we have reached to a final contradiction that $x^{\star}_{i+1} > \epsilon$.
\[\]

% Case 4
% \underline{$y^{\star}_{n-i} \leq \epsilon$}: 
\textbf{Case 4:} $y^{\star}_{n-i} \leq \epsilon$

Let assume that $y^{\star}_{n-i} > \epsilon$ and we will reach a contradiction.  
From \cref{c:1}, we notice that the utility of column $n-i$ equals, 

\[
[(x^{\star})^{\top} A]_{n-i} = (4i+3) x^{\star}_{i+1} + (4i+4) x^{\star}_{n-(i+1)} 
\]

Now, we consider the column $(i+1)+1$ in the inner submatrix. In case this column is not well-defined, it means that the inductive step $j=i$ has reached the $2 \times 2$ submatrix and so the proof of the \cref{t:2} has already been completed. Otherwise, the utility of column $(i+1)+1$ equals 

\[
[(x^{\star})^{\top} A]_{(i+1)+1} 
= 
(4(i+1) + 2) x^{\star}_{(i+1)+1} + (4(i+1) + 1) x^{\star}_{n-(i+1)+1} 
\geq 
(4i+5) x^{\star}_{n-i}
\]

As a result, by taking the difference on the utilities of columns $i+1$ and $n-(i+1)$ we get,

\begin{eqnarray*}
[(x^{\star})^{\top} A]_{(i+1)+1}  
-
[(x^{\star})^{\top} A]_{n-i}
&=& (4i+5) x^{\star}_{n-i} - (4i+3) x^{\star}_{i+1} - (4i+4) x^{\star}_{n-(i+1)} \\
&=& x^{\star}_{n-i} - (4i+3) x^{\star}_{i+1}\\
&\geq& x^{\star}_{n-i} - (4i+3) \epsilon
\end{eqnarray*}

where the last inequality follows by the fact that $x^{\star}_{i+1} \leq \epsilon$ (Inductive Step Case 3). As a result, we conclude that

\[
[(x^{\star})^{\top} A]_{(i+1)+1}  - [(x^{\star})^{\top} A]_{n-i}
\geq 
x^{\star}_{n-i} - (4i+3) \epsilon
\]

In case $x^{\star}_{n-i} \geq (4i+4)\epsilon$ then $[(x^{\star})^{\top} A]_{(i+1)+1}  - [(x^{\star})^{\top} A]_{n-i} \geq \epsilon$. Hence if the column player puts $y^{\star}_{n-i}$ probability mass to column $(i+1)+1$ then it increases its payoff by $y^{\star}_{n-i}([(x^{\star})^{\top} A]_{(i+1)+1}  - [(x^{\star})^{\top} A]_{n-i}) > \epsilon^2$. The latter contradicts with the assumption that $(x^{\star},y^{\star})$ is an $\epsilon^2$-NE. Thus we conclude in the following two statements:

\[
x^{\star}_{n-i} \leq (4i+4)\epsilon 
\quad \text{and} \quad
[(x^{\star})^T A]_{n-i} \leq 2(4i+4)^2 \epsilon
\]

Now consider the sum of the utilities of columns $k \in [i+2,n-i-1]$. 
From \cref{p:column_sum}
\begin{eqnarray*}
\sum_{k=i+2}^{n-i-1} [(x^{\star})^{\top} A]_k
&=&
\sum_{k=i+2}^{n-i-1} x^{\star}_k
=
1 - 
\left( 
\sum_{k=1}^{i+1} x^{\star}_k + \sum_{k=n-i}^{n} x^{\star}_k 
\right)
\\
&\geq&
1 - x^{\star}_{n-i} - \left( \sum_{k=1}^{i+1} x^{\star}_j + \sum_{k=n-i+1}^{n} x^{\star}_j \right)
\\
&\geq& 
1 - (4i + 4) \epsilon - n \epsilon
=
1 - (5n + 4) \epsilon
\end{eqnarray*}

Thus, we are ensured that there exists a column $k^{\star} \in [i+2,n-i-1]$ with utility $[(x^{\star})^{\top} A]_{k^{\star}} \geq \frac{1-(5n + 4) \epsilon}{n}$. Now consider the difference between the utility of column $k^{\star}$ and the column $n-i$. 

\[ 
[(x^{\star})^{\top} A]_{k^{\star}} - [(x^{\star})^{\top} A]_{n-i} 
\geq \frac{1-(5n+4)\epsilon}{n} - 2(4i+4)^2 \epsilon 
\geq \frac{1-(5n+4)\epsilon}{n} - 2(4n+4)^2 \epsilon 
\geq \epsilon  
\] 

where the last inequality follows by the fact that $\epsilon \leq 1/56n^3$. Hence if the column player puts $y^{\star}_{n-i}$ probability mass to column $k^{\star}$ then it increases its payoff by $y^{\star}_{n-i} \left( [(x^{\star})^{\top} A]_{k^{\star}} - [(x^{\star})^{\top} A]_{n-i} \right) > \epsilon^2$. The latter contradicts with the assumption that $(x^{\star}, y^{\star})$ is an $\epsilon^2$-NE. Thus we have reached to a final contradiction that $y^{\star}_{n-i} > \epsilon$.

\end{proof}

\subsection{Proof of Theorem~\ref{p:row_sum}}

\begin{proposition}
The sum of utilities of rows $k \in [i+2,n-i-1]$ admits,
\[\sum_{k=i+2}^{n-i-1}[Ay^{\star}]_k \geq \sum_{k=i+2}^{n-i-1} y^{\star}_k\]    
\end{proposition}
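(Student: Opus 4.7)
The plan is to swap the order of summation and reduce the statement to a purely structural claim about columns of $A$. Writing
\[
\sum_{k=i+2}^{n-i-1} [Ay^{\star}]_k \;=\; \sum_{k=i+2}^{n-i-1}\sum_{j=1}^n A_{kj}\,y^{\star}_j \;=\; \sum_{j=1}^n y^{\star}_j\, S_j,
\qquad
S_j \;:=\; \sum_{k=i+2}^{n-i-1} A_{kj},
\]
and using that $A$ has non-negative entries (so $S_j\ge 0$ for every $j$ unconditionally), the proposition reduces to the claim that $S_j \geq 1$ for every $j\in [i+2,n-i-1]$. Granted this, we would immediately obtain
$\sum_{j=1}^n y^{\star}_j S_j \ge \sum_{j=i+2}^{n-i-1} y^{\star}_j\cdot 1$, which is what we want.

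To prove $S_j\ge 1$ for $j\in [i+2,n-i-1]$, I would exploit the recursive construction of \cref{def:matrix_construction}: the submatrix of $A$ indexed by $[i+2,n-i-1]\times[i+2,n-i-1]$ is exactly $K^{n-2(i+1)}(4(i+1))$, whose smallest non-zero entry equals $4(i+1)+1\ge 1$. Thus any non-zero entry $A_{kj}$ with both indices in $[i+2,n-i-1]$ contributes at least $1$ to $S_j$. It then suffices to exhibit, for each column $j$ in the range, at least one such non-zero entry; this I would read directly off \cref{c:1}. Concretely, if $j=i'+1$ with $i'\in[i+1,n/2-1]$, bullet~1 of \cref{c:1} gives $A_{i'+1,\,i'+1}=4i'+2$ at row $j\in[i+2,n-i-1]$; if $j=n-i'$ with $i'\in[i+1,n/2-1]$, bullet~3 gives $A_{n-i'-1,\,n-i'}=4i'+4$ at row $j-1\in[i+2,n-i-1]$.

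The mildly delicate bookkeeping is the boundary case of the innermost pair $j=n/2,n/2+1$, where $K^2(\cdot)$ has a zero in its bottom-right corner and one of the candidate non-zero entries offered by \cref{c:1} degenerates. The hard part of the write-up will be checking that even there a surviving non-zero entry (in particular the maximum $A_{n/2,\,n/2+1}$) still lies in a row of $[i+2,n-i-1]$ and contributes at least $1$ to $S_{n/2+1}$; this is immediate as long as $i\le n/2-2$, and if $i=n/2-1$ the index range $[i+2,n-i-1]$ is empty and the inequality is vacuous. Combining the two cases yields $S_j\ge 1$ throughout $[i+2,n-i-1]$, and the proposition follows from the display above.
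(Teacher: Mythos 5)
Your proof is correct, and it is worth noting that your decomposition is cleaner and more complete than the one the paper actually gives. The paper's proof works row-by-row: it verifies $[Ay^{\star}]_{i+2}\ge y^{\star}_{i+2}$ for the single row $i+2$ (using that row's diagonal entry from \cref{c:1}) and then asserts the claim is ``immediate.'' That row-wise comparison only goes through directly for the upper-half rows $k=i'+1$, which contain a non-zero entry on the diagonal position $k$ itself; for the lower-half rows $k=n-i'$, whose non-zero entries sit in columns $i'+1$ and $n-i'+1$, one does not have $[Ay^{\star}]_{n-i'}\ge y^{\star}_{n-i'}$ term by term, and the paper leaves the required regrouping implicit. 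Your swap of the order of summation makes that regrouping explicit: reducing the statement to $S_j=\sum_{k=i+2}^{n-i-1}A_{kj}\ge 1$ for each $j$ in the range, and then exhibiting for every such column a non-zero entry of $A$ whose row index also lies in $[i+2,n-i-1]$, is exactly the bookkeeping the paper's ``immediately derived'' step needs. Your handling of the innermost pair $j=n/2,\,n/2+1$ (where one of the two candidate entries from \cref{c:1} degenerates but the maximum entry $A_{n/2,\,n/2+1}$ survives and its row stays in range whenever $i\le n/2-2$, the range being empty otherwise) is also the right boundary check. Both arguments ultimately rest on the same structural facts from \cref{c:1}, so nothing conceptually new is needed, but your column-wise version is the one I would keep.
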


\begin{proof}
By \cref{c:1} we derive the following equation.

\begin{eqnarray*}
[Ay^{\star}]_{i+2}
&=&
[Ay^{\star}]_{(i+1)+1}
=
(4i+2) y^{\star}_{(i+1)+1} + (4i+3)y^{\star}_{n-(i+1)} 
\geq
y^{\star}_{(i+1)+1}
\end{eqnarray*}

The claim can be immediately derived from the inequality given above, $\sum\limits_{k=i+2}^{n-i-1}[Ay^{\star}]_k \geq \sum\limits_{k=i+2}^{n-i-1} y^{\star}_k$.

% Substituting into $\sum_{k=i+2}^{n-i-1}[Ay^\star]_k$, we prove the claim.

% \begin{eqnarray\star}
% \sum\limits_{k=i+2}^{n-i-1} [Ay^{\star}]_k 
% &=&
% [Ay^{\star}]_{n/2}
% +
% \sum\limits_{k=(i+1)}^{n/2-1}
% \left( 
% [Ay^{\star}]_{k+1} + [Ay^{\star}]_{n-k} 
% \right) 
% \\
% &\geq&
% y^{\star}_{n/2}
% +
% \sum\limits_{k=(i+1)}^{n/2-1}
% \left( 
% y^{\star}_{k+1} + y^{\star}_{n-k} 
% \right) 
% =
% \sum_{k=i+2}^{n-i-1} y^{\star}_k
% \end{eqnarray\star}

\end{proof}

\subsection{Proof of Theorem~\ref{p:column_sum}}
\begin{proposition}
% \label{p:column_sum}
The sum of utilities of columns $k \in [i+2,n-i-1]$ admits,
\[
\sum_{k=i+2}^{n-i-1}
[(x^{\star})^{\top} A]_k \geq \sum_{k=i+2}^{n-i-1} x^{\star}_k
\]    
\end{proposition}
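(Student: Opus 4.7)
The plan is to mirror the proof of \cref{p:row_sum}, swapping the roles of rows and columns. I would start by interchanging the order of summation to obtain
\begin{equation*}
\sum_{k=i+2}^{n-i-1} [(x^{\star})^{\top} A]_k \;=\; \sum_{r=1}^{n} x^{\star}_r \, S_r,
\qquad
S_r \;:=\; \sum_{k=i+2}^{n-i-1} A_{rk},
\end{equation*}
and reduce the claim to showing $S_r \geq 1$ for every $r$ in the inner window $[i+2, n-i-1]$. For $r$ outside this window the term $x^{\star}_r S_r \geq 0$ only inflates the left-hand side and may be dropped.

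For each $r \in [i+2, n-i-1]$, row $r$ sits in some layer $\ell \in \{i+1, \ldots, n/2 - 1\}$, and by the second and fourth bullets of \cref{c:1} its two non-zero entries are explicit. A short case check shows that both non-zero entries of row $r$ lie inside $[i+2, n-i-1]$ for every $r$ \emph{except} the boundary row $r = n-i-1$; in that exceptional case one entry sits at column $n-i$ (value $4i+4$) and is dropped from $S_r$, but the surviving entry at column $i+2$ has value $4i+5 \geq 1$. For every other row, $S_r$ equals the full row sum of $A$ at row $r$, which is bounded below by the smaller of its two coefficients $4\ell \geq 4 \geq 1$.

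Combining the pointwise bound $S_r \geq 1$ with the interchange identity yields
\begin{equation*}
\sum_{r=1}^{n} x^{\star}_r S_r \;\geq\; \sum_{r=i+2}^{n-i-1} x^{\star}_r \cdot 1 \;=\; \sum_{k=i+2}^{n-i-1} x^{\star}_k,
\end{equation*}
which is the stated inequality. The main point requiring care is the boundary row $r = n-i-1$, whose row-support partly leaks outside the column window $[i+2, n-i-1]$; the explicit values from \cref{c:1} ensure that the one surviving in-window entry is already large enough to keep $S_r \geq 1$. No further properties of $(x^{\star}, y^{\star})$ enter the argument — in particular the $\epsilon^2$-NE hypothesis is not used here, reflecting the fact that this proposition is purely a structural statement about $A$.
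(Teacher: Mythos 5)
Your proof is correct. It is, however, organized differently from the paper's: you interchange the order of summation, writing the left-hand side as $\sum_r x^{\star}_r S_r$ with $S_r = \sum_{k=i+2}^{n-i-1} A_{rk}$, and reduce everything to the pointwise bound $S_r \geq 1$ for rows $r$ in the window, with an explicit check of the single boundary row $r=n-i-1$ whose support leaks out to column $n-i$. The paper instead works column by column: it displays the bound $[(x^{\star})^{\top}A]_{(i+1)+1} \geq (4i+2)x^{\star}_{(i+1)+1} + (4i+3)x^{\star}_{n-(i+1)} \geq x^{\star}_{(i+1)+1}$ for the single column $i+2$ and declares the claim immediate. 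Your route is arguably the safer one, because the naive completion of the paper's argument --- the termwise inequality $[(x^{\star})^{\top}A]_k \geq x^{\star}_k$ for every $k$ in the window --- is actually false for the ``right'' columns $k=n-\ell > n/2$, which by \cref{c:1} are supported on rows $\ell+1$ and $n-\ell-1$ rather than on row $k$ itself; the paper's one-liner only goes through if one retains both terms of the displayed bound (each left column $\ell+1$ covers $x^{\star}_{\ell+1}+x^{\star}_{n-\ell}$, and these pairs tile the window $[i+2,n-i-1]$), a step the paper leaves implicit and even obscures by dropping the $x^{\star}_{n-(i+1)}$ term in its final inequality. Your double-counting argument sidesteps this subtlety entirely, makes the boundary behavior explicit, and correctly notes that no property of $(x^{\star},y^{\star})$ beyond nonnegativity is used, so it fills in precisely the gap that the paper's ``immediately derived'' glosses over.
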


\begin{proof}
By \cref{c:1} we derive the following equation.

\begin{eqnarray*}
[(x^{\star})^{\top} A]_{i+2}
&=&
[(x^{\star})^{\top} A]_{(i+1)+1}
\geq
(4i+2) x^{\star}_{(i+1)+1} + (4i+3)x^{\star}_{n-(i+1)} 
\geq
x^{\star}_{(i+1)+1}
\end{eqnarray*}

The claim can be immediately derived from the inequality given above, $ \sum\limits_{k=i+2}^{n-i-1} [(x^{\star})^{\top} A]_k \geq \sum\limits_{k=i+2}^{n-i-1} x^{\star}_{k}$.

\end{proof}
\section{Proof of Lemma~\ref{thm:main}}

\subsection{Omitted Proofs of Section~\ref{section:lemma_proof}}

\begin{proposition}[Proof of \cref{p:strategy_switch_aux}]
Let $( i^{(t)}, j^{(t)} )$ be a strategy selected by fictitious play at round $t$, and $( i^{(t)}, j^{(t)} ) \neq ( \frac{n}{2}, \frac{n}{2} )$. Then, in a subsequent round, fictitious play will choose the strategy of greater value that is either on row $i^{(t)}$ or column $j^{(t)}$.
\end{proposition}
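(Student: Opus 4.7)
The plan is to use Proposition~\ref{prop:larger_smaller} to characterise the ``next move'' structurally and then to track how the cumulative utility vectors $R$ and $C$ evolve while $(i^{(t)},j^{(t)})$ is being played repeatedly. Since fictitious play starts at the non-zero entry $(n,1)$ and the trajectory only visits non-zero entries, we may assume $A_{i^{(t)}j^{(t)}}>0$. By Observation~\ref{c:2} the pure Nash equilibrium is the only non-zero entry with no strictly larger non-zero element in its row or column; hence the hypothesis gives a strictly larger non-zero entry $(i^{\star},j^{\star})$ in row $i^{(t)}$ or column $j^{(t)}$, and Proposition~\ref{prop:larger_smaller} guarantees it is unique. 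I treat the case $j^{\star}=j^{(t)}$; the row case is entirely symmetric.

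First I would show the column player cannot switch before the row player does. Uniqueness in Proposition~\ref{prop:larger_smaller}, combined with the fact that the larger entry lives in column $j^{(t)}$, forces every entry of row $i^{(t)}$ to be at most $A_{i^{(t)}j^{(t)}}$, with equality only at $j^{(t)}$ (since by Observation~\ref{c:1} each row carries at most two non-zero entries of distinct values). Thus the per-round increment $e_{i^{(t)}}^{\top}A$ added to $C$ is strictly maximised at coordinate $j^{(t)}$; together with $C_{j^{(t)}}^{(t)}\geq C_j^{(t)}$ for all $j$ (since the column player chose $j^{(t)}$ at round $t$), one further play of $(i^{(t)},j^{(t)})$ makes $C_{j^{(t)}}$ the strict maximum of $C$, and this persists for every subsequent round in which the row player keeps playing $i^{(t)}$, independently of any tie-breaking rule.

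Next I would show the row player is eventually forced to switch to $i^{\star}$. In column $j^{(t)}$, by the same uniqueness statement the entry $A_{i^{\star}j^{(t)}}$ is the strict maximum, so the per-round increment $Ae_{j^{(t)}}$ to $R$ is strictly largest at coordinate $i^{\star}$. Because $A$ is integer-valued the gap $R_{i^{\star}}-R_i$ grows by at least $1$ per round for every $i\neq i^{\star}$; hence after at most $R_{i^{(t)}}^{(t)}-R_{i^{\star}}^{(t)}+1$ repetitions of $(i^{(t)},j^{(t)})$ the coordinate $R_{i^{\star}}$ becomes strictly larger than every other $R_i$. At that round any tie-breaking rule must select $i^{\star}$, and by the previous paragraph the column player still plays $j^{(t)}$; therefore fictitious play plays $(i^{\star},j^{(t)})$, which is exactly the strictly greater entry promised by the claim.

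The main obstacle is the coupling between the two players: one has to rule out the scenario where the column player preempts the switch by defecting to some $j'\neq j^{(t)}$ before the row player catches up to $i^{\star}$. This is exactly what the ``at most one greater element in the combined row or column'' content of Proposition~\ref{prop:larger_smaller} supplies---a strictly greater non-zero entry can live in the row or in the column, but not in both---and this one-sided asymmetry is what makes the monotonicity argument on $R$ run cleanly to completion without interference from the column player.
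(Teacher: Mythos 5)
Your proposal is correct and follows essentially the same route as the paper's proof: invoke Proposition~\ref{prop:larger_smaller} to isolate the unique strictly larger entry, show the non-switching player has no reason to deviate because its cumulative-utility increment is maximised at its current coordinate, and show the other player's gap to the larger entry closes monotonically until the switch is forced. Your version is slightly more explicit about tie-breaking and gives a quantitative bound on when the switch occurs, but the underlying argument is the same.
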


\begin{proof}
To establish the claim, we employ the concept of a cumulative utility vector, as defined in \cref{def:cumulative_utility}. According to \cref{prop:larger_smaller}, the row $i^{(t)}$ and column $j^{(t)}$ combined have three distinct non-zero elements. Without loss of generality, let's assume that the greater element is in column $j^{(t)}$, but in a different row, denoted as $i'$.

Firstly, we observe that any subsequent strategy will involve only those three elements. This is because in the cumulative utility vector, which determines the strategy to be played in each round, only the coordinates corresponding to those elements are updated as long as the strategy $(i^{(t)}, j^{(t)})$ is being played.

Moreover, we can demonstrate that among these elements, the one with the greater value will be played next, and this transition is deterministic. This means that the row player will choose the strategy associated with the greater element. We note that this decision is implicitly affected by the strategy of column player.

Let's first exclude the case where the next strategy switch involves the column player. We initially assumed that the greatest element is in column $j^{(t)}$ but on a different row. Consequently, the only non-zero element in a different column than $j^{(t)}$ must have a smaller utility compared to the element $(i^{(t)}, j^{(t)})$. Therefore, there is no incentive to switch to a strategy with lower utility. This confirms that the column player will not opt for a different strategy, ensuring that the next strategy switch, if it occurs, will necessarily involve the row player.

Regarding the row player, we aim to prove that there will be a round where they will change to a different strategy, and consequently, to a different row. To analyze this, let's examine how the cumulative utility vector of the row player changes from round to round.

\begin{equation*}
i \coloneqq i^{(t)} \in \argmax_{i \in [n]} \bigg[ \sum_{s=1}^{t-1} A e_{j^{(s)}} \bigg]_{i}
\end{equation*}

Therefore, as long as the column player continues to use the same strategy, the row $A e_{j^{(t)}}$ will repeatedly be added to the cumulative vector, reinforcing the coordinate of row $i'$. However, this cannot happen indefinitely, as the cumulative utility vector is bounded. After a certain number of rounds, the row player will eventually choose the strategy associated with row $i'$. This proves that claim for the case of the row player. 

Similarly, if the greater element is located in the same row but on a different column, a similar argument can be made to prove that the column player will switch strategies next.

% Specifically, there exist indices $j,j' \in [n]$ (with $j \neq j'$) such that the following holds:

% $$
% j \coloneqq j^{(t-1)} \in \argmax_{j \in [n]} \bigg[ \sum_{s=1}^{t-2} e_{i^{(s)}}^{\top} A \bigg]_{j}
% \quad
% \text{and}
% \quad
% j' \coloneqq j^{(t)} \in \argmax_{j \in [n]} \bigg[ \sum_{s=1}^{t-1} e_{i^{(s)}}^{\top} A \bigg]_{j}
% $$

% Consequently, the addition of the row $e_{i^{(t-1)}}^{\top} A$ to the cumulative utility of the column player caused the $j'$-th coordinate to become the largest in value. As long as the row player continues to use the same strategy, the row $e_{i^{(t-1)}}^{\top} A$ will repeatedly be added to the cumulative vector, reinforcing the $j'$-th coordinate indefinitely. The proof for the row player follows the same logic as for the column player.
\end{proof}

\begin{corollary}[Proof of \cref{p:strategy_switch}] 
Let $t$ be a round in which a player changes their strategy. Then exactly one of the following statements is true:

\begin{enumerate}
\item If the row player changes their strategy at round $t$, i.e. $i^{(t)} \neq i^{(t-1)}$, then the column player can only make the next strategy switch.

\item If the column player changes their strategy at round $t$, i.e. $j^{(t)} \neq j^{(t-1)}$, then the row player can only make the next strategy switch.
\end{enumerate}
\end{corollary}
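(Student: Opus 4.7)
The plan is to derive the corollary directly from Proposition~\ref{p:strategy_switch_aux} together with the combinatorial bound of Proposition~\ref{prop:larger_smaller}. The underlying intuition is that each strategy switch jumps to the unique strictly larger non-zero entry available in the current row or column, and that single jump exhausts its row or column of further greater entries, so the next switch is forced to be perpendicular to the last one.

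Concretely, I would begin with the case of a row-player switch at round $t$, so that $i^{(t)} \neq i^{(t-1)}$ while $j^{(t)} = j^{(t-1)}$. By Proposition~\ref{p:strategy_switch_aux} applied at round $t-1$, the new entry $A_{i^{(t)} j^{(t)}}$ is strictly greater than $A_{i^{(t-1)} j^{(t)}}$ and is the unique greater non-zero element lying in row $i^{(t-1)}$ or column $j^{(t)}$ promised by Proposition~\ref{prop:larger_smaller}. Because that proposition allows at most one non-zero entry in row $i^{(t-1)}$ and column $j^{(t)}$ combined to exceed $A_{i^{(t-1)} j^{(t)}}$, and we have just exhibited that entry inside column $j^{(t)}$, no further non-zero entry of column $j^{(t)}$ strictly exceeds $A_{i^{(t)} j^{(t)}}$. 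Applying Proposition~\ref{p:strategy_switch_aux} again at round $t$, the subsequent switch must land on a strictly greater entry lying in row $i^{(t)}$ or column $j^{(t)}$; with column $j^{(t)}$ ruled out, the target must be an entry of row $i^{(t)}$ at a distinct column, which is precisely a column-player switch. The case in which the column player switches at round $t$ is handled by the fully symmetric argument, interchanging the roles of rows and columns.

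The only subtle point I would scrutinize is that exactly one player switches in any given switching round, so that the two cases of the corollary are mutually exclusive and exhaustive. This is expected to follow from reading Proposition~\ref{p:strategy_switch_aux} as identifying the next played strategy as a single specific element sharing either the row or the column, but not both by Proposition~\ref{prop:larger_smaller}, with the current strategy; hence only one coordinate of the strategy profile may change in any switching round. I expect this bookkeeping, rather than any substantive inequality, to be the main place that requires care.
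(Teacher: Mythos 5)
Your proposal is correct and follows essentially the same route as the paper's proof: both derive the corollary by combining Proposition~\ref{p:strategy_switch_aux} with the at-most-one-greater-element property of Proposition~\ref{prop:larger_smaller}, concluding that after a row-player switch the unique strictly greater non-zero entry cannot lie in the shared column and must therefore lie in the new row, which forces the column player to make the next switch (and symmetrically for the other case). Your extra care about the two cases being mutually exclusive is a harmless refinement the paper leaves implicit.
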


\begin{proof}
This corollary is a simple application of \cref{p:strategy_switch_aux}. We will only prove the first claim. Let $t$ be the round when the row player changes their strategy, i.e., $i^{(t)} \neq i^{(t-1)}$. According to \cref{prop:larger_smaller}, there are three non-zero elements combined in $i^{(t)}, j^{(t)}$. Since the row player changes their strategy, it follows from \cref{p:strategy_switch_aux} that the other element in column $j^{(t)}$ but not in row $i^{(t)}$ must necessarily have a smaller value than $(i^{(t)}, j^{(t)})$. Therefore, the element with the greater value must necessarily be in a different column. Hence, by applying \cref{p:strategy_switch_aux}, we conclude that the column player can only make the next strategy switch. This proves the claim.
\end{proof}

\subsection{Auxiliary Propositions for Lemma~\ref{t:2}}

In this subsection, we provide the full version of the proposition used to establish the proof \cref{t:2}.

\begin{proposition} \label{prop:appendix_1}
There exists a round $T^{1}_i > T^{0}_i$ at which

\begin{enumerate}[label=(\roman*)]
\item \label{appendix:aux1_1}
the strategy profile is $(i+1,i+1)$ for the first time,
\item \label{appendix:aux1_2}
for all rounds $t \in [ T_i^0, \, T_i^1 -1 ] $, the strategy profile is $(n-i,i+1)$,
\item \label{appendix:aux1_3}
column $i+1$ admits cumulative utility $C_{i+1}^{(T_i^1)} \geq (4i + 1) \cdot (R_{i+1}^{(T_i^0)}+1)$,
\item \label{appendix:aux1_4}
all rows $k \in [ (i+1) + 1, n-i-1 ] $ admit $R_{k}^{(T_i^1)} = 0$ and all columns $k \in [ i+2, n-i ]$ admit $C_{k}^{(T_i^1)} =0$.
\end{enumerate}

\end{proposition}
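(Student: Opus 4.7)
The plan is to track the cumulative utility vectors $R$ and $C$ round-by-round starting from $T_i^0$, leveraging \cref{c:1} together with the inductive hypothesis of \cref{thm:main}. At $T_i^0$ the played profile is $(n-i, i+1)$, so each additional round in which this profile persists adds $A e_{i+1}$ to $R$ and $e_{n-i}^{\top} A$ to $C$. By \cref{c:1}, $A e_{i+1}$ has exactly two nonzero coordinates, a $4i+2$ at row $i+1$ and a $4i+1$ at row $n-i$, and $e_{n-i}^{\top} A$ has only a $4i+1$ at column $i+1$ and a $4i$ at column $n-i+1$.

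Next I would show that the column player never switches during this phase, which is item~(ii). Since the column player chose column $i+1$ at $T_i^0$, the value $C_{i+1}^{(T_i^0)}$ is (tied for) the argmax of $C^{(T_i^0)}$. While the profile remains $(n-i, i+1)$, only $C_{i+1}$ and $C_{n-i+1}$ change, and $C_{i+1}$ grows by the strictly larger amount $4i+1$ per round. Every other column is frozen at its value at $T_i^0$, which is at most $C_{i+1}^{(T_i^0)}$. Hence column $i+1$ remains the column-player argmax throughout the phase.

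For the row player, by induction hypothesis $R_{i+1}^{(T_i^0)} = 0$, whereas $R_{n-i}^{(T_i^0)}$ is the current row-player leader. Writing $\tau = t - T_i^0$, the updates give $R_{i+1}^{(t)} = (4i+2)\tau$ and $R_{n-i}^{(t)} = R_{n-i}^{(T_i^0)} + (4i+1)\tau$, so $R_{i+1}$ catches $R_{n-i}$ at the first round satisfying $\tau \geq R_{n-i}^{(T_i^0)}$; let $T_i^1$ be this round. Because $R_{n-i}^{(T_i^0)}$ dominated every other row at $T_i^0$ and no other row is updated during the phase, the row player must choose row $i+1$ at $T_i^1$ regardless of tie-breaking, producing $(i+1, i+1)$ for the first time (item~(i)). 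The phase length satisfies $T_i^1 - T_i^0 \geq R_{n-i}^{(T_i^0)}$, and each such round adds $4i+1$ to $C_{i+1}$, which yields the lower bound claimed in item~(iii) (up to the additive unit that appears depending on the tie-breaking rule). Finally, item~(iv) is immediate, since no row $k \in [i+2, n-i-1]$ and no column $k \in [i+2, n-i]$ is touched during the phase; note in particular that $n-i+1 \notin [i+2, n-i]$.

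The principal difficulty is the no-switch argument for the column player: one must exclude a switch to any column outside $\{i+1, n-i+1\}$, which relies on the induction hypothesis that every column $c \in [i+2, n-i]$ carries zero cumulative utility at $T_i^0$, combined with the fact that the remaining ``outer'' columns were dominated by $C_{i+1}^{(T_i^0)}$. A symmetric care is needed to ensure that the row player's new choice at $T_i^1$ is necessarily row $i+1$ and not some earlier-populated row such as $1$ or $n$; this in turn relies on $R_{n-i}^{(T_i^0)}$ being the row-player maximum at $T_i^0$.
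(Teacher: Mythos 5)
Your proof is correct and follows essentially the same route as the paper's: both track the two updated coordinates of each cumulative-utility vector during the phase, deduce that only the row player can switch, and extract the phase length $T_i^1 - T_i^0 \geq R_{n-i}^{(T_i^0)}$ from the unit gap between the increments $4i+2$ and $4i+1$, which gives items (i)--(iv). The only cosmetic difference is that you establish the column player's no-switch directly by monotonicity of the argmax, whereas the paper invokes its Successive Strategy Switches statement (Corollary~\ref{p:strategy_switch}), which packages the same reasoning; likewise the additive $+1$ in item (iii) is justified in the paper by row $n-i$ having been played before $T_i^0$ rather than by tie-breaking, but this is minor bookkeeping.
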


\begin{proof} \label{proof:s}
The proposition is composed of multiple parts, each of which is proven separately. To begin with, we must establish that the new strategy profile chosen by fictitious play will be $(i+1,i+1)$.

According to the inductive hypothesis in \cref{t:main},  we know that the strategy $(n-i,i+1)$ was played at round $T^{0}_i$ for the first time, while any strategy involving a row $\in [i+1,n-(i+1)]$ has not been played until that point. Additionally, the inductive hypothesis also states that the strategy played before time $T^{0}_i$ was $(n-i,n-i+1)$.
Therefore, according to \cref{p:strategy_switch}, it is guaranteed that the next strategy switch will be initiated by the row player.

In other words, as long as the column player continues to play their current strategy, strategy $(n-i,i+1)$ will be played in every subsequent round, which establishes \cref{appendix:aux1_2}. This implies that the row player's cumulative utility will increase by $A e_{i+1}$. As per \cref{c:1}, column $i+1$ only has non-zero elements in positions $i+1$ and $n-i$.

\begin{equation} \label{eq:appendix_1}
A e_{i+1} = [0, \ldots, \underbrace{4i+2}_{i+1}, 0, \ldots, 0, \underbrace{4i+1}_{n-i}, 0, \ldots]
\end{equation}

Since strategy $i+1$ has a higher value, a strategy switch in a later time step is certain. Consequently, there will be a round $T^{1}_{i}$ in which the strategy $(i+1,i+1)$ will be played for the first time, establishing \cref{appendix:aux1_1}.

We must determine the point at which the strategy switch to $(i+1,i+1)$ will take place. According to \cref{eq:appendix_1}, the value of strategy $i+1$ is exactly one greater than the value of strategy $n-i$. From the inductive hypothesis, we know that row $i+1$ has a cumulative utility of zero, whereas row $n-i$ has a cumulative utility of $R_{i+1}^{(T_i^0)}$. Therefore, it will take precisely $R_{i+1}^{(T_i^0)}$ steps for those strategies to have equal cumulative utility values. Consequently, the strategy switch will either occur in that round or the immediate next, as the cumulative utility of row $i+1$ will have surpassed that of row $n-i$.

In order to proceed with \cref{appendix:aux1_3}, we compute the updated cumulative utility of column $(i+1)$. As shown in \cref{eq:appendix_1}, column $i+1$ has a value of $(4i+1)$ at position $n-i$. Thus, if row $n-i$ has been played for a minimum of $R_{i+1}^{(T_i^0)}$ rounds, then the cumulative utility of column $i+1$ is greater than $(4i+1) \cdot R_{i+1}^{(T_i^0)}$. Given that row $n-i$ has also been played previously (inductive hypothesis), it is reasonable to conclude that:
\[
C_{i+1}^{(T_i^1)} \geq (4i + 1) \cdot (R_{i+1}^{(T_i^0)}+1)
\]

Lastly, according to \cref{c:2}, there exists a non-zero element in row $i+1$ at position $i+1$. Due to the fact that column $i+1$ has already been played (according to \cref{appendix:aux1_2}), we can conclude that the cumulative utility of row $i+1$ must be non-zero. Combining this with inductive hypothesis, it concludes the proof of \cref{appendix:aux1_4}.
\end{proof}

\subsubsection{Proof of Theorem~\ref{p:aux2}}

\begin{proposition} \label{prop:appendix_2}
There exists a round $T^{2}_i > T^{1}_i$ at which

\begin{enumerate}[label=(\roman*)]
\item \label{appendix:aux2_1}
the strategy profile is $(i+1,n-i)$ for the first time,

\item \label{appendix:aux2_2}
for all rounds $t \in [ T_i^1, \, T_i^2 -1]$, the strategy profile is $(i+1,i+1)$,

\item \label{appendix:aux2_3}
row $i+1$ admits cumulative utility $R_{i+1}^{(T_i^2)} \geq (4i + 2) \cdot C_{i+1}^{(T_i^1)}$,

\item \label{appendix:aux2_4}
all rows $k \in [ (i+1) + 1, n-i-1 ] $ admit $R_{k}^{(T_i^2)} = 0$ and all columns $k \in [ i+2, n-(i+1) ]$ admit $C_{k}^{(T_i^2)} =0$.
\end{enumerate}
\end{proposition}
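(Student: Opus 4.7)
The plan is to mirror the argument in Proposition \ref{prop:appendix_1} almost step-by-step, with the roles of ``row player / column player'' and ``row $i+1$ / column $i+1$'' interchanged. At round $T_i^1$ it was the row player who just switched (from row $n-i$ to row $i+1$), so by Corollary \ref{p:strategy_switch} the next strategy switch must be initiated by the column player. Therefore, while no switch occurs the strategy profile $(i+1,i+1)$ keeps being played, which is exactly \ref{appendix:aux2_2}; and once a switch finally happens it will be a change of the column player's action, so the row coordinate remains $i+1$.

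Next I would identify the new column the column player will jump to. By \cref{c:1}, row $i+1$ of $A$ has exactly two non-zero entries,
\[
e_{i+1}^\top A \;=\; \bigl[\,0,\ldots,\underbrace{4i+2}_{i+1},\,0,\ldots,0,\,\underbrace{4i+3}_{n-i},\,0,\ldots,0\bigr],
\]
so each round of playing $(i+1,i+1)$ adds $4i+2$ to $C_{i+1}$ and $4i+3$ to $C_{n-i}$, while all other coordinates of the cumulative vector stay put. Since $4i+3>4i+2$, column $n-i$ is eventually certain to catch up with and surpass column $i+1$, whichever tie-breaking rule is in use. Call this round $T_i^2$; then $(i^{(T_i^2)},j^{(T_i^2)})=(i+1,n-i)$, and by the induction on the strategy-switch pattern this is indeed the first occurrence of that profile, which gives \ref{appendix:aux2_1}.

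For the quantitative bound \ref{appendix:aux2_3}, I would use the initial conditions at $T_i^1$: by Proposition \ref{prop:appendix_1}\ref{appendix:aux1_4} we have $C_{n-i}^{(T_i^1)}=0$, whereas $C_{i+1}^{(T_i^1)}\ge (4i+1)(R_{i+1}^{(T_i^0)}+1)$ by \ref{appendix:aux1_3}. The ``gap'' $C_{i+1}-C_{n-i}$ shrinks by exactly $(4i+3)-(4i+2)=1$ per round of $(i+1,i+1)$, so the column player cannot switch before the gap is closed, which takes at least $C_{i+1}^{(T_i^1)}$ rounds. Every such round adds $A_{i+1,i+1}=4i+2$ to $R_{i+1}$, and hence $R_{i+1}^{(T_i^2)} \ge (4i+2)\cdot C_{i+1}^{(T_i^1)}$.

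Finally, \ref{appendix:aux2_4} is an easy book-keeping step. Throughout $[T_i^1,T_i^2-1]$ only row $i+1$ is played, and by \cref{c:1} row $i+1$ has its non-zero entries only in columns $i+1$ and $n-i$; therefore no column $k\in[i+2,n-i-1]$ receives any update and remains at $0$. Symmetrically, only column $i+1$ is played in this interval, and column $i+1$ has its non-zero entries only in rows $i+1$ and $n-i$, so rows $k\in[i+2,n-i-1]$ keep the zero cumulative utility they inherited from Proposition~\ref{prop:appendix_1}\ref{appendix:aux1_4}. The main delicate point, as in Proposition \ref{prop:appendix_1}, is making the ``at least $C_{i+1}^{(T_i^1)}$ rounds'' argument robust to arbitrary tie-breaking; this is handled by passing to an inequality rather than an equality, losing at most one round, which does not affect the stated lower bound.
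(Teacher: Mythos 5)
Your proposal is correct and follows essentially the same route as the paper's proof: invoke Corollary~\ref{p:strategy_switch} to show the column player switches next, use Observation~\ref{c:1} to identify the payoff vector $e_{i+1}^{\top}A$ with entries $4i+2$ and $4i+3$, count at least $C_{i+1}^{(T_i^1)}$ rounds for the unit-per-round gap to close (robust to tie-breaking), and deduce the bound on $R_{i+1}^{(T_i^2)}$ and the zero cumulative utilities by the same bookkeeping. No substantive differences.
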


\begin{proof}
We repeat the same reasoning as in the proof of \cref{p:aux1}. To begin with, we must establish that the new strategy profile chosen by fictitious play will be $(i+1,n-i)$.

We can infer from both \cref{appendix:aux1_1,appendix:aux1_2} in \cref{p:aux1} that the most recent strategy switch was made by the row player. Therefore, based on \cref{p:strategy_switch}, we are certain that the next strategy switch will be initiated by the column player.

In other words, as long as the row player continues to play their current strategy, strategy $(i+1,i+1)$ will be played in every subsequent round, which establishes \cref{appendix:aux2_2}. This implies that the column player's cumulative utility will increase by $e_{i+1}^{\top} A$. As per \cref{c:1}, row $i+1$ only has non-zero elements at positions $i+1$ and $n-i$. 

\begin{equation} \label{eq:appendix_2}
e_{i+1}^{\top} A = [0, \ldots, \underbrace{4i+2}_{i+1}, 0, \ldots, 0, \underbrace{4i+3}_{n-i}, 0, \ldots]
\end{equation}

Since strategy $n-i$ has a higher value, a strategy switch in a later time step is certain. Consequently, there will be a round $T^{2}_{i}$ in which the strategy $(i+1,n-i)$ will be played for the first time, establishing \cref{appendix:aux2_1}.

We must determine the point at which the strategy switch to $(i+1,n-i)$ will take place. According to \cref{eq:appendix_2}, the value of strategy $n-i$ is exactly one greater than the value of strategy $i+1$. From the \cref{p:aux1}, we know that column $n-i$ has a cumulative utility of zero, whereas column $i+1$ has a cumulative utility of $C_{i+1}^{(T^{1}_i)}$. Therefore, it will take precisely $C_{i+1}^{(T^{1}_i)}$ steps for those strategies to have equal cumulative utility values. Consequently, the strategy switch will either occur in that round or the immediate next, as the cumulative utility of column $n-i$ will have surpassed that of column $i+1$.

In order to proceed with \cref{appendix:aux2_3}, we compute the updated cumulative utility of row $(i+1)$. As shown in \cref{eq:appendix_2}, row $i+1$ has a value of $(4i+2)$ at position $i+1$. Thus, if column $i+1$ has been played for a minimum of $C_{i+1}^{(T^{1}_i)}$ rounds, then the cumulative utility of row $i+1$ satisfies $R_{i+1}^{(T^{2}_i)} \geq (4i+2) \cdot C_{i+1}^{(T^{1}_i)}$.

Lastly, according to \cref{c:2}, there exists a non-zero element in column $n-i$ at position $i+1$. Due to the fact that row $i+1$ has already been played (according to \cref{appendix:aux2_2}), we can conclude that the cumulative utility of column $n-i$ must be non-zero. Combining this with \cref{p:aux1}, it concludes the proof of \cref{appendix:aux2_4}.
\end{proof}

\subsubsection{Proof of Theorem~\ref{p:aux3}}

\begin{proposition} \label{prop:appendix_3}
There exists a round $T^{3}_{i} > T^{2}_{i}$ at which 

\begin{enumerate}[label=(\roman*)]
\item \label{appendix:aux3_1}
the strategy profile is $(n-(i+1),n-i)$ for the first time,

\item \label{appendix:aux3_2}
for all rounds $t \in [ T^{2}_{i}, \, T^{3}_{i} -1]$, the strategy profile is $(i+1,n-i)$,

\item \label{appendix:aux3_3}
column $n-i$ admits cumulative utility $C_{n-i}^{(T_i^3)} \geq (4i + 3) \cdot R_{i+1}^{(T_i^2)}$,

\item \label{appendix:aux3_4}
all rows $k \in [(i+1)+1, n - (i+1) - 1 ]$ admit $R_{k}^{(T_i^3)} = 0$ and all columns $k \in [i + 2,n-(i+1)]$ admit $C_{k}^{(T_i^3)} =0$.
\end{enumerate}

\end{proposition}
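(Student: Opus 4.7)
The plan is to mirror exactly the template used in the proofs of \cref{p:aux1} and \cref{p:aux2}, since the current statement describes the next ``quarter turn'' of the spiral trajectory and the bookkeeping is structurally identical; only the rows and columns involved shift. First, I would invoke \cref{p:aux2} to record that at round $T_i^2$ the most recent strategy switch was performed by the column player (moving from $(i+1, i+1)$ to $(i+1, n-i)$); by \cref{p:strategy_switch}, the next switch must therefore come from the row player. Consequently, until that switch occurs the strategy profile $(i+1, n-i)$ is replayed every round, which gives statement (ii).

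Next I would read off from \cref{c:1} that the column payoff vector $A e_{n-i}$ has exactly two nonzero entries, namely $A_{i+1, n-i} = 4i + 3$ and $A_{n-(i+1), n-i} = 4i + 4$. During the interval the row player's cumulative utility therefore accumulates $4i+3$ at coordinate $i+1$ and $4i+4$ at coordinate $n-(i+1)$ per round, while every other coordinate remains fixed. Starting from $R_{i+1}^{(T_i^2)}$ (bounded below in \cref{appendix:aux2_3}) and $R_{n-(i+1)}^{(T_i^2)} = 0$ (from \cref{appendix:aux2_4}), the gap in favor of row $i+1$ shrinks by exactly one per round, so after $R_{i+1}^{(T_i^2)}$ repetitions the two cumulative values coincide and within one more round row $n-(i+1)$ overtakes row $i+1$ regardless of the tie-breaking rule. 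This identifies $T_i^3$ as the first round at which $(n-(i+1), n-i)$ is played and yields (i).

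Statement (iii) then follows by simple accounting: each of the at least $R_{i+1}^{(T_i^2)}$ rounds of playing $(i+1, n-i)$ in $[T_i^2, T_i^3)$ adds $A_{i+1, n-i} = 4i+3$ to the column player's cumulative utility at column $n-i$, so $C_{n-i}^{(T_i^3)} \geq (4i+3) \cdot R_{i+1}^{(T_i^2)}$. For (iv), I would note that only column $n-i$ is added to the row player's cumulative vector during the interval, and by \cref{c:1} its nonzero entries lie only in rows $i+1$ and $n-(i+1)$; hence all rows $k \in [i+2, n-(i+1)-1]$ inherit and preserve the zero cumulative utility guaranteed by \cref{appendix:aux2_4}. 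Symmetrically, only row $i+1$ is added to the column player's cumulative vector, whose nonzero entries (again by \cref{c:1}) lie only in columns $i+1$ and $n-i$, so columns $k \in [i+2, n-(i+1)]$ remain at zero.

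The one subtle point I expect to require care is the tie-breaking argument: the forced switch could fire either at the precise round when $R_{n-(i+1)}$ first equals $R_{i+1}$ or one step later, and the bound in (iii) must absorb both possibilities. Because the target inequality is stated with $(4i+3) \cdot R_{i+1}^{(T_i^2)}$ rather than $(4i+3) \cdot (R_{i+1}^{(T_i^2)} + 1)$, this slack is handled automatically, and the remainder of the argument is a direct adaptation of the now-familiar spiral bookkeeping.
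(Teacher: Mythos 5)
Your proposal is correct and follows essentially the same route as the paper's proof: establish via Corollary~\ref{p:strategy_switch} that the row player must switch next, use Observation~\ref{c:1} to read off the two nonzero entries $4i+3$ and $4i+4$ of $Ae_{n-i}$, count the $R_{i+1}^{(T_i^2)}$ rounds until the cumulative utilities of rows $i+1$ and $n-(i+1)$ coincide, and then do the accounting for items (iii) and (iv). Your explicit remark on why the zero cumulative utilities in the inner rows and columns are preserved is a slightly more complete rendering of the bookkeeping the paper leaves implicit, but the argument is the same.
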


\begin{proof}
We repeat the same reasoning as in the proof of \cref{p:aux1}. To begin with, we must establish that the new strategy profile chosen by fictitious play will be $(n-i-1, n-i)$.

We can infer from both \cref{appendix:aux2_1,appendix:aux2_2} in \cref{p:aux2} that the most recent strategy switch was made by the column player. Therefore, based on \cref{p:strategy_switch}, we are certain that the next strategy switch will be initiated by the row player.

In other words, as long as the column player continues to play their current strategy, strategy $(i+1,n-i)$ will be played in every subsequent round, which establishes \cref{appendix:aux3_2}. This implies that the row player's cumulative utility will increase by $A e_{n-i}$. As per \cref{c:1}, column $n-i$ only has non-zero elements at positions $i+1$ and $n-(i+1)$. 

\begin{equation} \label{eq:appendix_3}
A e_{n-i} = [0, \ldots, \underbrace{4i+3}_{i+1}, 0, \ldots, 0, \underbrace{4i+4}_{n-(i+1)}, 0, \ldots]
\end{equation}

Since strategy $n-(i+1)$ has a higher value, a strategy switch in a later time step is certain. Consequently, there will be a round $T^{3}_{i}$ in which the strategy $(n-i-1,n-i)$ will be played for the first time, establishing \cref{appendix:aux3_1}.

We must determine the point at which the strategy switch to $(n-i-1,n-i)$ will take place. According to \cref{eq:appendix_3}, the value of strategy $n-(i+1)$ is exactly one greater than the value of strategy $i+1$. From the \cref{p:aux2}, we know that row $n-(i+1)$ has a cumulative utility of zero, whereas row $i+1$ has a cumulative utility of $R_{i+1}^{(T^{2}_i)}$. Therefore, it will take precisely $R_{i+1}^{(T^{2}_i)}$ steps for those strategies to have equal cumulative utility values. Consequently, the strategy switch will either occur in that round or the immediate next, as the cumulative utility of row $n-(i+1)$ will have surpassed that of row $i+1$.

In order to proceed with \cref{appendix:aux3_3}, we compute the updated cumulative utility of column $n-i$. As shown in \cref{eq:appendix_3}, column $n-i$ has a value of $(4i+3)$ at position $i+1$. Thus, if row $i+1$ has been played for a minimum of $R_{i+1}^{(T^{2}_i)}$ rounds, then the cumulative utility of column $n-i$ satisfies $C_{n-i}^{(T^{3}_i)} \geq (4i+3) \cdot R_{i+1}^{(T^{2}_i)}$.

Lastly, according to \cref{c:2}, there exists a non-zero element in row $n-(i+1)$ at position $n-i$. Combining this with the \cref{p:aux2} and the fact that column $n-i$ has already been played (according to \cref{appendix:aux3_2}), we can conclude that the cumulative utility of row $n-(i+1)$ must be non-zero. This concludes the proof of \cref{appendix:aux3_4}.

\end{proof}

\subsubsection{Proof of Theorem~\ref{p:aux4}}

\begin{proposition} \label{prop:appendix_4}
There exists a round $T^{4}_i > T^{3}_i$ at which 

\begin{enumerate}[label=(\roman*)]
\item \label{appendix:aux4_1}
the strategy profile is $(n-(i+1),(i+1)+1)$ for the first time,

\item \label{appendix:aux4_2}
for all rounds $t \in [T_i^3, \, T_i^4 -1]$, the strategy profile is $(n-(i+1),n-i)$,

\item \label{appendix:aux4_3}
row $n-i-1$ admits cumulative utility $R_{n-(i+1)}^{(T_i^4)} \geq (4i + 4) \cdot C_{n-i}^{(T_i^3)}$,

\item \label{appendix:aux4_4}
all rows $k \in [ (i+1)+1,n-(i+1)-1 ]$ admit $R_{k}^{(T^{4}_i)} =0$ and all columns $k \in [(i+1) + 2,n-(i+1)] $ admit $C_{k}^{(T^{4}_i)} =0$.
\end{enumerate}

\end{proposition}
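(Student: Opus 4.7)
The plan is to mirror the four-step template used in the proofs of Propositions~\ref{prop:appendix_1}, \ref{prop:appendix_2}, and \ref{prop:appendix_3}, adapting the indices to the current stage of the spiral.

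First, I would invoke \cref{p:strategy_switch} to pin down which player moves next. Since items~(i)--(ii) of \cref{prop:appendix_3} tell us that the last strategy change was made by the row player (moving from $(i+1,n-i)$ to $(n-(i+1),n-i)$), the next switch must come from the column player. Consequently, as long as no switch occurs, the strategy profile stays fixed at $(n-(i+1),n-i)$, which establishes item~(ii).

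Next, I would analyze the column player's cumulative utility. Each round in which $(n-(i+1),n-i)$ is played, $C$ is updated by $e_{n-(i+1)}^{\top} A$. By \cref{c:1} applied to row $n-(i+1)$ (i.e.\ with index $i' = i+1$), this vector has precisely two non-zero entries: value $4i+5$ at column $(i+1)+1$ and value $4i+4$ at column $n-i$. Using \cref{prop:appendix_3}\ref{appendix:aux3_3},\ref{appendix:aux3_4}, we have $C_{(i+1)+1}^{(T_i^3)}=0$ and $C_{n-i}^{(T_i^3)}>0$, and since the gap between these two coordinates shrinks by exactly $1$ per round, after at most $C_{n-i}^{(T_i^3)}$ rounds the column player switches to column $(i+1)+1$. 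This gives item~(i) and defines $T_i^4$. During this same stretch, the row player's cumulative at row $n-(i+1)$ gains $A_{n-(i+1),\,n-i}=4i+4$ per round, yielding item~(iii): $R_{n-(i+1)}^{(T_i^4)} \geq (4i+4)\cdot C_{n-i}^{(T_i^3)}$.

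Finally, for the zero-utility claim (iv), I would note that only row $n-(i+1)$ and column $n-i$ are active during $[T_i^3, T_i^4-1]$. By \cref{c:1}, row $n-(i+1)$ has non-zero entries only in columns $i+2$ and $n-i$, both of which lie outside the interval $[(i+1)+2, n-(i+1)]$; similarly, column $n-i$ has non-zero entries only in rows $i+1$ and $n-(i+1)$, both outside the interval $[(i+1)+1, n-(i+1)-1]$. Combined with the inductive hypothesis in \cref{prop:appendix_3}\ref{appendix:aux3_4} that the relevant rows and columns already carry zero cumulative utility at $T_i^3$, they continue to do so at $T_i^4$.

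There is no real obstacle here beyond bookkeeping: the argument is a direct structural iteration of the previous three propositions, so the only place one can slip is in substituting $i \mapsto i+1$ consistently inside \cref{c:1} when reading off the two non-zero values in row $n-(i+1)$ and column $n-i$, and in verifying that the index ranges for the remaining zero rows and columns shrink by exactly one on each side.
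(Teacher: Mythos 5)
Your proposal is correct and follows essentially the same route as the paper's own proof: \cref{p:strategy_switch} forces the column player to switch next, the unit gap between the two non-zero entries $4i+5$ and $4i+4$ of row $n-(i+1)$ (read off from \cref{c:1}) makes the phase last exactly $C_{n-i}^{(T_i^3)}$ rounds, which yields items (i)--(iii), and the shrinking index ranges give (iv); indeed your placement of the value $4i+4$ at column $n-i$ is the correct reading of \cref{c:1}. The only nitpick is that you say the switch occurs after ``at most'' $C_{n-i}^{(T_i^3)}$ rounds, whereas item (iii) requires ``at least'' that many --- but since you also observe that the gap shrinks by exactly one per round, the intended exact count (matching the paper's ``precisely'') is clear and the conclusion stands.
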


\begin{proof}
We repeat the same reasoning as in the proof of \cref{p:aux1}. To begin with, we must establish that the new strategy profile chosen by fictitious play will be $\left( n-(i+1),(i+1) + 1 \right)$.

We can infer from both \cref{appendix:aux3_1,appendix:aux3_2} in \cref{p:aux3} that the most recent strategy switch was made by the row player. Therefore, based on \cref{p:strategy_switch}, we are certain that the next strategy switch will be initiated by the column player.

In other words, as long as the row player continues to play their current strategy, strategy $(n-(i+1),n-i)$ will be played in every subsequent round, which establishes \cref{appendix:aux4_2}. This implies that the column player's cumulative utility will increase by $e_{n-(i+1)}^{\top} A $. As per \cref{c:1}, row $n-(i+1)$ only has non-zero elements at positions $(i+1)+1$ and $n-(i+1)$. 

\begin{equation} \label{eq:appendix_4}
e_{n-(i+1)}^{\top} A = [0, \ldots, \underbrace{4i+5}_{(i+1)+1}, 0, \ldots, 0, \underbrace{4i+4}_{n-(i+1)-1}, 0, \ldots]
\end{equation}

Since strategy $(i+1)+1$ has a higher value, a strategy switch in a later time step is certain. Consequently, there will be a round $T^{4}_{i}$ in which the strategy $\left( n-(i+1),(i+1) + 1 \right)$ will be played for the first time, establishing \cref{appendix:aux4_1}.

We must determine the point at which the strategy switch $\left( n-(i+1),(i+1) + 1 \right)$ will take place. According to \cref{eq:appendix_4}, the value of strategy $(i+1) + 1$ is exactly one greater than the value of strategy $n-i$. From the \cref{p:aux3}, we know that column $(i+1) + 1$ has a cumulative utility of zero, whereas column $n-i$ has a cumulative utility of $C_{n-i}^{(T^{3}_i)}$. Therefore, it will take precisely $C_{n-i}^{(T^{3}_i)}$ steps for those strategies to have equal cumulative utility values. Consequently, the strategy switch will either occur in that round or the immediate next, as the cumulative utility of column $\left( n-(i+1),(i+1) + 1 \right)$ will have surpassed that of column $n-i$.

In order to proceed with \cref{appendix:aux4_3}, we compute the updated cumulative utility of row $n-(i+1)$. As shown in \cref{eq:appendix_4}, row $n-(i+1)$ has a value of $(4i+4)$ at position $n-i$. Thus, if column $n-i$ has been played for a minimum of $C_{n-i}^{(T^{3}_i)}$ rounds, then the cumulative utility of row $n-(i+1)$ satisfies $R_{n-(i+1)}^{(T^{4}_i)} \geq (4i+4) \cdot C_{n-i}^{(T^{3}_i)}$.

Lastly, according to \cref{c:2}, there exists a non-zero element in column $(i+1)+1$ at position $n-(i+1)$. Combining this with the \cref{p:aux3} and the fact that row $n-(i+1)$ has already been played (according to \cref{appendix:aux4_2}), we can conclude that the cumulative utility of column $(i+1)+1$ must be non-zero. This concludes the proof of \cref{appendix:aux4_4}.
\end{proof}

\end{document}